\providecommand{\keywords}[1]{{\textit{Keywords:}} #1}
\DeclareMathOperator*{\argmax}{argmax}
\newcommand{\pkg}[1]{{\fontseries{b}\selectfont #1}}
\let\proglang=\textsf
\newcommand{\ty}{\boldsymbol{y}}
\newcommand{\tv}{\boldsymbol{v}}
\newcommand{\tx}{\boldsymbol{x}}
\newcommand{\tz}{\boldsymbol{z}}
\newcommand{\tY}{\boldsymbol{Y}}
\newcommand{\tX}{\boldsymbol{X}}
\newcommand{\tZ}{\boldsymbol{Z}}
\newcommand{\tV}{\boldsymbol{V}}
\newcommand{\bA}{\boldsymbol{A}}
\newcommand{\bP}{\boldsymbol{P}}
\newcommand{\bpi}{\boldsymbol{\pi}}
\newcommand{\bvarepsilon}{\boldsymbol{\varepsilon}}
\newcommand{\blambda}{\boldsymbol{\lambda}}
\newcommand{\btheta}{\boldsymbol{\theta}}
\newcounter{thm} 
\newcounter{lm} 
\newcounter{co} 
\newcounter{assum}
\newcounter{rem} 
\newtheorem{theorem}[thm]{Theorem}
\newtheorem{lemma}[lm]{Lemma}
\newtheorem{corollary}[co]{Corollary}
\newtheorem{assumption}[assum]{Assumption}
\newtheorem{remark}[rem]{Remark}
\title{Mixture of hidden Markov models for accelerometer data}
\author{
{Marie du Roy de Chaumaray\thanks{Marie du Roy de Chaumaray\\
\hspace*{1.8em}CREST, ENSAI, France, 
E-mail: marie.du-roy-de-chaumaray@ensai.fr
}
, Matthieu Marbac\thanks{Matthieu Marbac\\
\hspace*{1.8em}CREST, ENSAI, France,
E-mail: matthieu.marbac-lourdelle@ensai.fr}
, Fabien Navarro\thanks{Fabien Navarro\\
\hspace*{1.8em}CREST, ENSAI, France,
E-mail: fabien.navarro@ensai.fr}}
}
\date{\today}
\begin{document}

\maketitle

\begin{abstract}
Motivated by the analysis of accelerometer data,  we introduce a specific finite mixture of hidden Markov models with particular characteristics that adapt well to the specific nature of this type of data. Our model allows for the computation of statistics that characterize the physical activity of a subject (\emph{e.g.}, the mean time spent at different activity levels and the probability of the transition between two activity levels) without specifying the activity levels in advance but by estimating them from the data. In addition, this approach allows the heterogeneity of the population to be taken into account and subpopulations with homogeneous physical activity behavior to be defined. 
We prove that, under mild assumptions, this model implies that the probability of misclassifying a subject decreases at an exponential decay with the length of its measurement sequence. Model identifiability is also investigated. We also report a comprehensive suite of numerical simulations to support our theoretical findings. Method is motivated by and applied to the PAT study.
\end{abstract}

\noindent\keywords Accelerometer data; Hidden Markov model; Longitudinal data; Missing data; Mixture models

\section{Introduction}

Inadequate sleep and physical inactivity affect physical and mental well-being while often exacerbating health problems. They are currently considered major risk factors for several health conditions (see, for instance \citet{kimm2005relation,taheri2004short,lee2012effect,grandner2013sleep,mctiernan2008mechanisms}).
Therefore, appropriate assessment of activity and sleep periods is essential in disciplines such as medicine and epidemiology. The use of accelerometers to evaluate physical activity---by measuring 
the acceleration of the part of the body to which they are attached---is a classic method that has become widespread in public health research. Indeed, since the introduction in 2003 of the first objective assessment of physical activity using accelerometers, as part of the National Health and Nutrition Examination Survey (NHANES), the analysis of actigraphy data has been the subject of extensive studies over the past two decades. Recently, the New York City (NYC) Department of Health and Mental Hygiene conducted the 2010-2011 Physical Activity and Transit (PAT) Survey\footnote{NYC Department of Health and Mental Hygiene. Physical Activity and Transit Survey 2010-2011; public use datasets accessed on May 10, 2019. The data are freely accessible on this page: https://www1.nyc.gov/site/doh/data/data-sets/physical-activity-and-transit-survey-public-use-data.page}, a random survey of adult New Yorkers that tracked levels of sedentary behavior and physical activity at work, at home, and for leisure. A subset of interviewees was also invited to participate in a follow-up study to measure objectively their activity level using an accelerometer. One of the objectives of this study is to describe measured physical activity levels and to compare estimates of adherence to recommended physical activity levels, as assessed by accelerometer, with those from self-reports. In contrast to NHANES accelerometer data, PAT data still seem relatively unexplored in the statistical literature.

This paper is motivated by the analysis of the accelerometer data worn by 133 individuals aged at least of 65 years  who responded PAT survey. Our objective is to propose a model adapted to the specificities of these data and study its properties. Indeed, this data set raises various challenges, such as managing the heterogeneity of the population or missing data of different natures. In order to motivate the development of a new model, we present an overview of the literature on accelerometer data analysis.

Pioneer approaches used for analyzing accelerometer data have focused on automatic detection of the sleep and wake-up periods \citep{cole1992automatic,sadeh1994activity,pollak2001accurately,van2015novel}. More recent developments are interested in the classification of different levels of activity (see \citet{yang2010review} for a review). These methods provide summary statistics like the mean time spent at different activity levels. In epidemiological studies, time spent by activity level is often used as a covariate in predictive models (see, for instance, the works of \citet{noel2010use,PALTA2015824,innerd2018using} where the links between physical activity and obesity are investigated). These statistics can be computed using deterministic cutoff levels \citep{freedson1998calibration}. However, with such an approach, the dependency in time is neglected and the cutoff levels are pre-specified and not estimated from the data.
 
Accelerometer data are characterized by a time dependency between the different measures. 
They can be analyzed by methods developed for functional data or by Hidden Markov Models (HMM).
Methods for functional data need the observed data to be converted into a function of time \citep{morris2006using,xiao2014quantifying,gruen2017use}. For instance, \citet{morris2006using} use wavelet basis for analyzing accelerometer profiles. The use of a function basis reduces the dimension of the data, and therefore the computing time.  However, these methods do  not define levels of activity and thus cannot directly provide the time spent at different activity levels.

When considering a discrete latent variable to model time dependence, HMM are appropriate for adjusting sequence data \citep{scott2005hidden,altman2007mixed,GassiatStatCo2016}. \citet{titsias2016statistical} expand the amount of information which can be obtained from HMM including a procedure to find maximum \emph{a posteriori} (MAP) of the latent sequences and to compute posterior probabilities of the latent states. HMM are used on activity data for monitoring circadian rythmicity \citep{huang2018hidden} or directly for estimating the sequence of activity levels from accelerometer data \citep{witowski2014using}. For simulated data, \cite{witowski2014using} established the superiority of different HMM models, in terms of classification error, over traditional methods based on \emph{a priori} fixed thresholds. While the simplicity of implementing threshold-based methods is an obvious advantage, they have some significant disadvantages compared to the HMM methods, particularly for real data. Indeed, the variation in counts and the resulting dispersion is large, leading to considerable misclassification of counts recorded in erroneous activity ranges. The approach of \citet{witowski2014using} assumes  homogeneity of the population and does not consider missingness within the observations. However, heterogeneity in physical activity behaviors is often present (see, for instance, \citet{geraci2018additive}) and the use of more than one HMM allows it to be taken into account (see, \emph{e.g.}, \cite{van1990mixed}).
Clustering enables the heterogeneity of the population to be addressed by grouping observations into a few homogeneous classes. Finite mixture models \citep{McLachlan:04, McNicholas2016} permit to cluster different types of data like: continuous \citep{Ban93}, integer \citep{karlis2007finite}, categorical \citep{Goo74}, mixed \citep{Hunt:11, kosmidis2015mixture}, network \citep{HoffJasa2002, HunterJasa2008,Matias2018} and sequence data \citep{wong2000mixture}. 
Recent methods use clustering for accelerometer data analysis. 
For instance, \citet{wallace2018variable} use a specific finite mixture  to identify
novel sleep phenotypes, \citet{huang2018multilevel} perform a matrix-variate-based clustering on accelerometer data while \citet{lim2019functional} use a clustering technique designed for functional data.
Mixed Hidden Markov Models (MHMM) are a combination of HMM and Generalized Linear Mixed Models \citep{van1990mixed,bartolucci2012latent}. These models consider one (or more) random effect(s) coming from either a continuous distribution \citep{altman2007mixed} or a discrete distribution \citep{bartolucci2011assessment,maruotti2011mixed}. Note that a MHMM with a single discrete random effect distribution, having a finite number of states, is a finite mixture of HMM. Such a model allows to estimate a partition among the population and to consider the population heterogeneity. The impact of the random effect can be on the measurement model or on the latent model.

This paper focuses on the analysis of PAT data with a two-fold objective: obtain summary statistics about physical activity of the subjects without pre-specifying cutoff levels and obtain a partition which groups subjects into homogeneous classes. We define  a class as homogeneous if its subjects have similar average times spent into the different activity levels and similar transition probabilities between activity levels. To achieve this goal, we introduce a specific finite mixture of HMM for analyzing accelerometer data. This model considers two latent variables: a categorical variable indicating each subject's class membership and a sequence of categorical variables indicating the subject's level of activity each time its acceleration is measured. At time $t$, the measure is independent of the class membership conditionally on the activity level (\emph{i.e.}, the latent state) and follows a zero-inflated distribution---a distribution that allows for frequent zero-valued observations. The activity level defines the parameter of this distribution. The use of zero-inflated distribution is quite common for modeling accelerometer data  \citep{ae2018missing,bai2018two}, as the acceleration is measured every second, many observations are zero. Note that the definitions of the activity levels are equal among the mixture components. This is an important point for the use of the summary statistics (\emph{e.g.}, time spent at different activity levels, probabilities of transition between levels) in a future statistical study. The model we consider is thus a specific MHMM with a finite-states random effect that only impacts the distribution of latent physical activity levels. MHMM with a finite-states random effect have few developments in the literature \citep{bartolucci2011assessment,maruotti2011mixed}, especially when the random effects only impact the latent model (and not the measurement model). We propose to theoretically study the model properties by showing that the probability of misclassifying an observation decreases at an exponential rate. In addition, since the distribution given the latent state is itself a bi-component mixture (due to the use of zero-inflated distributions), we investigate sufficient conditions for model identifiability.

In practice, the data collected often include missing intervals due to non-compliance by participants (\emph{e.g.}, if the accelerometer is removed). 
\citet{geraci2016probabilistic} propose to identify different profiles of physical activity behaviors using a principal component analysis that allows for missing values. The PAT data contain three types of missing values corresponding to periods when the accelerometer is removed, making statistical analysis more challenging. First, missingness occurs at the beginning and at the end of the measure sequences due to the installation and the removing of the accelerometer. Second, subjects are asked to remove the accelerometer when they sleep at night. Third, missing values appear during the day (\emph{e.g.}, due to a shower period, nap, ...). We remove missing values which occur at the begin and at the end of the sequence. For missingness caused by night sleep, we consider that the different sequences describing different days of observations of a subject are independent and that the starting point (\emph{e.g.}, first observed measure of the accelerometer of the day) is drawn from the stationary distribution. For missing values measured during the day, the model and the estimation algorithm can handle these data. Moreover, we propose an approximation of the distribution that avoids the computation of large powers of the transition matrices in the algorithm used for parameter inference and thus reduces computation time. Theoretical guarantees and numerical experiments show the relevance of our proposition.

The \proglang{R} package \pkg{MHMM} which implements the method introduced in this paper is available on CRAN \citep{MHMM}. It permits to analyze other accelerometer data and thus it is complementary to existing packages for MHMM. Indeed, it takes into account the specificities of accelerometer data (the class membership only impacts the transition matrices, the emission distributions are zero-inflated gamma (ZIG) distributions). Among the \proglang{R} packages implementing MHMM methods, one can cite the \proglang{R} packages \pkg{LMest} \citep{bartolucci2017lmest}  and \pkg{seqHMM} \citep{seqHMM} which focus on univariate longitudinal categorical data and the \proglang{R} package \pkg{mHMMbayes} \citep{mHMMbayes} which focuses on multivariate longitudinal categorical data.

This paper is organized as follows. Section~\ref{sec:data} presents the PAT data and the context of the study. Section~\ref{sec:model} introduces our specific mixture of HMM and its justification in the context of accelerometer data analysis. Section~\ref{sec:properties} presents the model properties (model identifiability, exponential decay of the probabilities of misclassification and a result for dealing with the non-wear periods). Section~\ref{sec:MLE} discusses the maximum likelihood inference and Section~\ref{sec:simu} illustrates the model properties on both simulated and real data. Section~\ref{sec:appli} illustrates the approach by analyzing a subset of the PAT accelerometer data. Section~\ref{sec:concl} discusses some future developments. Proofs and technical lemmas are postponed in Appendix.

\section{PAT data description} \label{sec:data}
We consider a subset of the data from the PAT survey, the subjects who participated in the follow-up study to measure objectively their activity level using an accelerometer. A detailed methodological description of the study and an analysis of the data is provided in \cite{immerwahr2012physical}. Note that the protocols for accelerometer data for the PAT survey and NHANES were identical. One of the objectives of the PAT study is to investigate the relationships  between self-reported physical activity and physical activity measured by the accelerometer in order to provide best practice recommendations for the use of self-reported data \cite{wyker2013self}. Indeed, self-reported data may be subject to overreporting. This is particularly the case among less active people, due in particular to a social desirability bias or the cognitive challenge associated with estimating the frequency and duration of daily physical activity (see, \emph{e.g.}, \cite{slootmaker2009disagreement,dyrstad2014comparison,lim2015measurement}). The results of \cite{wyker2013self} show that males tend to underreport their physical activity, while females and older adults (65 years and older) overreported (see also \cite{troiano2008physical} for a detailed study of the differences between self-reported physical activity and accelerometer measurements in NHANES 2003-2004). Consequently, the study of data measured by accelerometer for these specific populations makes it possible to determine methods for correcting estimates from self-reported data, such as stratification by gender and/or age when comparing groups.

\begin{figure}[h!t]
\centering 
\includegraphics[width=\textwidth]{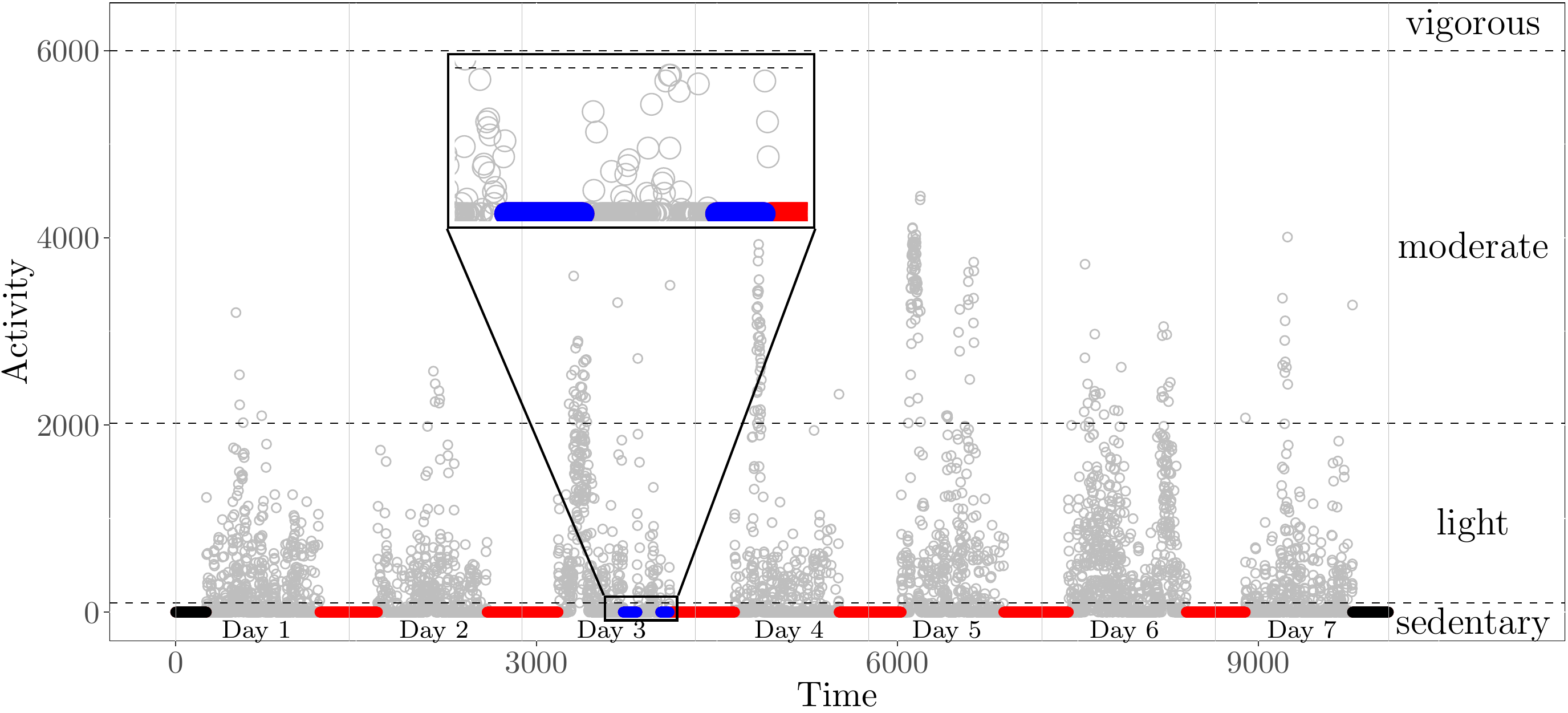}
\caption{Accelerometer data of subject Patcid:1200255 of the PAT study measured for one week (with a zoom on the afternoon of day 3): observed values (in gray), missing values during a daytime period (in blue), missing values during a period of night sleep (in red) and missing values at the start and end of the measurement period (in black). The dashed horizontal lines represent the four levels of physical activity based on the classification established by the \cite{us20082008}.}
\label{fig:example}
\end{figure}

In this work, we are particularly interested in the age category above 65 years old ($n=133$). 
We present some characteristics related to PAT data and refer to \cite{immerwahr2012physical} for a full description\footnote{Raw accelerometer data, covariates allowing the selection of the subset of the population, as well as a detailed dictionary are freely accessible here: https://www1.nyc.gov/site/doh/data/data-sets/physical-activity-and-transit-survey-public-use-data.page}. Accelerometers were worn for one week (beginning on Thursday and ending on Wednesday) and measured the activity minute-by-minute. The trajectory associated with each subject is therefore of length 10080.  In addition, a participant's day spans from 3am-3am (and not a calendar day) in order to record late night activities and transit and contains missing data sequences of variable length at the beginning and end of the measurement period (these missing data sequences were excluded from the analysis). This length varies from one subject to another, and the mean and minimum trajectory length for the population under consideration (after excluding those missing at the edges) are 9474 and 5199 respectively (with a total number of observations equal to 1259981). The model of the accelerometer used is Actigraph GT3X, it is worn on the hips (which results in the fact that certain activities, such as lifting weights or biking, cannot be measured). In addition, participants were also asked to remove it when sleeping, swimming or bathing, hence the data contains approximately 44$\%$ of missing values that appear mainly in sequence, appearing at night but also during the day. Figure~\ref{fig:example} gives an example of accelerometer data measured on one subject (\emph{i.e.}, patcid:1200255) for one week where the three types of missing data can be seen. The four levels of physical activity based on the classification established by the \cite{us20082008} in the Physical Activity Guidelines for Americans (PAGA) report is also shown in the Figure~\ref{fig:example}. Specifically, the PAT protocol for accelerometer data has established a classification according to PAGA, characterizing each minute of activity. Activity minutes with less than 100 activity counts were classified as Sedentary, minutes with 100-2019 counts were classified as Light, the class Moderate corresponds to 2020-5998 counts/minute and Vigorous 5999 and above counts/minute. A comparison between our method and this traditional threshold-based approach is provided in Section~\ref{subsec:level}.

\section{Mixture of hidden Markov models for accelerometer data} \label{sec:model}
In this section we present the proposed model and the application context for which it has been defined.

\subsection{The data}
Observed data $\ty=(\ty_1^\top,\ldots,\ty_n^\top)$ are composed of $n$ independent and identically distributed  sequences $\ty_i$. Each sequence $\ty_i=(y_{i(0)},\ldots,y_{i(T)})$ which contains the values measured by the accelerometer at times $t\in\{0,1,\ldots, T\}$ for subject $i$, with $y_{i(t)}\in\mathbb{R}^+$. Throughout the paper, index $i$ is related to the label of the subject and index $(t)$ is related to the time of measurement.

The model considers $M$ different activity levels (which are unobserved). These levels impact the distribution of the observed sequences of accelerometer data. The sequence of the hidden states $\tx_i$ indicates the activity level of subject $i$ at the different times. Thus, $\tx_i=(\tx_{i(0)},\ldots,\tx_{i(T)})\in\mathcal{X}$ and the activity level (among the $M$ possible levels) of subject $i$ at time $t$ is defined by the binary vector $\tx_{i(t)}=(x_{i(t)1},\ldots,x_{i(t)M})$ where $x_{i(t)h}=1$ if subject $i$ is at state $h$ at time $t$ and $x_{i(t)h}=0$ otherwise.

The heterogeneity (in the sense of different physical activity behaviors) between the $n$ subjects can be addressed by grouping subjects into $K$ homogeneous classes. This is achieved by clustering that assesses a partition $\tz=(\tz_1,\ldots,\tz_n)$ among the $n$ subjects based on their accelerometer measurements. Thus the vector $\tz_{i}=(z_{i1},\ldots,z_{iK})$ indicates the class membership of subject $i$, as $z_{ik}=1$ if observation $i$ belongs to class $k$ and $z_{ik}=0$ otherwise. Throughout the paper, index $k$ refers to the label of a class grouping homogeneous subjects.

Each subject $i$ is described by three random variables: one unobserved categorical variable $\tz_i$ (defining the membership of the class of homogeneous physical activity behaviors for subject $i$), one unobserved categorical longitudinal data $\tx_i$ (a univariate categorical discrete-time time series which defines the activity level of subject $i$ at each time) and one observed positive longitudinal data $\ty_i$ (a univariate positive discrete-time time series which contains the values of the accelerometer measured on subject $i$ at each time).

\subsection{Generative model}
\begin{figure}[ht!]
\begin{center}
\includegraphics[width=0.9\textwidth]{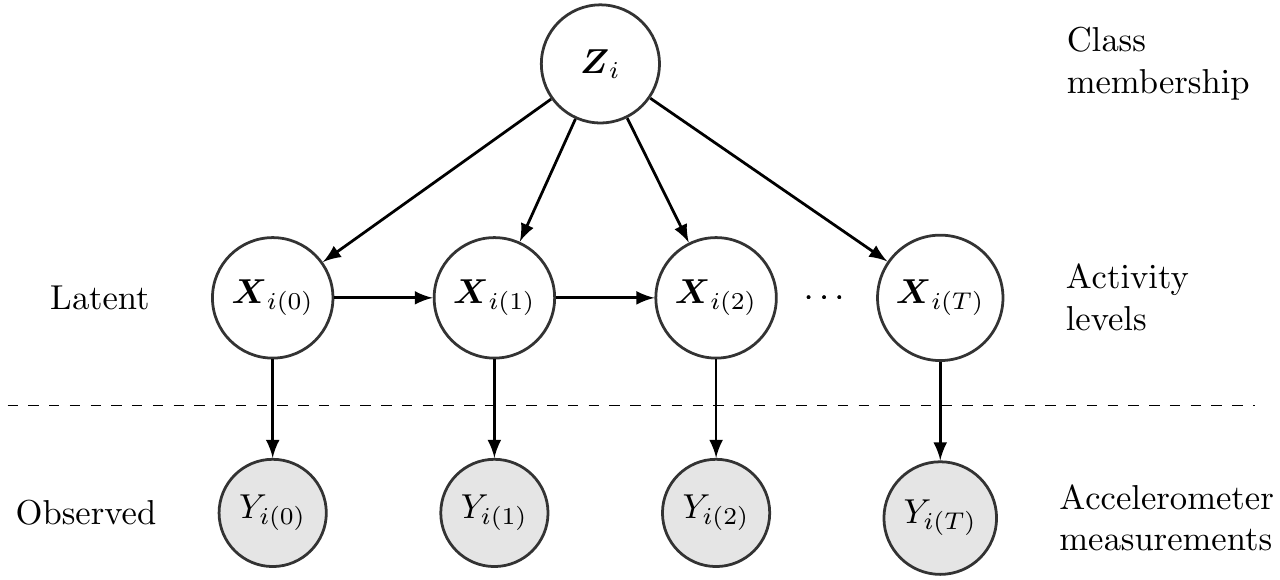}
\caption{Generative model of the specific mixture model of HMM used for the accelerometer data:  an arrow between two variables indicates dependency and an absence of arrow indicates conditional independence.}
\label{fig:generative}
\end{center}
\end{figure}
The model described below considers that the observations are independent between the subjects and identically distributed.  It is defined by the following generative model and summarized by Figure~\ref{fig:generative} (note that this figure is similar to Figure~6.2 of \citet{bartolucci2012latent}):
\begin{enumerate}
\item sample class membership $\tz_i$  from a multinomial distribution;
\item sample the sequence of activity levels  $\tx_i$ from a Markov model whose transition matrix depends on class membership;
\item sample the accelerometer measurement sequence given the activity levels (each $\tY_{i(t)}$ follows a ZIG distribution whose parameters are defined only by $\tx_{i(t)}$).
\end{enumerate}

\subsection{Finite mixture model for heterogeneity}
The sequence of accelerometer measures obtained on each subject is assumed to independently arise from a mixture of $K$ parametric distributions, so that the probability distribution function (pdf) of the sequence $\ty_i$ is
\begin{equation}
p(\ty_i; \btheta ) =\sum_{k=1}^K \delta_k \, p(\ty_i ; \bpi_{k}, \bA_k, \blambda, \bvarepsilon), \label{eq:model}
\end{equation}
where $\btheta=\{\blambda,\bvarepsilon\}\cup\{\delta_k,\bpi_k,\bA_k;k=1,\ldots,K\}$ groups the model parameters, $\delta_k$ is the proportion of components $k$ with $\delta_k>0$, $\sum_{k=1}^K \delta_k=1$, and $p(\cdot ; \bpi_{k}, \bA_k, \blambda, \bvarepsilon)$ is the pdf of component $k$ parametrized  by $(\bpi_{k}, \bA_k, \blambda, \bvarepsilon)$ defined below. Thus, $\delta_k$ is the marginal probability that a subject belongs to class $k$ (\emph{i.e.}, $\delta_k=\mathbb{P}(Z_{ik}=1)$). Moreover, $p(\cdot ; \bpi_{k}, \bA_k, \blambda, \bvarepsilon)$ defines the distribution of a sequence of values measured by the accelerometer on a subject belonging to class $k$ (\emph{i.e.},  $p(\cdot ; \bpi_{k}, \bA_k, \blambda, \bvarepsilon)$ is the pdf of $\ty_i$ given $Z_{ik}=1$).

\subsection{Hidden Markov model for activity levels} 
The model assumes that the distribution of the hidden state sequence depends on the class membership, and that the distribution of activity measurements depends on the state at time $t$ but not on the component membership given the state (\emph{i.e.}, $\tX_i \not \perp \tZ_i$,  $Y_{i(t)} \not \perp \tX_{i(t)}$ and $Y_{i(t)}\perp \tZ_i \mid\tX_{i(t)}$). It is crucial that the distribution of $Y_{i(t)}$ given $\tX_{i(t)}$ is independent to $\tZ_i$. Indeed, each activity level is defined by the distribution of $Y_{i(t)}$ given the state. Therefore, to extract summary statistics on the whole population (as the average time spent by level of activity) the definition of the activity levels (and the distribution of $y_{i(t)}$ given the state) must to be the same among the mixture components.

The pdf of $\ty_i$ for components $k$ (\emph{i.e.}, given $Z_{ik}=1$) is
\begin{equation} \label{eq:compo}
p(\ty_i; \bpi_{k}, \bA_k, \blambda, \bvarepsilon) = \sum_{\tx_i \in \mathcal{X}} p(\tx_i; \bpi_{k}, \bA_k) \, p(\ty_i \mid \tx_i; \blambda, \bvarepsilon).
\end{equation}
The Markov assumption implies that
\begin{equation*}
p(\tx_i; \bpi_{k}, \bA_k) = \prod_{h=1}^\ell \pi_{kh}^{x_{i(0)h}} \prod_{t=1}^T\prod_{h=1}^M\prod_{\ell=1}^M (\bA_k[h,\ell])^{x_{i(t-1)h}x_{i(t)\ell}}, 
\end{equation*}
where  $\bpi_{k}=(\pi_{k1},\ldots,\pi_{kM})$ defines the initial probabilities so that $\pi_{kh}=\mathbb{P}(X_{i(1)h}=1 \mid Z_{ik}=1)$, $\bA_k$ is the  transition matrix so that $\bA_k[h,\ell]=\mathbb{P}(X_{i(t)\ell}=1 \mid X_{i(t-1)h}=1, Z_{ik}=1)$. 
Finally, we have 
\begin{equation*}
p(\ty_i \mid \tx_i; \blambda, \bvarepsilon) = \prod_{t=0}^T \prod_{h=1}^M g(y_{i(t)} ; \blambda_h, \varepsilon_h)^{x_{i(t)h}},
\end{equation*}
where $g(\cdot;\blambda_h, \varepsilon_h)$ is the pdf of a zero-inflated distribution defined by
\begin{equation*}
g(y_{i(t)} ; \blambda_h, \varepsilon_h) = (1-\varepsilon_h) g_c(y_{i(t)} ; \blambda_h) + \varepsilon_h \mathbf{1}_{\{y_{i(t)}=0\}},
\end{equation*}
where $g_c(\cdot ; \blambda_h)$ is the density of a distribution defined on a positive space and parametrized by $\blambda_h$. 
The choice of considering zero-inflated distributions is motivated by the large number of zero in the accelerometer data (see Figure~\ref{fig:example}). 
For the application of Section~\ref{sec:appli}, we use a gamma distribution of $g_c(\cdot ; \blambda_h)$. However, model properties and inference are discussed for a large family of densities $g_c(\cdot ; \blambda_h)$. 

\section{Model properties} \label{sec:properties}
In this section, we present the properties of the mixture of parametric HMM. We start with a discussion of three assumptions. Then, model identifiability is proved. It is shown that the probability of making an error in the partition estimation exponentially decreases with $T$, when the model parameters are known. Finally, the analysis of missing data is discussed. 
\subsection{Assumptions}
\begin{assumption}\label{ass:chaine}
For each component $k$, the Markov chain is irreducible. Moreover, we assume that the sequence is observed at its stationary distribution (\emph{i.e.}, $\bpi_k$ is the stationary distribution so $\bpi_k^\top\bA_k=\bpi_k^\top$). Therefore, there exists $0 \leq \nu<1$ such that
\begin{equation*}
\forall k\in\{1,\ldots,K\}, \; \nu_2(\bA_k)\leq\nu,
\end{equation*}
where $\nu_2(\bA_k)$ is the second-largest eigenvalue of $\bA_k$. Finally, we denote by $\bar\nu_2(\bA_k)=\max(0,\nu_2(\bA_k))$.
\end{assumption}

\begin{assumption}\label{ass:order}
The hidden states define different distributions for the observed sequence. Therefore, for $h\in\{1,\ldots,M\}$, $h'\in\{1,\ldots,M\}\setminus \{h\}$, we have $\blambda_h \neq \blambda_{h'}$. Moreover, the parametric family of distributions defining $g_c(\cdot ; \blambda_1),\ldots,g_c(\cdot ; \blambda_M)$ permits to consider an ordering such that for a fix value $\rho\in\mathbb{R}^+\setminus \{0\}$, we have
\begin{equation*}
\forall h \in\{1,\ldots, M-1\},\; \lim_{y_{i(1)} \to \rho} \frac{g_c(y_{i(1)} ; \blambda_{h+1})}{g_c(y_{i(1)} ; \blambda_h)} = 0.
\end{equation*}
\end{assumption}

\begin{assumption} \label{ass:kullback}
The transition probabilities are different over the mixture components and are not zero. Therefore, for $k\in\{1,\ldots,K\}$, $k'\in\{1,\ldots,K\} \setminus \{k\}$, we have $\forall (h,\ell)$, $\bA_k[h,\ell]\neq  \bA_{k'}[h,\ell]$. Moreover, there exists $\zeta>0$ such that
\begin{equation*}
\forall k\in\{1,\ldots,K\}, \; \forall k'\in\{1,\ldots,K\}\setminus\{k\},\;
\sum_{h=1}^M \sum_{\ell = 1}^M \pi_{kh} \log \frac{\bA_k[h,\ell]}{\bA_{k'}[h,\ell]} > \zeta.
\end{equation*}
Finally, without loss of generality, we assume that $A_k[1,1]>A_{k+1}[1,1]$.
\end{assumption}

Assumption~\ref{ass:chaine} states that the state at time 1 is drawn from the stationary distribution of the component that the observation belongs to. To obtain the model identifiability we do not need the assumption that the stationary distribution is different over the mixture components. As a result, two components having the same stationary distribution but different transition matrices can be considered. Assumption~\ref{ass:order} and Assumption~\ref{ass:kullback} are required to obtain the model identifiability. Assumption~\ref{ass:kullback} can be interpreted as the Kullback-Leibler divergence between the distribution of the states under component $k$ and their distribution under component $k'$. This constraint is required for model identifiability because it is related to the definition of the classes. Consequently, the matrices of the transition probability must be different among components.

\subsection{Identifiability}
Model identifiability is crucial for interpreting the estimators of the latent variables and of the parameters. 
It has been studied for some mixture models \citep{teicher1963,teicher1967,All09,CelisseEJS2012} and HMM \citep{GassiatStatCo2016}, but not for the mixture of HMM.
Generic identifiability (up to switching of the components and of the states) of the model defined in \eqref{eq:model} implies that 
\begin{equation*}
\forall \ty_i,\; p(\ty_i;\btheta) = p(\ty_i;\tilde \btheta) \Rightarrow \btheta = \tilde \btheta.
\end{equation*}
The following theorem states this property. 

\begin{theorem}\label{thm:ident}
If Assumptions \ref{ass:chaine}, \ref{ass:order} and \ref{ass:kullback} hold, then the model defined in \eqref{eq:model} is generically identifiable (up to switching of the components and of the states) if $T>2K$.
\end{theorem}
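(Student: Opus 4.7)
The plan is to prove identifiability in three stages, peeling off the latent layers one by one: the emission distributions, then the mixture coefficients along a single reference state trajectory, and finally all remaining transition entries. For Stage~1, the marginal density of $Y_{i(0)}$ is the mixture of zero-inflated distributions $p(y;\btheta) = \sum_{h=1}^M w_h\, g(y;\blambda_h,\varepsilon_h)$ with $w_h = \sum_k \delta_k \pi_{kh}$. Restricted to $y>0$ it equals $\sum_h w_h(1-\varepsilon_h) g_c(y;\blambda_h)$; Assumption~\ref{ass:order} orders the $g_c(\cdot;\blambda_h)$ at $\rho$, so a sequential peeling argument — take $\lim_{y\to\rho} p(y)/g_c(y;\blambda_1)$ to read off $\blambda_1$ and $w_1(1-\varepsilon_1)$, subtract, iterate — identifies every $\blambda_h$ and every product $w_h(1-\varepsilon_h)$. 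The zero-mass equation combined with residual-and-limit relations from bivariate densities such as $p(y_0, y_1 = 0)$ then separates each $\varepsilon_h$ from the corresponding $w_h$.

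The central Stage~2 uses the joint density over a length-$(s+1)$ prefix. Sending every coordinate to $\rho$ restricts, by Assumption~\ref{ass:order}, the sum over hidden trajectories to the constant sequence $(1,\ldots,1)$, so that
\begin{equation*}
\lim_{y_0,\ldots,y_s \to \rho} \frac{p(y_0,\ldots,y_s;\btheta)}{\prod_{t=0}^s g_c(y_t;\blambda_1)} = (1-\varepsilon_1)^{s+1} \sum_{k=1}^K \delta_k \pi_{k1}\, A_k[1,1]^s.
\end{equation*}
Since $\blambda_1$ and $\varepsilon_1$ are already known, the sequence $F_s := \sum_k \delta_k \pi_{k1}\, A_k[1,1]^s$ is identifiable for $s=0,1,\ldots,T$ (by marginalising out trailing coordinates). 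Under Assumption~\ref{ass:kullback} the $K$ bases $A_k[1,1]$ are pairwise distinct, so $F_s$ is a linear combination of $K$ distinct geometric sequences; Prony's method — equivalently, the rank of the Hankel matrix $(F_{i+j})_{0\leq i,j\leq K}$ — uniquely recovers the bases $A_k[1,1]$ and the coefficients $\delta_k \pi_{k1}$ as soon as $2K+1$ consecutive values of $F_s$ are available. This is precisely the role of the hypothesis $T>2K$, and the ordering $A_k[1,1]>A_{k+1}[1,1]$ in Assumption~\ref{ass:kullback} fixes the component labels, killing the across-component permutation ambiguity.

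For Stage~3, linear independence of $\{g_c(\cdot;\blambda_h)\}_{h=1}^M$ makes all coefficients of the joint density in the tensor basis $\prod_t g_c(y_t;\blambda_{h_t})$ identifiable, and hence also the moments $\beta_{h_0,\ldots,h_s} := \sum_k \delta_k \pi_{k h_0} \prod_{t=1}^s A_k[h_{t-1}, h_t]$ after dividing by the known factors $\prod_t(1-\varepsilon_{h_t})$. Patterns that end in a long run of state~$1$ (such as $\beta_{1,\ldots,1,h}$, $\beta_{h,1,\ldots,1}$, $\beta_{1,h,1,\ldots,1}$, and $\beta_{1,h,\ell,1,\ldots,1}$) each give a sum of exponentials in $s$ with the already known distinct bases $A_k[1,1]$; a Prony inversion — again requiring $T>2K$ — recovers the $k$-indexed coefficients while preserving the component labels established in Stage~2. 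Cascading through these patterns extracts every entry $A_k[h,\ell]$; Assumption~\ref{ass:chaine} then gives $\bpi_k$ as the stationary distribution of $\bA_k$, and $\delta_k$ follows from the already identified $\delta_k \pi_{k1}$. I expect the main obstacle to lie in Stage~1: cleanly disentangling each $\varepsilon_h$ from the continuous emission across several joint configurations requires bookkeeping for the point mass at $0$ and verifying that the resulting linear system is non-degenerate. By contrast, Stages~2 and~3 are driven almost mechanically by the Vandermonde/Prony machinery enabled by Assumption~\ref{ass:kullback}'s distinctness and ordering.
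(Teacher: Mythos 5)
Your proposal is correct and follows essentially the same route as the paper's proof: a Teicher-style peeling of the single-time marginal under Assumption~\ref{ass:order} to identify the emission parameters, then limits sending all coordinates to $\rho$ to isolate the all-state-$1$ trajectory and produce the geometric sums $\sum_k \delta_k\pi_{k1}A_k[1,1]^{s}$, inverted by Vandermonde/Prony arguments (this is exactly where $T>2K$ and the distinctness and ordering in Assumption~\ref{ass:kullback} enter), and finally cascading patterns with runs of state $1$ to recover the remaining transition entries. The only cosmetic difference is that the paper carries the products $(1-\varepsilon_h)\sum_k\delta_k\pi_{kh}$ forward and disentangles the $\varepsilon_h$ inside the Vandermonde step rather than in Stage~1, which sidesteps the bookkeeping you flag as the main obstacle.
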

Proof of Theorem~\ref{thm:ident} is given in Appendix, Section~\ref{app:ident}.
The model defined by the marginal distribution of a single $y_{i(t)}$ is not identifiable. Indeed, the marginal distribution of $y_{i(t)}$ is a mixture of zero-inflated distributions and such mixture is not identifiable (\emph{i.e.}, different class proportions and inflation proportions can define the same distribution). It is therefore this dependency over time that makes the proposed mixture generically identifiable.
Note that such statement has been made by \citet{GassiatStatCo2016} when they discuss the case where the emission distribution for an HMM follows a mixture model.

\subsection{Probabilities of misclassification}
In this section, we examine the probability that an observation will be misclassified when the model parameters are known. 
We consider the ratio between the probability that subject $i$ belongs to class $k$ given $\ty_i$ and the probability that this subject belongs to its true class, and we quantify the probability of it being greater than some positive constant $a$. Let $\theta_0$ be the true model parameter and $\mathbb{P}_0=\mathbb{P}(\cdot \mid Z_{ik_0}=1, \theta_0)$ denote the true conditional distribution (true label of subject $i$ and parameters are known). 

\begin{theorem}\label{thm:conv}
Assume that Assumptions~\ref{ass:chaine} and \ref{ass:kullback} hold. If $a>0$ is such that Assumption~\ref{ass:expo} (defined in Appendix Section~\ref{app:errorclassif}) holds, then for every $k\neq k_0$
\begin{equation*}
\mathbb{P}_0\left[ \frac{\mathbb{P}(Z_{ik}=1 \mid \ty_i)}{\mathbb{P}(Z_{ik_0}=1 \mid \ty_i)} > a \right] \leq \mathcal{O}(e^{-cT}),
\end{equation*}
where $c>0$ is a positive constant
\end{theorem}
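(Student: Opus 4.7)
The plan is to convert the posterior-odds event into a large-deviation statement on the HMM log-likelihood ratio and then conclude by a Chernoff bound powered by the exponential-moment hypothesis of Assumption~\ref{ass:expo}. Bayes' rule gives
$$\frac{\mathbb{P}(Z_{ik}=1 \mid \ty_i)}{\mathbb{P}(Z_{ik_0}=1 \mid \ty_i)} = \frac{\delta_k}{\delta_{k_0}}\,\frac{p(\ty_i;\bpi_k,\bA_k,\blambda,\bvarepsilon)}{p(\ty_i;\bpi_{k_0},\bA_{k_0},\blambda,\bvarepsilon)},$$
so the event of interest becomes $\{\Lambda_T > b\}$, where $\Lambda_T := \log p(\ty_i;\bpi_k,\bA_k,\blambda,\bvarepsilon) - \log p(\ty_i;\bpi_{k_0},\bA_{k_0},\blambda,\bvarepsilon)$ and $b := \log a - \log(\delta_k/\delta_{k_0})$ is a constant independent of $T$.

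The first substantive step is to show that $\Lambda_T$ has strongly negative drift under $\mathbb{P}_0$. Assumption~\ref{ass:chaine} makes $\tx_i$ an irreducible, stationary Markov chain with spectral gap at least $1-\nu>0$, so $\ty_i$ is a stationary, exponentially $\beta$-mixing hidden Markov process. Standard HMM ergodic theory (in the spirit of Leroux, 1992) then yields
$$\frac{1}{T}\Lambda_T \;\xrightarrow[T\to\infty]{\text{a.s.\ and in }L^1}\; -D(k_0 \| k),$$
where $D(k_0\|k)\geq 0$ is the relative entropy rate of the $k_0$-th HMM with respect to the $k$-th HMM. Because the emission law given the state is common to all components, $D(k_0\|k)$ is governed by the divergence between the transition matrices, and Assumption~\ref{ass:kullback} supplies the quantitative lower bound $D(k_0\|k)\geq \zeta>0$. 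In particular, the fixed threshold $b$ lies, for $T$ large, well above $\mathbb{E}_0[\Lambda_T]\sim -TD(k_0\|k)$, so $\{\Lambda_T>b\}$ is a moderate-deviation event.

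For the rate bound I would apply the Chernoff inequality: for $s\in(0,1)$,
$$\mathbb{P}_0(\Lambda_T > b) \leq e^{-sb}\,\mathbb{E}_0\!\left[ e^{s\Lambda_T}\right] = \exp\!\bigl(T\psi_T(s) - sb\bigr),\qquad \psi_T(s):=\frac{1}{T}\log \mathbb{E}_0[e^{s\Lambda_T}].$$
Convexity of $\psi_T$ together with $\psi_T(0)=0$ and $\psi_T'(0)=T^{-1}\mathbb{E}_0[\Lambda_T]\to -D(k_0\|k)<0$ forces $\psi_T$ to be strictly negative on an interval $(0,s_0)$. The role of Assumption~\ref{ass:expo} is to provide the finiteness and uniform control of $\psi_T(s)$ for small $s$, so that $\limsup_T \psi_T(s)\leq -c_0$ for some $c_0>0$ and some $s\in(0,1)$. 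Plugging in and absorbing the bounded term $sb$ into the rate yields the announced $\mathbb{P}_0(\Lambda_T>b)\leq \mathcal{O}(e^{-cT})$.

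The main obstacle is precisely the bound on $\psi_T(s)$, because $\Lambda_T$ is not a sum of independent increments: writing each log-likelihood as a telescoping sum of stationary one-step conditional log-densities produces weakly dependent terms whose dependence is mediated by the HMM filter. Closing the exponential-moment bound therefore requires combining Assumption~\ref{ass:expo} (finite exponential moments of the single-step log-density ratios) with the geometric mixing afforded by the spectral gap in Assumption~\ref{ass:chaine}---for instance via a block decomposition \emph{à la} Bernstein, or a spectral argument on the forward operator acting on a weighted function space. This decoupling step is the technical heart of the argument and the part I would expect to occupy most of the appendix.
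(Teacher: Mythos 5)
Your high-level reduction (posterior odds to likelihood-ratio large deviations) is the natural starting point, but the proposal has two genuine gaps that prevent it from closing, and it also misreads the role of the key assumption. First, Assumption~\ref{ass:expo} in the paper is not an exponential-moment hypothesis on single-step log-density ratios: it is a purely deterministic inequality, $\frac{\zeta- c_4 - \log(\tilde\gamma + \bar\gamma)}{1 + 2 \gamma} - \rho_{k_0} - \frac{1}{\sqrt{2}}> 0$, whose only role is to guarantee that a threshold $\delta_1$ can be chosen compatibly with two separate concentration bounds. Your Chernoff argument therefore rests on a moment condition that the paper does not supply, and the step you yourself flag as ``the technical heart'' --- showing $\limsup_T \psi_T(s)\leq -c_0<0$ for the log-MGF of the HMM log-likelihood ratio --- is left entirely open; this is essentially a large deviation principle for HMM likelihood ratios and is not a routine consequence of the spectral gap. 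Second, your claim that Assumption~\ref{ass:kullback} gives $D(k_0\|k)\geq\zeta$ for the relative entropy \emph{rate of the observed processes} is asserted but not proved, and it is not automatic here: since the emission distributions are identical across components, the observed sequences can be far less distinguishable than the hidden chains, so a lower bound on the divergence of the transition kernels does not directly transfer to the observed likelihood ratio.

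The paper takes a different and more hands-on route that avoids both issues. It never controls the observed-data log-likelihood ratio directly; instead, Lemma~\ref{lem:techn1bis} bounds $\log p(\ty_i\mid Z_{ik}=1)$ by the complete-data log-likelihood at the MAP state sequence $\tilde\tx_{ik}$, at the price of additive terms $T\log(\tilde\gamma+\bar\gamma)+T\hat\rho_{k_0}c_2$ where $\hat\rho_{k_0}$ is the empirical frequency of observations falling in a ``sensitive'' region $W$. The complete-data log-ratio then splits into a term supported on $\{t: y_{i(t)}\in W\}$, controlled by the Kontorovich--Weiss occupation-measure concentration for hidden Markov chains (Lemma~\ref{lem:techn1}), and a term $\frac{1}{T}\sum_t d_{k2(t)}$ involving only the transition matrices, controlled by the Le\'on--Perron concentration inequality for reversible chains (Lemma~\ref{lem:techn2}); this is where Assumption~\ref{ass:kullback} enters, at the level of the \emph{hidden} chain where it genuinely applies. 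Assumption~\ref{ass:expo} is exactly the compatibility condition making the two deviation thresholds consistent. If you wish to salvage your approach, you would need to prove an LDP (or at least a uniform negative bound on the scaled log-MGF) for the observed log-likelihood ratio of two HMMs sharing emission laws, together with a lower bound on the observed relative entropy rate; both are substantially harder than the combination of concentration lemmas the paper uses.
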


Moreover, the exponential bounds of Theorem~\ref{thm:conv} allows to use the Borel-Cantelli's lemma to obtain the almost sure convergence. 
\begin{corollary}\label{cor:conv}
Assume that Assumptions~\ref{ass:chaine} and \ref{ass:kullback} hold. If $\ty_i$ is generated form component $k_0$ (\emph{i.e.}, $Z_{ik_0}=1$), then for every $k\neq k_0$
\[
\frac{\mathbb{P}(Z_{ik}=1 \mid \ty_i)}{\mathbb{P}(Z_{ik_0}=1 \mid \ty_i)}  \underset{T \to +\infty}{\overset{a.s.}{\longrightarrow}} 0, \quad
\mathbb{P}(Z_{ik_0}=1 \mid \ty_i)  \underset{T \to +\infty}{\overset{a.s.}{\longrightarrow}} 1\; \text{ and }\;
\mathbb{P}(Z_{ik}=1 \mid \ty_i)\underset{T \to +\infty}{\overset{a.s.}{\longrightarrow}} 0.
\]
\end{corollary}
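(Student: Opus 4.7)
The approach is to apply the Borel--Cantelli lemma to the exponential bound of Theorem~\ref{thm:conv}, and then to deduce the individual limits from the fact that the posterior probabilities sum to one. I view all of the $y_{i(t)}$, $t\geq 0$, as living on a single underlying probability space under $\mathbb{P}_0$, so that the truncated observations at horizon $T$ form a coupled family of random variables on the same space; this is what makes an almost-sure statement meaningful.

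The first step is to fix $a>0$ small enough for Assumption~\ref{ass:expo} to hold and to introduce the events
\begin{equation*}
E_T(a) = \left\{ \frac{\mathbb{P}(Z_{ik}=1 \mid \ty_i^{(T)})}{\mathbb{P}(Z_{ik_0}=1 \mid \ty_i^{(T)})} > a \right\},
\end{equation*}
where $\ty_i^{(T)} = (y_{i(0)},\ldots,y_{i(T)})$. Theorem~\ref{thm:conv} yields $\mathbb{P}_0(E_T(a)) \leq C e^{-cT}$ for some constants $C,c>0$, so $\sum_{T\geq 1}\mathbb{P}_0(E_T(a))<\infty$. The Borel--Cantelli lemma then gives $\mathbb{P}_0(E_T(a) \text{ i.o.})=0$, i.e.\ the ratio is eventually bounded by $a$, hence $\limsup_T \mathbb{P}(Z_{ik}=1 \mid \ty_i^{(T)})/\mathbb{P}(Z_{ik_0}=1 \mid \ty_i^{(T)}) \leq a$ almost surely.

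The second step removes the dependence on $a$ by a countable-intersection argument. Choose a sequence $a_n \downarrow 0$ along which Assumption~\ref{ass:expo} holds (any sequence $1/n$ with $n$ large enough works under the natural reading of Assumption~\ref{ass:expo}). Each event $\{\limsup_T \text{ratio} \leq a_n\}$ has full probability, and a countable intersection of full-probability events is full-probability, so $\limsup_T \text{ratio} \leq a_n$ holds simultaneously for all $n$ almost surely. Since the ratio is non-negative, this forces the ratio to converge to $0$ almost surely, which is the first claim.

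For the remaining two limits, I would use $\sum_{k=1}^K \mathbb{P}(Z_{ik}=1 \mid \ty_i^{(T)})=1$ and divide by $\mathbb{P}(Z_{ik_0}=1 \mid \ty_i^{(T)})$ to obtain
\begin{equation*}
\frac{1}{\mathbb{P}(Z_{ik_0}=1 \mid \ty_i^{(T)})} = 1 + \sum_{k\neq k_0} \frac{\mathbb{P}(Z_{ik}=1 \mid \ty_i^{(T)})}{\mathbb{P}(Z_{ik_0}=1 \mid \ty_i^{(T)})}.
\end{equation*}
Applying the first claim to each of the $K-1$ terms on the right, the right-hand side tends to $1$ almost surely, which gives $\mathbb{P}(Z_{ik_0}=1 \mid \ty_i^{(T)}) \to 1$ almost surely. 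The last convergence then follows by writing $\mathbb{P}(Z_{ik}=1 \mid \ty_i^{(T)})$ as the product of the ratio (tending to $0$) and $\mathbb{P}(Z_{ik_0}=1 \mid \ty_i^{(T)})$ (tending to $1$). The only subtle point is verifying that Assumption~\ref{ass:expo} is compatible with a sequence $a_n\downarrow 0$; I do not expect a genuine obstacle here, but this is where one must be explicit rather than routine.
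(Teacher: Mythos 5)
Your overall strategy --- Borel--Cantelli applied to the exponential bound of Theorem~\ref{thm:conv}, a countable intersection over a sequence $a_n\downarrow 0$, and then the normalization identity $\sum_k \mathbb{P}(Z_{ik}=1\mid\ty_i)=1$ to get the last two limits --- is exactly the route the paper takes (the paper's justification is essentially the single sentence preceding the corollary), and the normalization step is handled correctly. However, the step you yourself flag as ``the only subtle point'' is where the argument as written breaks: you assert that Assumption~\ref{ass:expo} holds along a sequence $a_n=1/n\downarrow 0$, but the monotonicity is the opposite of what you assume. The constant $c_4$ contains the term $\log\frac{\delta_k}{a\delta_{k_0}}=\log\delta_k-\log\delta_{k_0}-\log a$, which tends to $+\infty$ as $a\downarrow 0$; hence $\zeta-c_4\to-\infty$ and the displayed inequality in Assumption~\ref{ass:expo} eventually \emph{fails} for small $a$. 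The set of admissible $a$ is bounded away from $0$, so a literal application of your second step only yields $\limsup_T(\text{ratio})\leq a_*$ for some $a_*>0$, not almost sure convergence to $0$.

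The gap is repairable, but it requires going back inside the proof of Theorem~\ref{thm:conv} rather than quoting its statement. There, $c_4$ enters the bound only through $c_4/T$, and the admissibility window for $\delta_1$ is
\begin{equation*}
\frac{\zeta - \frac{c_4}{T} - \log(\tilde\gamma + \bar\gamma)}{\gamma + c_2} - \rho_{k_0} \;>\; \delta_1 \;>\; \frac{1}{\sqrt{2T}},
\end{equation*}
whose left endpoint converges, as $T\to\infty$, to the $a$-independent quantity $\frac{\zeta - \log(\tilde\gamma+\bar\gamma)}{\gamma+c_2}-\rho_{k_0}$. Provided this limit is positive (which is what Assumption~\ref{ass:expo} secures for at least one $a$, and hence for all $a$ in the limit $T\to\infty$), for each fixed $a>0$ one can choose a fixed $\delta_1>0$ in the window for all $T\geq T_0(a)$, obtaining an exponential bound valid for all large $T$ with a positive rate. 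Summability then holds for each fixed $a=1/n$, Borel--Cantelli applies term by term, and your countable intersection concludes. Without this detour through the interior of the proof, the claim that the ratio tends to $0$ (rather than being eventually below some fixed threshold) is not established.
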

Therefore, by considering $a=1$, Theorem~\ref{thm:conv} and Corollary~\ref{cor:conv} show  that the probability of misclassifying the subject $i$ based on the observation $\ty_i$, using the \emph{maximum a posteriori}  rule, tends to zero when $T$ increases, if the model parameters are known.  
Proof of Theorem~\ref{thm:conv} and a sufficient condition that allows to consider $a=1$ (value of interest when the partition is given by the MAP rule) are given in Appendix, Section~\ref{app:errorclassif}. It should be noted that it is not so common to have an exponential rate of convergence for the ratio of the posterior probability of classification. Similar results are obtained for network clustering using the stochastic block model \citep{CelisseEJS2012} or for co-clustering \citep{Brault2015}. For these two models, the marginal distribution of a single variable provides information about the class membership. For the proposed model, this is the dependency between the different observed variables which is the crucial point for recovering the true class membership. 

\subsection{Dealing with missing values}  \label{sec:NA}
Due to the markovian character of the states, missing values can be handled by iterating the transition matrices.
In our particular context, missing values appear when the accelerometer is not worn (see Section~\ref{sec:data} for explanations of the reasons of missingness). We will not observe isolated missing values but rather wide ranges of missing values. Let $d$ be the number of successive missing values, we thus have to compute the matrix $A_k^{d+1}$ to obtain the distribution of the state at time $t+d$ knowing the state at time $t-1$. These powers of transition matrices should be computed many times during  the algorithm used for inference (see Section~\ref{sec:MLE}). Moreover, after $d+1$ iterations with $d$ large enough, the transition matrix can be considered sufficiently close to stationarity
(\emph{e.g.}, for any $(h,\ell)$, $A_k^{d+1}[h,\ell]\simeq \pi_{k\ell}$), which has actually been chosen as the initial distribution. Therefore, for numerical reasons, we will avoid computing the powers of the transition matrices and we will make the following approximation. An observation $\ty_i$ with $S_i$ observed sequences split with missing value sequences of size at least $d$ are modeled as $S_i$ independent observed sequences with no missing values, all belonging to the same component $k$.
Namely, for each individual $i$, the pdf $p(\ty_{i} ; \bpi_{k}, \bA_k, \blambda, \bvarepsilon)$ of component $k$ is approximated by the product of the pdf of the $S_i$ observed sequences $\ty_{i1}, \ty_{i2}, \ldots, \ty_{i S_i}$:

\begin{equation*}
p(\ty_{i} ; \bpi_{k}, \bA_k, \blambda, \bvarepsilon) \simeq \prod_{s=1}^{S_i} p(\ty_{is} ; \bpi_{k}, \bA_k, \blambda, \bvarepsilon),
\end{equation*}
where, for each $s$, $\ty_{is}$ is an observed sequence of length $T_{is}+1$: $\ty_{is}=(y_{is(0)}, \ldots, y_{is(T_{is})})$ and $p(\ty_{is} ; \bpi_{k}, \bA_k, \blambda, \bvarepsilon)$ is defined as in \eqref{eq:compo}. 
We note that the observation $\ty_i$ can thus be rewritten as follows
\[
\ty_i=(y_{i1(0)}, \ldots, y_{i1(T_{i1})}, y_{i2(0)}, \ldots, y_{i2(T_{i2})}, \ldots, y_{i S_i(0)}, \ldots, y_{i S_i (T_{i S_i})}),
\]
with $y_{i2(0)}= y_{i (T_{i1}+ d_{i1}+1 ) }$ where the $d_{i1}$ values $y_{i (T_{i1}+1)}, \ldots y_{i (T_{i1}+d_{i1})}$ correspond to the first sequence of missing values, and more generally,
for each $s=2, \ldots, S_i$, $y_{is(0)} = y_{i(\sum_{j=1}^{s-1} (T_{ij}+d_{ij}+1))}$, with $d_{ij}$ being the number of missing values between the observed sequences $\ty_{is_j}$ and $\ty_{is_{j+1}}$.

Once the estimation of the parameters has been done, we make sure that this assumption was justified by verifying that the width of the smallest range $d_{min}= \min \left\lbrace d_{i1}, \ldots, d_{i \, S_{i}-1} \right\rbrace$ of missing values is sufficiently large to be greater than the mixing time of the obtained transition matrix. To do so, we use an upper bound for the mixing time given by \citet[Theorem 12.4, p. 155]{Levin2017}.
For each component $k$, we denote by $\nu_k^{*}$ the second maximal absolute eigenvalue of $\bA_k$.
For any positive $\eta$, if for each $k$
\[
d_{min} \geq \frac{1}{1-\nu_k^{*}} \log \frac{1}{\eta \min_{h} \pi_{kh}},
\]
then for any integer $D \geq d_{min}$, the maximum distance in total variation satisfies
$$\max_{h} \|A_k^{D}[h,\cdot]-\pi_{k} \|_{TV} \leq \eta.$$

\section{Maximum likelihood inference}\label{sec:MLE}
This section presents the methodology used to estimate the model parameters.
\subsection{Inference} 
We proposed to estimate the model parameters by maximizing the log-likelihood function where missing values are managed as in Section~\ref{sec:NA} and we recall that the log-likelihood is also approximated for numerical reasons, to avoid computing large powers of the transition matrices. We want to find $\hat\btheta$ which maximizes the following approximated log-likelihood function
$$
\ell_K(\btheta;\ty) = \sum_{i=1}^n  \log \left(\sum_{k=1}^K \delta_k \prod_{s=1}^{S_i} p(\ty_{is} ; \bpi_{k}, \bA_k, \blambda, \bvarepsilon) \right).
$$
This maximization is achieved via an EM algorithm \citep{Dem77} which considers the complete-data log-likelihood defined by
$$
\ell_K(\btheta;\ty,\tz) = \sum_{i=1}^n\sum_{k=1}^K z_{ik} \log \delta_k + \sum_{i=1}^n \sum_{k=1}^K z_{ik}\left( \sum_{s=1}^{S_i}\log p(\ty_{is} ; \bpi_{k}, \bA_k, \blambda, \bvarepsilon) \right).
$$

\subsection{Conditional probabilities}
Let $\alpha_{ikhs(t)}(\btheta)$ be  the probability of the partial sequence $y_{is(0)},\ldots,y_{is(t)}$ and ending up in state $h$ at time $t$ under component $k$. Moreover, let $\beta_{ikhs(t)}(\btheta)$ be the probability of the ending partial sequence $y_{is(t+1)},\ldots,y_{is(T_{is})}$ given a start in state $h$ at time $t$ under component $k$. These probabilities can be easily obtained by the forward/backward algorithm (see Appendix, Section~\ref{sec:conddistribution}). We deduce that the probability $\gamma_{ikhs(t)}(\btheta)$ of being in state $h$ at time $t\in\{0,\ldots,T_{is}\}$ for $\ty_i$ under component $k$ is
\begin{equation*}
\gamma_{ikhs(t)}(\btheta) = \mathbb{P}(X_{is(t)}=h \mid \ty_{is}, Z_{ik}=1;\btheta)= \frac{ \alpha_{ikhs(t)}(\btheta) \beta_{ikhs(t)}(\btheta)}{\sum_{\ell=1}^M  \alpha_{ik\ell s(t)}(\btheta) \beta_{ik\ell s(t)}(\btheta)}.
\end{equation*} 
The probability $\xi_{ikh\ell s(t)}(\btheta)$ of being in state $\ell$ at time $t \in \Omega_i$ and in state $h$ at time $t-1$ for observation $\ty_i$ under component $k$ is
\begin{equation*}
\begin{array}{ll}
\xi_{ikh\ell s(t)}(\btheta) & \vspace{1em} = \mathbb{P}(X_{is(t)}=\ell,X_{is(t-1)}=h \mid \ty_{is}, Z_{ik}=1;\btheta) \\
& = \displaystyle \frac{ \alpha_{ikhs(t)}(\btheta)  \bA_k[h,\ell]  g(y_{is(t)};\blambda_{\ell},\varepsilon_{\ell}) \beta_{ik\ell s(t)}(\btheta)  }{ \sum_{h'=1}^M \sum_{\ell ' =1}^M \alpha_{ikh's(t)}(\btheta)  \bA_k[h',\ell']  g(y_{is(t)};\blambda_{\ell'},\varepsilon_{\ell'}) \beta_{ik\ell' s(t)}(\btheta)}.
\end{array}
\end{equation*}
The probability $\tau_{ik}$ that one observation arises from component $k$ is
\begin{equation*}
\tau_{ik}(\btheta)=\mathbb{P}(Z_{ik}=1\mid\ty_i,\btheta) =  \frac{\prod_{s=1}^{S_i}  \sum_{h=1}^M  \alpha_{ik h s(T_{is})}(\btheta)}{\sum_{k'=1}^K  \prod_{s=1}^{S_i}  \sum_{h=1}^M  \alpha_{ik' h s(T_{is})}(\btheta)}.
\end{equation*}
The probability $\eta_{ihs(t)}$ that observation $i$ is at state $h$ at time $t$ of sequence $s$ is
\begin{equation*}
\eta_{ihs(t)}(\btheta)=\mathbb{P}(X_{is(t)}=h\mid\ty_i,\btheta) = \sum_{k=1}^K \tau_{ik}(\btheta)\gamma_{ikhs(t)}(\btheta).
\end{equation*}

\subsection{EM algorithm}
The EM algorithm is an iterative algorithm randomly initialized at the model parameter $\btheta^{[0]}$. It alternates between two steps: the Expectation step (E-step) consisting in computing the expectation of the complete-data likelihood under the current parameters, and the maximization step (M-step) consisting in maximizing this expectation over the model parameters. Iteration $[r]$ of the algorithm is defined by

\noindent\textbf{E-step} Conditional probability computation, updating of
\[
\tau_{ik}(\btheta^{[r-1]}),\; \gamma_{ikhs(t)}(\btheta^{[r-1]}),\; \eta_{ihs(t)}(\btheta^{[r-1]},\text{ and } \xi_{ikh\ell s(t)}(\btheta^{[r-1]}).
\]
\textbf{M-step} Parameter updating
\[
\delta_k^{[r]} = \frac{n_k(\btheta^{[r-1]})}{n}, \; 
\pi_{kh}^{[r]} = \frac{n_{kh(0)}(\btheta^{[r-1]})}{n_k(\btheta^{[r-1]})}, \;
\bA_{k}[h,\ell]^{[r]} = \frac{n_{kh\ell}(\btheta^{[r-1]})}{n_{kh}(\btheta^{[r-1]})}, \;
\varepsilon_h^{[r]} = \frac{w_{h}(\btheta^{[r-1]})}{n_{kh}(\btheta^{[r-1]})},
\]
\[\text{and } \;
\blambda_h^{[r]} = \argmax_{\lambda_h}\sum_{i=1}^n \sum_{s = 1}^{S_i} \sum_{t=0}^{T_{is}} \eta_{ihs(t)}(\btheta^{[r-1]}) g_c(y_{is(t)};\blambda_h),
\]
where 
\begin{align*}
n_k(\btheta)=&\sum_{i=1}^n \tau_{ik}(\btheta),~
n_{kh}(\btheta) = \sum_{i=1}^n\sum_{s=1}^{S_i} \sum_{t=0}^{T_{is}}\tau_{ik}(\btheta) \gamma_{ikhs(t)},~  n_{kh(0)}(\btheta)=\sum_{i=1}^n\sum_{s=1}^{S_i} \tau_{ik}(\btheta) \gamma_{ikhs(0)}(\btheta),\\
n_{kh\ell}(\btheta) &=\sum_{i=1}^n \sum_{s = 1}^{S_i} \sum_{t=1}^{T_{is}} \tau_{ik}(\btheta)\xi_{ikh\ell s(t)}(\btheta)\; \text{ and } w_{h}(\btheta)  = \sum_{i=1}^n \sum_{s = 1}^{S_i} \sum_{t=0}^{T_{is}} \eta_{ihs(t)}(\btheta) \mathbf{1}_{\{y_{is(t)}=0\}}.
\end{align*}

\section{Numerical illustrations}\label{sec:simu}
This section aims to highlight the main properties of the model on numerical experiments.
First, simulated data are used to illustrate the exponential decay of the probabilities of misclassification (given by Theorem~\ref{thm:conv}), the convergence of estimators and the robustness of the approach to missingness. Second, our approach is applied to the data from the PAT study. All the experiments are conducted with the \proglang{R} package \pkg{MHMM} available on \proglang{CRAN}.

\subsection{Simulated data}
\paragraph{Simulation design}
All the simulations are performed according to the same model. 
This model is a bi-components mixture of HMM with two states (\emph{i.e.}, $K=M=2$) and equal proportions (\emph{i.e.}, $\delta_1=\delta_2=1/2$). The distribution of $Y_{i(t)}$ conditionally on the state $h$ is a ZIG distribution. We have
\[
\varepsilon_1=\varepsilon_2 = 0.1,\;
a_1 = 1,\; b_1=b_2=1,\; \bA_1 = \begin{bmatrix}
e & 1-e \\
1-e & e \\
\end{bmatrix}
\text{ and } 
\bA_2 = \begin{bmatrix}
1-e & e \\
e & 1-e \\
\end{bmatrix}.
\]
The parameter $a_2>1$ controls the separation of the distribution of $Y_{i(t)}$ given the state. The parameter $e$ controls the separation of the distribution of $X$ given the class (when $e$ increases, the constant $c$ in Theorem~\ref{thm:conv} increases).
We consider four cases: hard ($e=0.75$ and $a_2=3$), medium-hard ($e=0.90$ and $a_2=3$), medium-easy ($e=0.75$ and $a_2=5$) and easy ($e=0.90$ and $a_2=5$). 

\paragraph{Illustrating the exponential rate of the probabilities of misclassification}
Theorem~\ref{thm:conv} states that the probabilities of misclassification decrease at an exponential rate with $T$. To illustrate this property, 1000 sequences are generated for $T=1,\ldots,100$ and the four cases. For each sequence $\ty_i$, we compute $\log({\mathbb{P}(Z_{ik}=1 \mid \ty_i)}/{\mathbb{P}(Z_{ik_0}=1 \mid \ty_i)})$ when $k_0$ is the true class, $k$ the alternative and the true model parameters are used.
Figure~\ref{fig:simu1}(a) shows the behavior of $\log({\mathbb{P}(Z_{ik}=1 \mid \ty_i)}/{\mathbb{P}(Z_{ik_0}=1 \mid \ty_i)})$ (the median of this log ratio is plotted in plain and a 90$\%$ confidence interval is plotted in gray). Note that this log ratio of probabilities linearly decreases with $T$ which illustrates the exponential decay of the probabilities of misclassification. Moreover, Figure~\ref{fig:simu1}(b) presents the empirical probabilities of misclassification and thus also illustrates Theorem~\ref{thm:conv}. As expected, this shows that the decay of the probabilities of misclassification is faster as the overlaps between class decreases.

\begin{figure}[t]
\begin{center}
\subfigure[Median (in plain) and 90\%-confidence region (gray area) of $\log\frac{\mathbb{P}(Z_{ik}=1 \mid \ty_i)}{\mathbb{P}(Z_{ik_0}=1 \mid \ty_i)}$.]{
\includegraphics[width=0.45\textwidth]{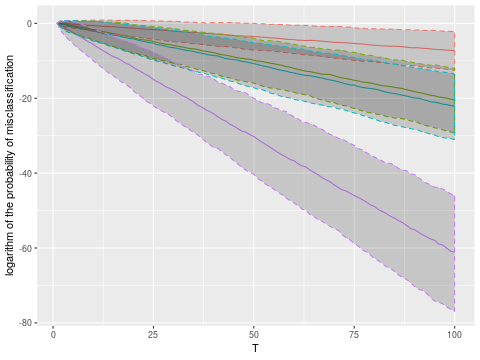}}
\subfigure[Probability of misclassification.]{
\includegraphics[width=0.45\textwidth]{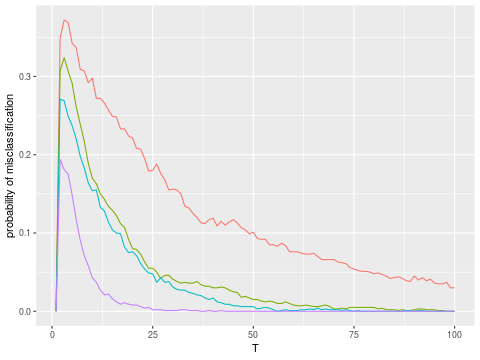}}
\end{center}
\caption{Results obtained on 1000 observations for the hard (orange), medium-hard (green), medium-easy (blue) and easy (purple) cases.}
\label{fig:simu1}
\end{figure}

\paragraph{Illustrating the convergence of the estimators}
We illustrate the convergence of the estimators (partition, latent states and parameters) when the model parameters are estimated by maximum likelihood (see Section~\ref{sec:MLE}). We compute the mean square error (MSE) between the model parameters and their estimators. Moreover, we compute the adjusted Rand index (ARI; \citet{Hub85}) between the true partition and the partition given by the MAP rule, and between the true state sequences and the estimated state sequences given by the MAP rule (obtained with the Viterbi algorithm \citep{viterbi1967error}). Table~\ref{tab:resSimMLE} shows the results obtained with two different sample sizes $n$ and two different lengths of sequences $T$, considering the case medium-hard. It can be seen that the partition and the model parameters are well estimated. Indeed, the MLE converge to the true parameters as $T$ or $n$ increases, except for the proportion of each component $\delta_k$. The convergence of the estimator of the proportions depends mainly on the sample size $n$. 
We notice that the partition obtained by our estimation procedure corresponds to the true partition (for $n$ and $T$ large enough) even if we are not under the true parameters but under the MLE, which is not an immediate consequence of Theorem~\ref{thm:conv}. On the contrary, we do not find the true state sequences a.s., as the number of states to be estimated is also growing with $n$ and $T$. This result was expected because the number of latent states increases with $T$ and $n$ while the number of parameters and the dimension of the partition does not increase with $T$. Results obtained for the three other cases are similar and are presented in Appendix, Section~\ref{app:simu}. 

\begin{table}[h!t]
\caption{Convergence of estimators when 1000 replicates are drawn from case medium: ARI between estimated and true partition, ARI between estimated and true latent states and MSE between the MLE and the true parameters}
\label{tab:resSimMLE}
\centering
\begin{tabular}{llccccccc}
\hline
 & & \multicolumn{2}{c}{ARI (latent variables)} & \multicolumn{5}{c}{MSE (model parameters)} \\
$n$ & $T$ & partition & states & $\bA_k$ & $\varepsilon_h$ & $a_h$ & $b_h$ & $\delta_k$ \\
\hline
10 & 100 &  0.995 & 0.621 & 0.021 & 0.001 & 0.088 & 0.024 & 0.047\\
10 & 500 & 1.000 & 0.632 & 0.007 & 0.000 & 0.020 & 0.005 & 0.048 \\
100 & 100 & 0.996 & 0.630 & 0.004 & 0.000 & 0.011 & 0.003 & 0.005\\
100 & 500 &  1.000 & 0.634 & 0.003 & 0.000 & 0.005 & 0.002 & 0.005 \\
\hline
\end{tabular}
\end{table}

\paragraph{Illustrating the robustness to missingness}
We now investigate the robustness of the proposed method with missingness. We compare the accuracy of the estimators (ARI for the latent variables and MSE for the parameters) obtained on samples without missingness to the accuracy of the estimators obtained when missingness is added to the samples. Three situations of missingness are considered: missing completely at random-1 (MCAR-1) (\emph{i.e.}, one sequence of 10 missing values is added to each sequence $\ty_i$, the location of the sequence follows a uniform distribution), MCAR-2 (\emph{i.e.}, two sequences of 20 missing values are added for each sequence $\ty_i$, the location of the sequences follows a uniform distribution) and missing not at random (MNAR) (\emph{i.e.}, the probability to observe the value $y_{i(t)}$ is equal to ${e^{y_{i(t)}}}/{(1+e^{y_{i(t)}})}$). Note that the last situation adds many missing values when the true value of $y_{i(t)}$ is close to zero, so the occurrence of missing values depends on the latent states.
Table~\ref{tab:resSimNA} compares the results obtained with and without missingness, considering case medium-hard. It shows that estimators are robust to missingness. Results obtained  for the other three cases are similar and are reported in Appendix, Section~\ref{app:simu}.

\begin{table}[h!t]
\caption{Convergence of estimators obtained over 1000 replicates with and without missing data when data are sampled from case medium: ARI between estimated and true partition, ARI between estimated and true latent states and MSE between the MLE and the true parameters}
\label{tab:resSimNA}
\centering
\begin{tabular}{lllccccccc}
\hline
 & &  & \multicolumn{2}{c}{Adjusted Rand index} & \multicolumn{5}{c}{Mean square error} \\
$n$ & $T$ & missingness & partition & states & $\bA_k$ & $\varepsilon_h$ & $a_h$ & $b_h$ & $\delta_k$ \\
\hline
   10 &100 & no missingness & 0.995 & 0.621 & 0.021 & 0.001 & 0.088 & 0.024 & 0.047 \\ 
 && MCAR-1 & 0.991 & 0.613 & 0.024 & 0.001 & 0.102 & 0.028 & 0.047 \\ 
 && MCAR-2 & 0.987 & 0.605 & 0.028 & 0.001 & 0.113 & 0.032 & 0.047 \\ 
 && MNAR & 0.934 & 0.497 & 0.051 & 0.003 & 0.398 & 0.050 & 0.050 \\ 
10&500& no missingness & 1.000 & 0.632 & 0.007 & 0.000 & 0.020 & 0.005 & 0.048 \\ 
 && MCAR-1 & 1.000 & 0.631 & 0.007 & 0.000 & 0.020 & 0.005 & 0.048 \\ 
 && MCAR-2 & 1.000 & 0.631 & 0.007 & 0.000 & 0.019 & 0.005 & 0.048 \\ 
 && MNAR & 0.999 & 0.516 & 0.021 & 0.003 & 0.233 & 0.028 & 0.048 \\ 
100&100& no missingness & 0.996 & 0.630 & 0.004 & 0.000 & 0.011 & 0.003 & 0.005 \\ 
 && MCAR-1 & 0.994 & 0.624 & 0.004 & 0.000 & 0.013 & 0.003 & 0.005 \\ 
 && MCAR-2 & 0.989 & 0.618 & 0.005 & 0.000 & 0.014 & 0.004 & 0.005 \\ 
 && MNAR & 0.951 & 0.512 & 0.014 & 0.002 & 0.200 & 0.026 & 0.005 \\ 
 100&500& no missingness & 1.000 & 0.634 & 0.003 & 0.000 & 0.005 & 0.002 & 0.005 \\ 
 && MCAR-1 & 1.000 & 0.633 & 0.002 & 0.000 & 0.006 & 0.002 & 0.005 \\ 
 && MCAR-2 & 1.000 & 0.632 & 0.002 & 0.000 & 0.005 & 0.002 & 0.005 \\ 
 && MNAR & 1.000 & 0.520 & 0.011 & 0.002 & 0.198 & 0.026 & 0.005 \\ 
\hline
\end{tabular}
\end{table}
    
\subsection{Using the approach on classical accelerometer data}
We consider the accelerometer data measured on three subjects available from \citet{huang2018hidden}. The accelerometer measures the activity every five minutes for one week. 
Note that the first subject has $2\%$ of missing values. The purpose of this section is to illustrate the differences between the method of \citet{huang2018hidden} and the method proposed in this paper.

\citet{huang2018hidden} consider one HMM per subject with three latent states. This model is used for monitoring the circadian rhythmicity, subject by subject. Because they fit one HMM per sequence measured by the accelerometer of a subject, the definition of the activity level is different for each subject (see, \citet[Figure~4]{huang2018hidden}). This is not an issue for their study because the analysis is done subject by subject. However, the mean time spent by activity levels cannot be compared among the subjects.
\begin{figure}[ht!]
\centering 
\includegraphics[width=0.95\textwidth]{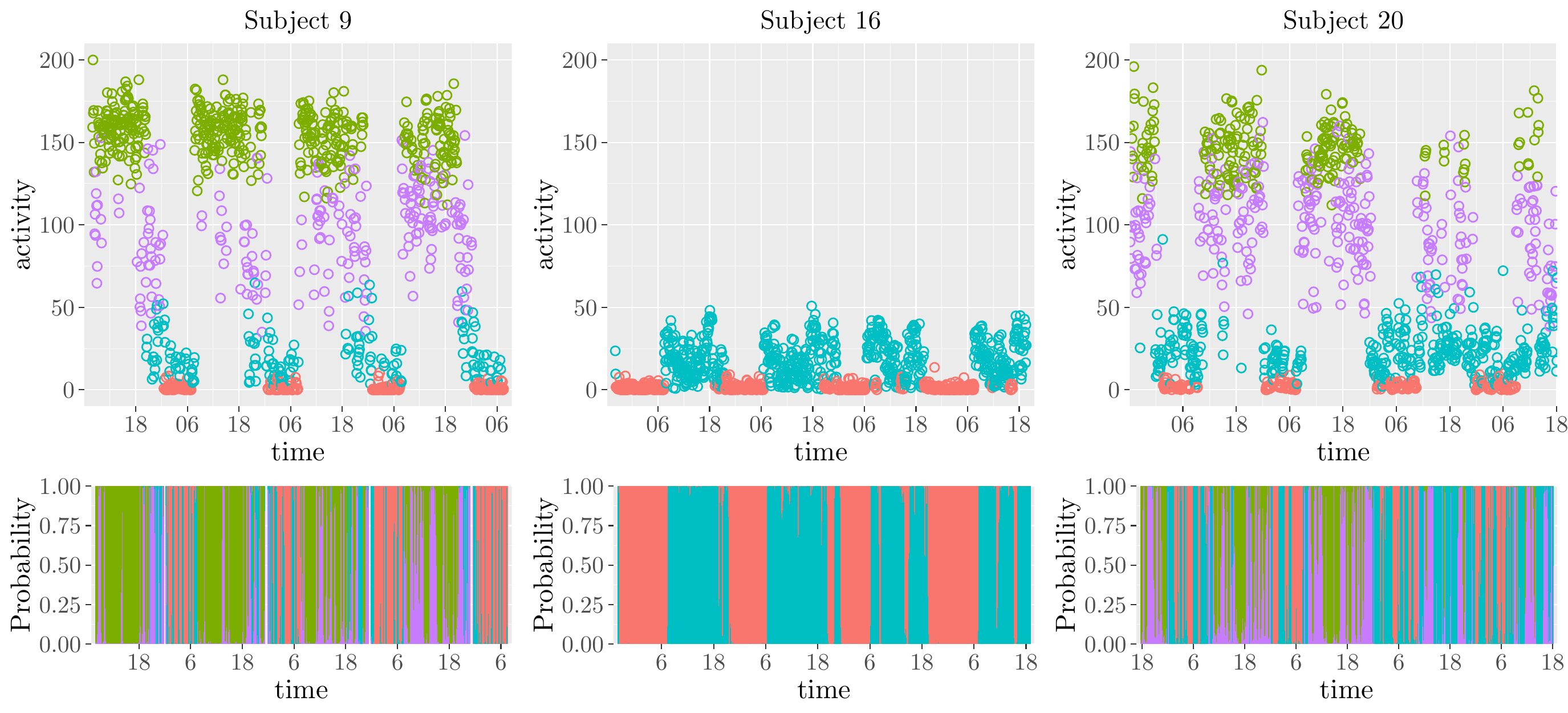}
\includegraphics[width=0.4\textwidth]{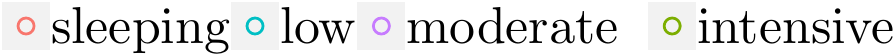}
\caption{State estimation for the three subjects: (top) accelerometer data where color indicates the expected value of $Y_{i(t)}$ conditionally to the most likely state and to the most likely component; (bottom) probability of each state at each time.}
\label{fig::huangdata}
\end{figure}
The method proposed here makes this comparison possible. Figure~\ref{fig::huangdata} depicts the activity data of the three subjects, the expected value of $Y_{i(t)}$ conditionally to the most likely state and on the most likely component and the probability of each state. Based on the QQ-plot (see, Appendix, Section~\ref{app:huang}), we consider $M=4$ activity levels. These levels can be easily characterized with the model parameters presented in Table~\ref{tab:paramHuang}. Moreover, the transition matrices also make sense. For instance, class~1 (subjects 9 and 20) has an almost tri-diagonal transition matrix (by considering an order between the states given through the activity levels per state) and class-2 (subject 2) is composed of a subject with low-overall activity
\[
\hat{\bA}_1 = 
\begin{bmatrix}
0.86 & 0.14 & 0.00 & 0.00\\
0.12 & 0.81 & 0.06 & 0.01\\
0.00 & 0.07 & 0.79 & 0.14\\
0.00 & 0.00 & 0.13 & 0.87\\
\end{bmatrix}.
\]

\begin{table}[h!t]
\caption{Parameters  and mean time per states for the three subjects}
\label{tab:paramHuang}
\centering
\begin{tabular}{lccccc}
\hline
State name & $\varepsilon_h$ & $a_h$ & $b_h$ & mean & sd  \\ 
\hline
intensive-level& 0.00 & 98.94 & 0.65 & 152.76 & 15.36\\
moderate-level& 0.00 & 11.09 & 0.11 & 99.34 & 29.84\\
low-level& 0.00 & 2.32 & 0.11 &   20.98 & 13.79\\
sleeping   & 0.22 & 1.48 & 0.72 &  2.06 & 1.70\\
\hline
\end{tabular}
\end{table}

\section{Analysis of PAT data} \label{sec:appli}
In this section, we analyze the data presented in Section~\ref{sec:data}.

\subsection{Experimental conditions}
In order to compare our approach to the cuts defined \emph{a priori} in the PAT study (see Section~\ref{sec:data}), the model was fitted with four activity levels. Note that selecting the number of states in HMM stays a challenging problem (see the discussion in the conclusion). However, approaches considering four activity levels are standard for accelerometer data. The number of components (\emph{i.e.}, the number of classes) is estimated, using an information criterion unlike the PAT study where it is arbitrarily set at 3 or 4. For each number of components, 5000 random initializations of the EM algorithm are performed. The analysis needs about one day of computation on a 32-Intel(R) Xeon(R) CPU E5-4627 v4 @ 2.60GHz. 
 
\subsection{Model selection}
To select the number of components, we use two information criteria which are generally used in clustering: the BIC \citep{Schwarz:78} and the ICL \citep{Biernacki:00} defined by
\[
\mathrm{BIC}(K) = \ell_K(\btheta;\ty) - \frac{\nu_K}{2} \log (\sum_{i=1}^n \sum_{s=1}^{S_i} T_{is}+1),
\]
and 
\[
\mathrm{ICL}(K) = \mathrm{BIC}(K) + \sum_{i=1}^n \sum_{k=1}^K \hat z_{ik}(\hat\btheta) \log \tau_{ik}(\hat\btheta),
\]
where $\nu_K= (K-1) + K(M + M^2) + 3M$ is the number of parameters for a model with $K$ components and $M$ states and $\hat z_{ik}(\hat\btheta)$ defines the partition by the MAP rule associated to the MLE such that
\[
\hat z_{ik}(\hat\btheta) = \left\{\begin{array}{rl}
1 & \text{if } \tau_{ik}(\hat\btheta)=\mathrm{argmax}_{\ell=1,\ldots,K}\;  \tau_{i\ell}(\hat\btheta)\\
0 & \text{otherwise}
\end{array}\right. .
\]
The ICL is defined according as the integrated complete-data likelihood computed with the partition given by the MAP rule with the MLE. The values of the information criteria are given in Table~\ref{tab:criteria}, for different number of classes. Both criteria select five components. The values of $\mathrm{ICL}(K)$ are close to those of the $\mathrm{BIC}(K)$,   implying that the entropy $\sum_{i=1}^n \sum_{k=1}^K \hat z_{ik} \log \tau_{ik}(\hat\btheta)\approx 0$. This is a consequence of Theorem~\ref{thm:conv} (see also numerical experiments in Section~\ref{sec:simu}). In the following, we interpret the results obtained with $M=4$ activity levels and $K=5$ classes.
 \begin{table}[h!t]
   \caption{Information criteria obtained on PAT data with four levels of activity (minima are in bold)}
\label{tab:criteria}
 \centering
\begin{tabular}{lccccccc}
\hline
$K$ & 1 & 2 & 3 & 4 & 5 & 6 &  7 \\ 
\hline
BIC   &  -2953933 & -2952313 & -2951809 & -2951705 & \textbf{-2951308} & -2951364 & -2951696\\
ICL   &  -2953933 & -2952313 & -2951810 & -2951707 & \textbf{-2951309} & -2951364 & -2951697\\
\hline
\end{tabular}
\end{table}

\subsection{Description of the activity levels}\label{subsec:level}
The parameters of the ZIG distributions are presented in Table~\ref{tab:paramlevels}. The four distributions are ordered by the value of their means. The \emph{sleeping state} is characterized by a large probability of observing zero (\emph{i.e.}, $\varepsilon_h$ is close to one). However, $\varepsilon_h$ is not equal to zero for the other states, but the more active the state is, the smaller $\varepsilon_h$ is. We also compute the marginal cutoffs (\emph{i.e.}, the cutoffs by considering the MAP of $\mathbb{P}(X_{i(t)}\mid Y_{i(t)})$). These cutoffs neglect the time dependency due to the Markov structure, but can be compared to the cutoffs proposed by the PAT study. Indeed, according to the PAT study, minutes with $<100$ counts are assigned to Sedentary activity, minutes with 100-2019 counts were classified as Light, the class Moderate corresponds to 2020-5998 counts/minute and Vigorous 5999 and above counts/minute.
The marginal cutoff associated with the \emph{low-level} state is very close to that of the Sedentary class of the PAT. We find, however, that our marginal cutoffs are more accurate for higher levels of activity. PAT cutoffs do not adequately characterize the activity level of the study population. Finally, contrary to classical thresholds, our modeling approach allows to capture and characterize the variability associated with the different levels of activity, variability which seems important (see Figure~\ref{fig:observations} and Table~\ref{tab:paramlevels}).

\begin{table}[h!t]
\caption{Parameters describing the four activity levels for PAT data and statistics of the distributions}
\label{tab:paramlevels}
\centering
\begin{tabular}{lccccc}
\hline
 Name of the activity level & \multicolumn{3}{c}{Parameters} & \multicolumn{2}{c}{Statistics}\\ 
 & $\varepsilon_h$ & $a_h$ & $b_h$ & mean & marginal cutoffs\\ 
 \hline
sleeping  & 0.988 & 7.470 & 7.470 &   0.012 & $[0,\;0]$\\
low-level& 0.260 &  0.974 & 0.020 &   36.926 & $]0,\;97.7]$\\
moderate-level& 0.025  & 1.408  & 0.004 &  329.249& $]97.7,\; 614.4]$\\
intensive-level& 0.007 & 2.672 & 0.002 & 1696.935& $]614.4,\;+\infty[$\\
\hline
\end{tabular}
\end{table}

\subsection{Description of the classes}
Classes can be described using their proportions and their associated parameters presented in Table~\ref{tab:paramlevels}. 
The data are composed of a majority class ($\delta_1=0.518$). Three other classes are composed of more sedentary individuals (\emph{e.g.}, their marginal probabilities of being in states 1 and 2 are higher). Finally, there is a small class ($\delta_5=0.045$) which contains the most active subjects (\emph{i.e.}, $\pi_{k4}=0.143$). 
For three of the five classes,  Figure~\ref{fig:observations} presents a characteristic subject of each class and the probabilities of the activity levels (the associated graphs for the two remaining classes are given in Appendix, Section~\ref{supp:pat}).
Classes can be interpreted from the mean time spent at different activity levels presented in Table~\ref{tab:meantime} and from transition matrices presented in Table~\ref{tab:usual} which are almost tri-diagonal. This could be expected because it seems relevant to obtain a low probability of jumping between the sleeping state and the intensive state. Additionally, the approximation made for efficiently handling the missingness (see Section~\ref{sec:NA}) turns out to be relevant. The minimal range of missing values is indeed equal to $d_{\min}=60$ which leads to a distance in total variation between the $d_{\min}$-power of the transition matrices and the stationary distribution being less than $5.10^{-4}$ for any component.

\begin{figure}[H]
\begin{center}
\subfigure[\emph{Moderate} class]{\includegraphics[width=0.4\textwidth]{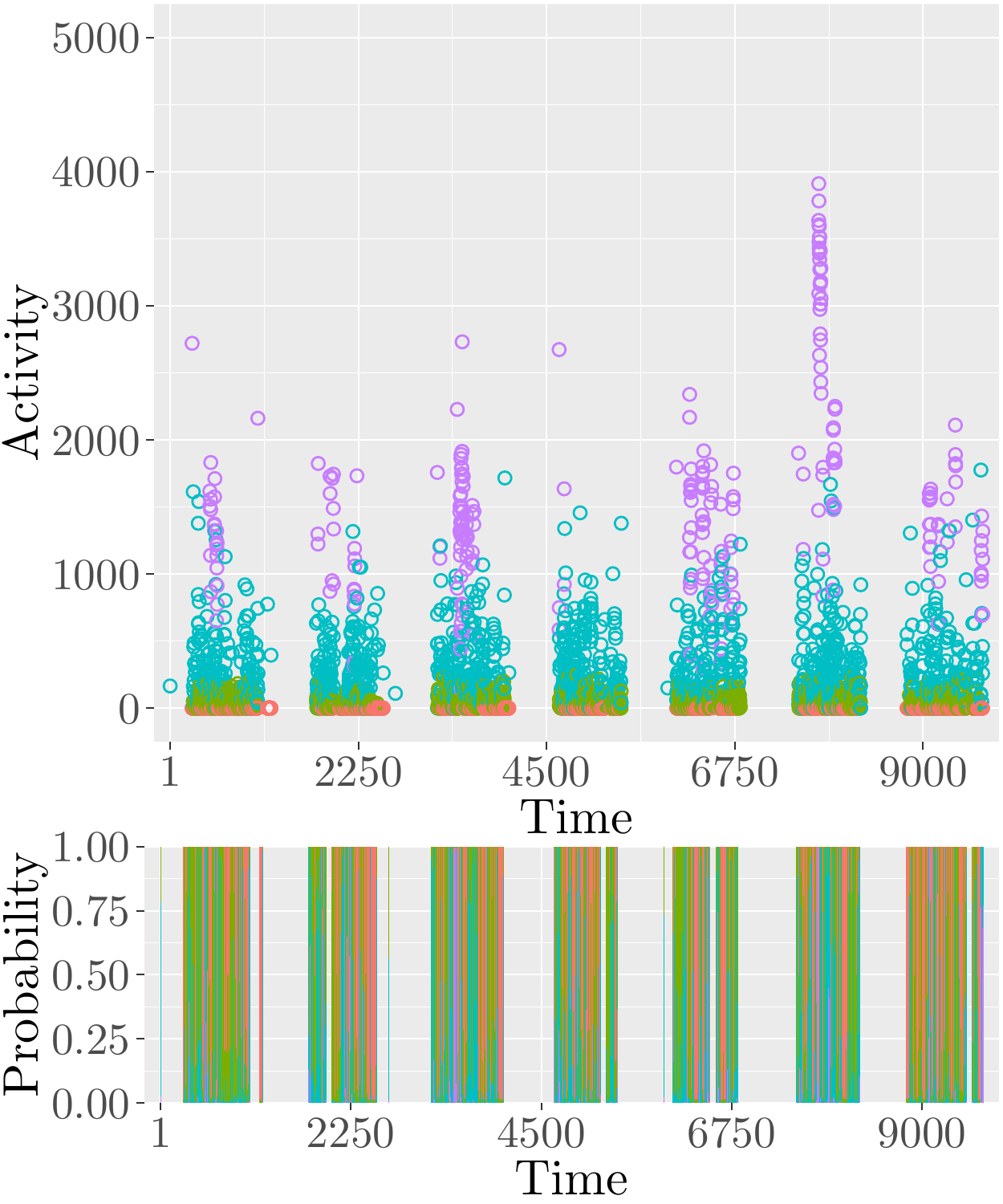}}
\subfigure[\emph{Very sedentary} class]{\includegraphics[width=0.4\textwidth]{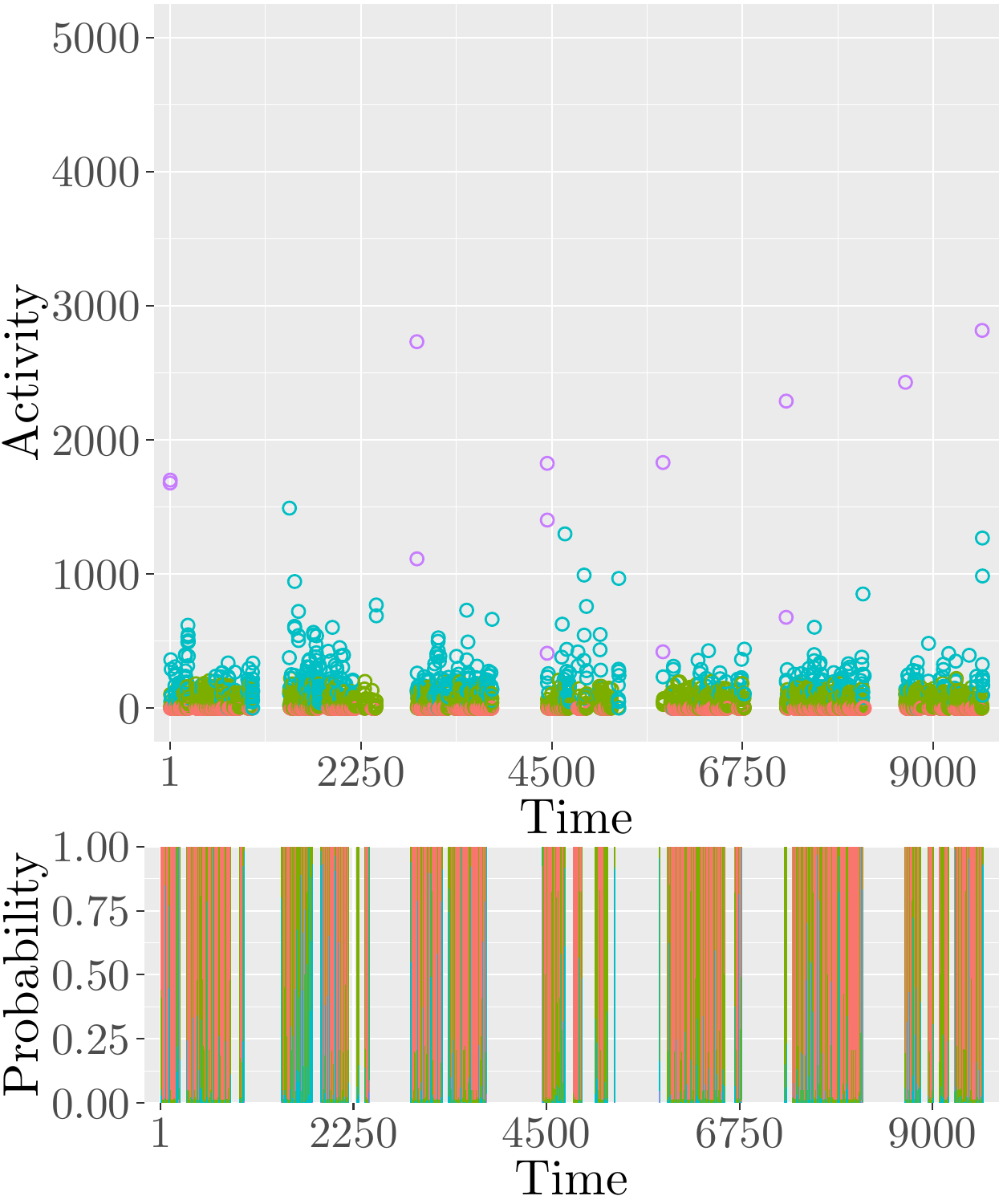}}
\subfigure[\emph{Very active} class]{\includegraphics[width=0.4\textwidth]{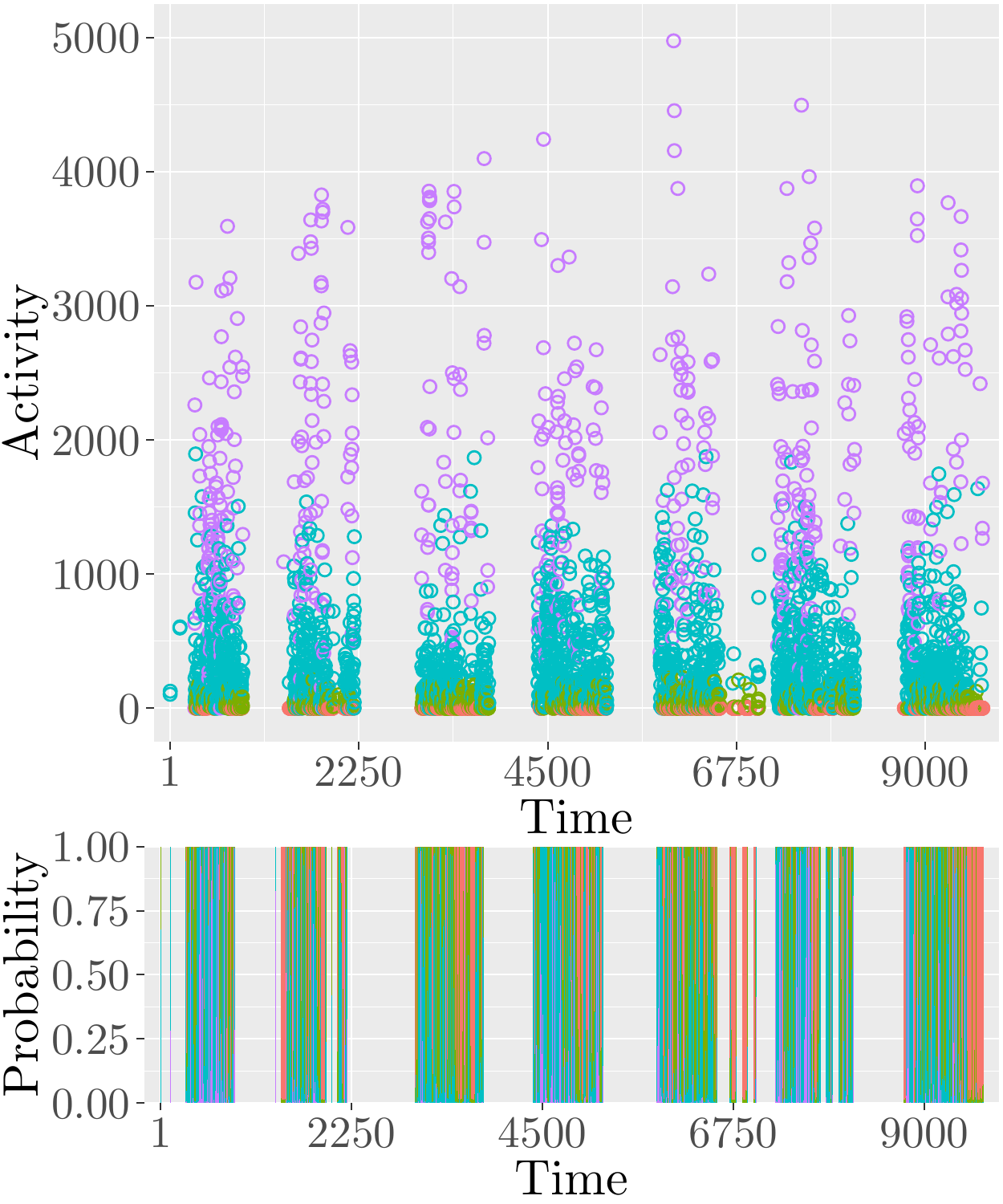}}
\end{center}
\begin{center}
\includegraphics[width=0.4\textwidth]{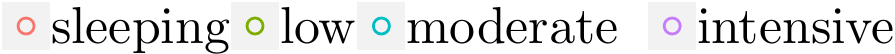}
\end{center}
\caption{Examples of observations assigned into the five classes with the probabilities of the states.}
\label{fig:observations}
\end{figure}

\begin{table}[ht]
\caption{Mean time spent at the different activity levels for the five classes}
\label{tab:meantime}
\centering
\begin{tabular}{lcccc}
\hline
& sleeping & low-level & moderate-level & intensive-level\\
\hline
Class \emph{active}  & 0.306 & 0.284 & 0.338 & 0.072 \\
Class \emph{sedentary}  & 0.467 & 0.209 & 0.263 & 0.061 \\
Class \emph{moderate}  & 0.304 & 0.411 & 0.225 & 0.060 \\
Class \emph{very sedentary}  & 0.504 & 0.366 & 0.124 & 0.006 \\
Class \emph{very active}  & 0.189 & 0.351 & 0.316 & 0.143\\
\hline
\end{tabular}
\end{table}

\begin{table}[ht]
\caption{Transition matrix for the five classes}
\label{tab:usual}
\centering
\begin{tabular}{lcccc}
\hline
&\multicolumn{4}{c}{Class \emph{moderate}}\\
 & sleeping & low-level & moderate-level & intensive-level \\   
 \hline
sleeping & 0.76 & 0.21 & 0.03 & 0.00 \\ 
low-level & 0.16 & 0.73 & 0.11 & 0.00 \\ 
moderate-level & 0.03 & 0.20 & 0.73 & 0.04 \\ 
intensive-level & 0.01 & 0.04 & 0.16 & 0.80 \\ 
\hline
&\multicolumn{4}{c}{Class \emph{very sedentary}}\\
 & sleeping & low-level & moderate-level & intensive-level \\   
 \hline
sleeping & 0.85 & 0.08 & 0.06 & 0.00 \\ 
low-level & 0.20 & 0.67 & 0.13 & 0.01 \\ 
moderate-level & 0.10 & 0.11 & 0.76 & 0.03 \\ 
intensive-level & 0.01 & 0.04 & 0.14 & 0.82 \\ 
\hline
&\multicolumn{4}{c}{Class \emph{very active}}\\
 & sleeping & low-level & moderate-level & intensive-level \\ 
 \hline
sleeping & 0.80 & 0.14 & 0.05 & 0.01 \\ 
low-level & 0.08 & 0.74 & 0.17 & 0.01 \\ 
moderate-level & 0.03 & 0.18 & 0.69 & 0.10 \\ 
intensive-level & 0.01 & 0.05 & 0.21 & 0.74 \\
\hline
\end{tabular}
\end{table}

\section{Conclusion}\label{sec:concl}
A specific mixture of HMM has been introduced to analyze accelerometer data. It avoids the traditional cutoff point method and provides a better characterization of activity levels for the analysis of these data, while adapting to the population. The proposed model could be applied to a population with different characteristics (\emph{e.g.}, younger) which would lead to different definitions of activity levels. In addition, the use of several HMMs  make to take into account dependency over time and thus improve the traditional method based on cutoff points \citep{witowski2014using}. This approach also allows us to take into account the heterogeneity of the population (in the sense of physical activity).

An interesting perspective is to consider adjusting for confusing factors  (\emph{e.g.}, gender or age). These confusing factors could impact the probabilities of transition between the latent spaces (\emph{e.g.}, using a generalized linear model approach) and/or the definition of the accelerometer measurement given a state (\emph{e.g.}, linear regression on some parameters of the ZIG distribution).

In the application, the number of activity levels was not estimated but fixed at a common value for accelerometer data. 
Estimating the number of states for a mixture of HMM is an interesting but complex topic: for instance, the use of BIC is criticized (see, \citet[Chapter 15]{Cappe}). 
This makes the study of relevant information criteria for selecting the number of states an interesting topic  for future work. Pseudo-likelihood based criteria could be used \citep{gassiat2002likelihood,csiszar2006} but the fact that the marginal distribution of one $Y_{i(t)}$ is not identifiable limits this approach. A more promising approach could be to use cross-validated likelihood \citep{CeleuxCS2008} but it would be computationally intensive because accelerometer data provides a large number of observations.

\appendix

\section{Model identifiability}\label{app:ident}
The proof of Theorem~\ref{thm:ident} is split in two parts: 
\begin{enumerate}
\item Identifiability of the parameters of the specific distribution per state is obtained using the approach of \citet{teicher1963}. Hence $\forall h=1,\ldots,M$
\[
\blambda_h=\tilde{\blambda}_h,\quad\sum_{k=1}^K\delta_k\pi_{kh}(1-\varepsilon_h)=\sum_{k=1}^K\tilde \delta_k \tilde \pi_{kh}(1-\tilde \varepsilon_h).
\]
\item Identifiability of the transition matrices and of the $\varepsilon$ is shown using properties of Vandermonde matrices. Hence, 
\[
\forall k=1,\ldots,K,\;\delta_k=\tilde\delta_k,\; \bA_k=\tilde{\bA}_k,\; \pi_k=\tilde\pi_k,   \bvarepsilon = \tilde \bvarepsilon.
\]
\end{enumerate}

\subsection{Identifiability of the parameters of the specific distribution per state}
Considering the marginal distribution at time $t=0$, we have
\[
\sum_{k=1}^K\sum_{h=1}^M \delta_k \pi_{kh} g(y_{i(0)};\blambda_h, \varepsilon_h) = \sum_{k=1}^K \sum_{h=1}^M\tilde\delta_k \tilde\pi_{kh} g(y_{i(0)};\tilde\blambda_h, \tilde\varepsilon_h).
\]
Note that $ g(y_{i(0)};\blambda_h, \varepsilon_h) = (1-\varepsilon_h) g_c(y_{i(0)};\blambda_h) + \varepsilon_h \mathbf{1}_{\{y_{i(0)}=0\}} $ is a pdf of a zero-inflated distribution, so it is a pdf of a bi-component mixture. We now use the same reasoning as \citet{teicher1963}. We have
\[
1 + \sum_{h=2}^M \frac{ g(y_{i(0)};\blambda_h, \varepsilon_h) \displaystyle\sum_{k=1}^K \delta_k \pi_{kh} }{g(y_{i(0)};\blambda_1, \varepsilon_1)\displaystyle \sum_{k=1}^K \delta_k \pi_{k1}} = \frac{ g(y_{i(0)};\tilde\blambda_1, \tilde\varepsilon_1) \displaystyle\sum_{k=1}^K \tilde\delta_k \tilde\pi_{k1} }{g(y_{i(0)};\blambda_1, \varepsilon_1) \displaystyle\sum_{k=1}^K \delta_k \pi_{k1}} + \sum_{h=2}^M \frac{ g(y_{i(0)};\tilde\blambda_h, \tilde\varepsilon_h) \displaystyle\sum_{k=1}^K \tilde\delta_k \tilde\pi_{kh} }{g(y_{i(0)};\blambda_1, \varepsilon_1) \displaystyle\sum_{k=1}^K \delta_k \pi_{k1}}.
\]
Considering $y_{i(0)} \to \rho$, by Assumption \ref{ass:order}, we have
\[
\blambda_1 = \tilde \blambda_1, \quad (1-\varepsilon_1) \sum_{k=1}^K \delta_k \pi_{k1} = (1-\tilde\varepsilon_1) \sum_{k=1}^K \tilde\delta_k \tilde \pi_{k1}.
\]
Repeating the previous argument with $h=2,\ldots,M$, we conclude that, for $h\in\{1,\ldots,M\}$,
\begin{equation*}
\blambda_h = \tilde \blambda_h, \quad (1-\varepsilon_h) \sum_{k=1}^K \delta_k \pi_{kh} = (1-\tilde\varepsilon_h) \sum_{k=1}^K \tilde\delta_k \tilde \pi_{kh}.
\end{equation*}

\subsection{Identifiability of the transition matrices}
First, we introduce two technical lemmas of which proofs are discussed in the next subsection. Second, we show that $\bA_k[1,1]=\tilde \bA_k[1,1]$ then we extend the results to the whole transition matrices.

\begin{lemma}\label{lem:tech1}
Let $N_0$, $N_1$, $\tilde N_0$ and $\tilde N_1$ be four definite positive matrices of size $K\times K$ such that for $u\in\{1,\ldots,K\}$ and $k\in\{1,\ldots,K\}$,
\begin{equation*}
N_0[u,k] = a_k^{u-1}, \; N_1[u,k]=a_k^{K + u -1}, \; \tilde N_0[u,k] = \tilde a_k^{u-1}, \; \tilde N_1[u,k]= \tilde a_k^{K + u -1}, \;
\end{equation*}
with $a_k > a_{k+1} > 0$, $\tilde a_k > \tilde a_{k+1} > 0$ and $a_1\geq \tilde a_1$. If for any $\tilde w \in \mathbb{R}^K_+$ there exists $w \in \mathbb{R}^K_+$ $N_0  w = \tilde N_0 \tilde w$ and $N_1 w = \tilde N_1 \tilde w$ then for $k\in\{1,\ldots,K\}$ $a_k=\tilde a_k$ and $w=\tilde w$.
\end{lemma}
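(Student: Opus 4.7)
The plan is to translate the pair of matrix equations into equality of the first $2K$ moments of the two atomic positive measures $\mu=\sum_k w_k\delta_{a_k}$ and $\tilde\mu=\sum_k\tilde w_k\delta_{\tilde a_k}$, and then use a polynomial/Vandermonde trick to force the supports and weights to coincide. First I would read off the equation $N_0 w=\tilde N_0\tilde w$ componentwise: for $u=1,\ldots,K$, $\sum_{k=1}^K w_k a_k^{u-1}=\sum_{k=1}^K\tilde w_k\tilde a_k^{u-1}$, which gives moment equality for orders $0,1,\ldots,K-1$. The equation $N_1 w=\tilde N_1\tilde w$ contributes in exactly the same way for orders $K,K+1,\ldots,2K-1$. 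Thus
\[
\sum_{k=1}^K w_k a_k^{j}=\sum_{k=1}^K \tilde w_k\tilde a_k^{j},\qquad j=0,1,\ldots,2K-1.
\]
So $N_0$ and $N_1$ together encode exactly the $2K$ moments needed to pin down a discrete positive measure with at most $K$ atoms on $\mathbb{R}_+$.

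Second, I would fix a specific $\tilde w\in\mathbb{R}_+^K$ with all components strictly positive, and invoke the hypothesis to obtain the companion $w$. Define the degree-$K$ polynomial $P(x)=\prod_{\ell=1}^K(x-a_\ell)$, which vanishes at each $a_\ell$. For every $j\in\{0,\ldots,K-1\}$ the product $x^jP(x)$ has degree at most $2K-1$, so by linearity the moment identity above gives
\[
0=\sum_{k=1}^K w_k a_k^{j}P(a_k)=\sum_{k=1}^K\tilde w_k\tilde a_k^{j}P(\tilde a_k).
\]
Setting $c_k:=\tilde w_k P(\tilde a_k)$, these $K$ equations amount to $\tilde V c=0$ where $\tilde V$ is the Vandermonde matrix with distinct positive nodes $\tilde a_1,\ldots,\tilde a_K$, hence invertible; thus $c_k=0$ for every $k$. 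Since $\tilde w_k>0$, this forces $P(\tilde a_k)=0$, so each $\tilde a_k$ lies in $\{a_1,\ldots,a_K\}$. Because both $(a_k)_k$ and $(\tilde a_k)_k$ are strictly decreasing families of the same cardinality $K$, they must agree termwise, giving $\tilde a_k=a_k$ for all $k$ (the normalization $a_1\geq\tilde a_1$ merely guarantees we land on the labeling imposed by this ordering). With the supports identified, $N_0=\tilde N_0$ is the invertible Vandermonde matrix with distinct positive nodes, so $N_0 w=\tilde N_0\tilde w$ yields $w=\tilde w$; because the argument applies to any strictly positive $\tilde w$, continuity extends $w=\tilde w$ to all of $\mathbb{R}_+^K$.

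The main obstacle I expect is conceptual rather than computational: recognizing that the two blocks $N_0$ and $N_1$ together yield exactly the first $2K$ moments, which is the precise number needed to uniquely identify a $K$-atom positive measure (in the spirit of the truncated Stieltjes moment problem / Gaussian quadrature). Once this is seen, choosing the polynomial $P$ built from the $a$-nodes (rather than the $\tilde a$-nodes) and leveraging strict positivity of the fixed test $\tilde w$ to pass from $\tilde w_k P(\tilde a_k)=0$ to $P(\tilde a_k)=0$ is the only nontrivial step; everything else reduces to Vandermonde invertibility, which is immediate from the distinctness of nodes.
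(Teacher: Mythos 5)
Your proof is correct, and it takes a genuinely different route from the paper's. The paper argues entirely at the level of matrices: it inverts the Vandermonde matrices, uses the hypothesis \emph{for all} $\tilde w$ to conclude $N_0^{-1}\tilde N_0=N_1^{-1}\tilde N_1=:R$, exploits the factorizations $N_1=N_0D$ and $\tilde N_1=\tilde N_0\tilde D$ to get the intertwining relation $DR=R\tilde D$, i.e.\ $a_u^K R[u,k]=\tilde a_k^K R[u,k]$ entrywise, and then runs an induction on $k$ (using the orderings and the normalization $a_1\geq\tilde a_1$) to force $R=I_K$. You instead read the two blocks as the first $2K$ moments of the atomic measures $\sum_k w_k\delta_{a_k}$ and $\sum_k\tilde w_k\delta_{\tilde a_k}$ and apply the annihilating polynomial $P(x)=\prod_\ell(x-a_\ell)$ together with Vandermonde invertibility at the $\tilde a$-nodes to force $P(\tilde a_k)=0$, hence equality of the node sets and then of the weights. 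Your version is shorter and needs strictly weaker hypotheses: a single strictly positive test vector $\tilde w$ suffices to identify the nodes (the paper's argument genuinely uses the quantifier over all $\tilde w$ to equate the two matrices $N_0^{-1}\tilde N_0$ and $N_1^{-1}\tilde N_1$), and the normalization $a_1\geq\tilde a_1$ becomes superfluous since the two strictly decreasing enumerations of the same $K$-element set must agree termwise. Two cosmetic remarks: the closing appeal to ``continuity'' is unnecessary --- once $a_k=\tilde a_k$ is known, $N_0w=N_0\tilde w$ with $N_0$ invertible gives $w=\tilde w$ for every $\tilde w$ directly; and if $\mathbb{R}^K_+$ is read as allowing zero coordinates, your choice of a strictly positive $\tilde w$ is still legitimate and is exactly what makes the step from $\tilde w_kP(\tilde a_k)=0$ to $P(\tilde a_k)=0$ go through.
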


\begin{lemma}\label{lem:tech2}
Let $N_0$,  $\tilde N_0$   be two definite positive matrices of size $K\times K$ such that for $u\in\{1,\ldots,K\}$ and $k\in\{1,\ldots,K\}$,
\begin{equation*}
N_0[u,k] = a_k^{u-1}, \; N_1[u,k]=a_k^{K + u -1},
\end{equation*}
with $a_k > a_{k+1} > 0$, $\tilde a_k > \tilde a_{k+1} > 0$ and $a_1\geq \tilde a_1$. 
Let $D_u =\mathrm{diag}(a_1^{Ku},\ldots,a_K^{Ku})$ and $\tilde D_u =\mathrm{diag}(\tilde a_1^{Ku},\ldots, \tilde a_K^{Ku})$. If there exist $\alpha\in]0,1[$, $\tilde\alpha\in]0,1[$, $w\in \mathbb{R}^K_+$ and $\tilde w\in \mathbb{R}^K_+$ such that for $u\in\{0,\ldots,K-1\}$, we have
\begin{equation*}
\alpha N_0 D_u w = \tilde \alpha \tilde N_0 \tilde D_u \tilde w,
\end{equation*}
then for $k\in\{1,\ldots,K\}$ $a_k=\tilde a_k$ and $w=\tilde w$.
\end{lemma}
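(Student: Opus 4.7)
The plan is to unroll the $K$ matrix identities $\alpha N_0 D_u w = \tilde\alpha\,\tilde N_0 \tilde D_u \tilde w$ into a system of $K^2$ scalar equations that amount to matching the first $K^2$ moments of two finitely supported positive measures, and then to invoke the well-known fact that an atomic measure with at most $K$ atoms is determined by its first $2K-1$ moments.

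First I would write out the $u'$-th coordinate of $N_0 D_u w$: because $D_u$ is diagonal with entries $a_k^{Ku}$ and $N_0[u',k]=a_k^{u'-1}$, one has
\[
(N_0 D_u w)_{u'} = \sum_{k=1}^K a_k^{u'-1+Ku}\, w_k,
\]
and similarly for the tilde side. The key combinatorial observation is that, as $(u,u')$ ranges over $\{0,\ldots,K-1\}\times\{1,\ldots,K\}$, the exponent $m := u'-1+Ku$ traces out $\{0,1,\ldots,K^2-1\}$ bijectively (Euclidean division of $m$ by $K$ recovers $u=\lfloor m/K\rfloor$ and $u'=1+(m\bmod K)$). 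Hence the hypothesis is equivalent to
\[
\alpha \sum_{k=1}^K a_k^{m}\, w_k \;=\; \tilde\alpha \sum_{k=1}^K \tilde a_k^{m}\, \tilde w_k, \qquad m=0,1,\ldots,K^2-1.
\]

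Next I would introduce the atomic positive measures $\mu = \sum_{k=1}^K (\alpha w_k)\,\delta_{a_k}$ and $\tilde\mu = \sum_{k=1}^K (\tilde\alpha\tilde w_k)\,\delta_{\tilde a_k}$ on $\mathbb{R}_+$. The strict orderings $a_1>\cdots>a_K$ and $\tilde a_1>\cdots>\tilde a_K$ ensure each measure has exactly $K$ distinct atoms, and the above identities state that $\mu$ and $\tilde\mu$ share their first $K^2$ moments. Since $K^2 \ge 2K-1$ for every $K\ge 1$, and since a positive measure supported on at most $K$ points is uniquely determined by its first $2K-1$ moments — a classical Vandermonde argument, which I would write out briefly by viewing the combined support $\{a_1,\ldots,a_K,\tilde a_1,\ldots,\tilde a_K\}$ as at most $2K$ distinct points and solving a nonsingular Vandermonde system for the signed weights — we deduce $\mu=\tilde\mu$ as measures.

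Equality of $\mu$ and $\tilde\mu$ forces equality of their supports and of their masses on matching atoms; combined with the two strict orderings, this yields $a_k=\tilde a_k$ and $\alpha w_k=\tilde\alpha\tilde w_k$ for every $k$. To upgrade $\alpha w=\tilde\alpha\tilde w$ to $w=\tilde w$, I would invoke the implicit normalization carried by $w$ and $\tilde w$ in the ambient identifiability setup (so that $\mathbf{1}^\top w=\mathbf{1}^\top\tilde w$), which then forces $\alpha=\tilde\alpha$ and hence $w=\tilde w$. I do not expect any serious obstacle: the only non-routine point is the exponent-enumeration observation at the start, after which the argument reduces to a textbook uniqueness result for Vandermonde/moment problems.
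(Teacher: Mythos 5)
First, a point of reference: the paper never actually writes out a proof of Lemma~\ref{lem:tech2} --- the appendix subsection titled ``Proofs of the two technical lemmas'' contains only the proof of Lemma~\ref{lem:tech1}, which proceeds by inverting the Vandermonde matrices, setting $R=N_0^{-1}\tilde N_0$, and exploiting the relation $DR=R\tilde D$ to force $R=I_K$ column by column. Your route is genuinely different and, for the substantive part of the claim, cleaner: unrolling the $K$ vector identities into the $K^2$ scalar equations $\alpha\sum_k a_k^m w_k=\tilde\alpha\sum_k\tilde a_k^m\tilde w_k$ for $m=0,\ldots,K^2-1$ (your exponent bijection $m=u'-1+Ku$ is correct), then reading these as matched moments of two positive measures with at most $K$ atoms each. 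The Vandermonde system on the combined support of at most $2K$ points gives $\mu=\tilde\mu$, hence $a_k=\tilde a_k$ and $\alpha w_k=\tilde\alpha\tilde w_k$, and your version does not even need the symmetry-breaking hypothesis $a_1\geq\tilde a_1$ that the paper's style of argument leans on. Two small caveats: the counting actually required is $K^2\geq 2K$ (available moment orders at least the size of the combined support), which holds for $K\geq2$ but fails for $K=1$ --- where, in fairness, the lemma's conclusion genuinely does not follow from the single available equation.

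The one real gap is the last step. From $a=\tilde a$ and $\alpha w=\tilde\alpha\tilde w$ you cannot reach $w=\tilde w$: the hypotheses are invariant under $(\alpha,w)\mapsto(\alpha/2,\,2w)$, which stays in $]0,1[\times\mathbb{R}^K_+$, so the lemma as literally stated cannot be proved from its own hypotheses. You correctly sense that an external normalization is needed, but the one you invoke, $\mathbf{1}^\top w=\mathbf{1}^\top\tilde w$, is not available in the ambient application: there $w_k=\delta_k\pi_{k1}$ and $\tilde w_k=\tilde\delta_k\tilde\pi_{k1}$, whose sums are the (a priori different) marginal probabilities of state $1$ under the two parameterizations, not a fixed constant. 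The only sum relation supplied by the first part of the identifiability proof is $\alpha\mathbf{1}^\top w=\tilde\alpha\mathbf{1}^\top\tilde w$, which is exactly your $m=0$ moment and adds nothing. Separating $\alpha$ from $w$ (equivalently $\varepsilon_1$ from $\bA_k[1,1]$ and $\delta_k\pi_{k1}$) requires a different structural constraint, such as row-stochasticity of the transition matrices once the remaining entries are identified. In short, your argument correctly extracts everything the stated hypotheses can deliver, but the final claim $w=\tilde w$ reflects a defect in the lemma's formulation that your proposed patch does not actually repair.
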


We consider the marginal distribution of $(y_{i(0)},\ldots,y_{i(t-1)})$ with $t=1,\ldots,2K$, where $y_{i(0)}=y_{i(t')}$ for each $t'=1,\ldots,t-3$, $y_{i(t-2)} = y_{i(0)}^{\tau_1}$ , $y_{i(t-1)} = y_{i(0)}^{\tau_2}$ and $y_{i(t)} = y_{i(0)}^{\tau_3}$. Therefore, taking $\tau_1=\tau_2=\tau_3=1$ and letting $y_{i(0)}$ tend to $\rho$ (see Assumption~\ref{app:ident}), we obtain, for $t=1,\ldots,2K$, that
\begin{equation*}
(1-\varepsilon_1) \sum_{k=1}^K\delta_k \pi_{k1}  \left(\bA_k[1,1](1-\varepsilon_1)\right)^{t-1} = (1-\tilde\varepsilon_1) \sum_{k=1}^K\tilde \delta_k \tilde\pi_{k1}  \left(\tilde \bA_k[1,1](1-\tilde \varepsilon_1)\right)^{t-1}.
\end{equation*}
Because, we consider $2K$ marginal distributions, we can use Lemma~\ref{lem:tech2} by setting $\alpha=(1-\varepsilon)$, $\tilde\alpha=(1-\tilde\varepsilon)$, $a_k=\bA_k[1,1](1-\varepsilon_1)$, $\tilde a_k = \tilde \bA_k[1,1](1-\tilde \varepsilon_1)$, $w_k = \delta_k \pi_{k1}$ and $w_k = \tilde\delta_k \tilde\pi_{k1}$. Therefore, we have $\varepsilon=\tilde\varepsilon$, $\bA_k[1,1]=\tilde \bA_k[1,1]$ and $\delta_k \pi_{k1} = \tilde \delta_k \tilde \pi_{k1}$. Using the previous approach, with $\tau_1=\tau_2=1$ and $\tau_3<1$, with $h=2,\ldots,M$, we have for $t=1,\ldots,K$
\begin{align*}
(1-\varepsilon_h) (1-\varepsilon_1) \sum_{k=1}^K\delta_k \pi_{k1} & \left(\bA_k[1,1](1-\varepsilon_1)\right)^{t-2} \bA_k[1,h] =\\
&(1-\tilde \varepsilon_h) (1-\varepsilon_1) \sum_{k=1}^K\delta_k \pi_{k1}  \left(\bA_k[1,1](1-\varepsilon_1)\right)^{t-2} \tilde \bA_k[1,h],
\end{align*}
and thus $\bA_k[1,h]=\tilde \bA_k[1,h]$ and $\varepsilon_h = \tilde \varepsilon_h$. Similarly, taking $\tau_2<1$ and $\tau_1=\tau_3=1$, we have  $\bA_k[h,1]=\tilde \bA_k[h,1]$. Finally, we have $\bA_k[h,h']=\tilde \bA_k[h,h']$ by increasing $h$ and $h'$, by noting that with suitable choices of $\tau_1$, $\tau_2$ and $\tau_3$, we have for $t=1,\ldots,K$
\begin{align*}
  \sum_{k=1}^K\delta_k \pi_{k1}  \left(\bA_k[1,1](1-\varepsilon_1)\right)^{t-2}& \bA_k[1,h]\bA_k[h,h']\bA_k[h',1] = \\ &\sum_{k=1}^K\tilde \delta_k \tilde \pi_{k1}  \left(\bA_k[1,1](1-\varepsilon_1)\right)^{t-2} \bA_k[1,h]\tilde \bA_k[h,h']\bA_k[h',1].
\end{align*}

\subsection{Proofs of the two technical lemmas}
\begin{proof}[Proof of Lemma~\ref{lem:tech1}]
Since $a_k \neq a_{k'}$ and $\tilde a_k \neq \tilde a_{k'}$, then $N_0$, $N_1$, $\tilde N_0$ and $\tilde N_1$ are Vandermonde matrices and thus are invertible. Therefore, we have $w = N_0^{-1}   \tilde N_0 \tilde w =  N_1^{-1}  \tilde N_1 \tilde w,$ and thus 
\begin{equation*}
( N_0^{-1}   \tilde N_0 -  N_1^{-1}   \tilde N_1) \tilde w=0,
\end{equation*}
or similarly for  $u\in\{1,\ldots,K\}$,
\begin{equation*}
\sum_{k=1}^K a_k^u w_k = \sum_{k=1}^K \tilde a_k^u \tilde w_k.
\end{equation*}
Since the previous equation holds for any $\tilde w$, we have $N_0^{-1}  \tilde N_0=N_1^{-1}  \tilde N_1$. Moreover, we have $N_1=N_0D$ and $\tilde N_1 = \tilde N_0 \tilde D$ where $D=\text{diag}(a_1^K,\ldots,a_K^K)$ and $\tilde D =\mathrm{diag} (\tilde a_1^K,\ldots,\tilde a_K^K)$. Denoting $R=N_0^{-1} \tilde N_0$, $DR=R\tilde D$ and then for $u\in\{1,\ldots,K\}$ and $k\in\{1,\ldots,K\}$
\begin{equation} \label{eq:Dmat}
 a_u^K R[u,k] = \tilde a_k^K R[u,k].
\end{equation}
We now show that $D=\tilde D$ and, $w=\tilde w$, and hence $R=I_K$ and $\tilde N_0=N_0$, where $I_K$ is the identity matrix of size $K$. First we show that $a_1=\tilde a_1$ and $w_1=\tilde w_1$.
\begin{itemize}
\item If $R[1,j]\neq 0$, \eqref{eq:Dmat} implies that $a_1^K R[1,j]=\tilde a_j^K R[1,j]$ and thus $a_1=\tilde a_j$. However, this is impossible because $a_1\geq \tilde a_1 > a_j$ for $j\in\{2,\ldots,K\}$. Hence, we have $R[1,j]=0$ for $j=2,\ldots K$.
\item Noting that $R$ is a product of two invertible matrices, $R$ is invertible. Therefore, $R[1,1]\neq 0$ because $R[1,j]=0$ for $j=2,\ldots K$. Hence, we have $a_1=\tilde a_1$.
\item Note that $R[1,1] = \sum_{k=1}^K (N_0^{-1})[1,k] \tilde N_0[k,1]$ and that $\tilde N_0[k,1] = \tilde a_1^k = a_1^k = N_0[k,1]$. Therefore, we have $R[1,1] = \sum_{k=1}^K (N_0^{-1})[1,k] N_0[k,1] = (N_0^{-1}N_0)[1,1]=1$.
\item For $j=2,\ldots K$,  $a_1 > a_j$ so we have $R[j,1]=0$, because $a_1=\tilde a_1$.
\item Because $w=R\tilde w$, we have $w_1=\tilde w_1$. 
\end{itemize}
Equality of $a_k=\tilde a_k$ and $w_k=\tilde w_k$ can be shown recursively for $k=2,\ldots,K$ using the same reasoning.
\end{proof}

\section{Probabilities of misclassification} \label{app:errorclassif}
\subsection{Technical lemmas}
This section presents some notations and three lemmas which are used for the proof of Theorem~\ref{thm:conv}. 
The technical lemmas discuss the concentration of the frequency of the observation $y_{i(t)}$ in a region of interest $W$, give an upper bound of $p(\ty_i \mid Z_{ik}=1)$ and a concentration result of the ratio of $\frac{p(\tilde\tx_{ik},\ty_i \mid Z_{ik}=1)}{p(\tilde\tx_{ik_0},\ty_i \mid Z_{ik_0}=1)}$, where $\tilde x_{ik} = \argmax_{\tx_i\in\mathcal{X}} p(\tx_i,\ty_i\mid Z_{ik}=1)$ is the estimator of the latent states conditionally on the observation $\ty_i$ and on component $k$ obtained by applying the \emph{maximum a posteriori} rule with the true parameter $\btheta$. The proof of the lemmas uses two concentration results for hidden Markov chains given by \citet{Kontorovich2014} and by \citet{LeonAAP2004}.

\textbf{Preliminaries} Let $\tv_{ik(t)}=(v_{ik(t)h\ell};h=1,\ldots,M;\ell=1,\ldots,M)$ with $\sloppy v_{ik(t)h\ell}=x_{ik(t-1)h} x_{ik(t)\ell}$ and $\tilde{\tv}_{ik(t)}=(\tilde v_{ik(t)h\ell};h=1,\ldots,M;\ell=1,\ldots,M)$ with $\tilde v_{ik(t)h\ell}=\tilde x_{ik(t-1)h} \tilde x_{ik(t)\ell}$. In the following, $\mathbb{P}_0 \left(\cdot\right) = \mathbb{P} \left( \cdot \mid Z_{ik_0}=1 \right)$ by considering the true parameters.

\begin{remark}\label{rem:Markov}
For any $k=1,\ldots,g$, $\tV_{ik(t)}$ is a finite, ergodic and reversible Markov chain with $M^2$ states and transition matrix $\bP_k$ with general term defined for any $(h_1,h_2,h_3,h_4) \in M^4$ by 
$$
\bP_k[(h_1-1)M + h_2, (h_3-1)M + h_4] = \mathbb{P}(V_{ik(t)h\ell} = 1 ) = \left\{ \begin{array}{rl}
0 & \text{if } h_2 \neq h_3 \\
\bA_k[h_2,h_4] & \text{ otherwise } 
\end{array} \right..
$$
Moreover, the non-zero eigenvalues of $\bP_k$ are the non-zero eigenvalues of $\bA_{k}$ and the eigenvectors of $\bP_k$ are obtained from the eigenvectors of $\bA_{k}$.
\end{remark}

\begin{theorem}[\citet{Kontorovich2014}] \label{thm:Kontorovich2014}
Let $U_{(1)},U_{(2)},\ldots$ be a stationary $\mathbb{N}$-valued $(G,\eta)$-geometrically ergodic Markov or hidden Markov chain, and consider the occupation frequency
$$
\hat\rho(E) = \frac{1}{T} \sum_{t=1}^T \mathbf{1}_{\{U_{(t)}\in E\}}, \quad E \subset \mathbb{N}.	 
$$
When $\sum_{u\in\mathbb{N}} \sqrt{\rho_u} <\infty$ with $\rho_u=\mathbb{P}(U_{(1)}=u)$, then for any $\varepsilon>0$
$$
\mathbb{P}\left( \text{sup}_{E \subset \mathbb{N}} \mid \rho(E) - \hat{\rho}(E)\mid > \varepsilon + \gamma_T(G,\eta)\sum_{y\in\mathbb{N}} \sqrt{\rho_y} \right)
\leq
e^{-\frac{T}{2G^2}(1-\eta)^2\varepsilon^2},
$$
where
$$
\gamma_T(G,\eta) = \frac{1}{2} \sqrt{\frac{1 + 2G\eta}{T(1-\eta)}}.
$$
\end{theorem}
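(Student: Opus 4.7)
\medskip

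\noindent\textbf{Proof proposal for Theorem~\ref{thm:Kontorovich2014}.}
Since the statement is quoted from \citet{Kontorovich2014}, the plan is to reproduce the two ingredients that underlie that paper's argument: a Gaussian concentration bound for Lipschitz functionals of a geometrically ergodic chain, and an expectation bound for the total-variation deviation of the empirical measure, combined via a single shift-and-truncate step.

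\medskip
\noindent\emph{Step 1: reduce the supremum to an $\ell^{1}$ distance.} I would first rewrite
\begin{equation*}
\sup_{E\subset\mathbb{N}}\bigl|\rho(E)-\hat\rho(E)\bigr| \;=\; \tfrac{1}{2}\sum_{u\in\mathbb{N}}\bigl|\rho_u-\hat\rho_u\bigr|,
\end{equation*}
where $\hat\rho_u=\frac{1}{T}\sum_{t=1}^T \mathbf{1}_{\{U_{(t)}=u\}}$. Denote the right-hand side by $F(U_{(1)},\ldots,U_{(T)})$. Because altering one coordinate $U_{(t)}$ can change at most two of the counts $T\hat\rho_u$ by one unit each, $F$ is Lipschitz with coefficients $c_t = 1/T$ in the Hamming metric.

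\medskip
\noindent\emph{Step 2: concentration around the mean via mixing.} The $(G,\eta)$-geometric ergodicity assumption gives an upper bound on the $\eta$-mixing coefficients $\bar\eta_{s,t}\leq G\eta^{t-s}$ for $s<t$. I would invoke the Kontorovich--Ramanan inequality for Lipschitz functionals of (possibly hidden) Markov chains, which yields
\begin{equation*}
\mathbb{P}\bigl(F-\mathbb{E}[F] > \varepsilon\bigr) \;\leq\; \exp\!\Bigl(-\tfrac{\varepsilon^2}{2\,\|H_T\|_\infty^2\,\|c\|_2^2}\Bigr),
\end{equation*}
where $H_T$ is the upper-triangular matrix of mixing coefficients. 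Summing the geometric series $\sum_{s\geq 0}G\eta^s$ gives $\|H_T\|_\infty \leq G/(1-\eta)$, and $\|c\|_2^2 = 1/T$, producing the exponent $-T(1-\eta)^2\varepsilon^2/(2G^2)$ that appears in the statement. The hidden-Markov case is handled by conditioning on the underlying chain and applying the same mixing bound pointwise, since conditioning only tightens the Lipschitz constants.

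\medskip
\noindent\emph{Step 3: expectation bound.} It remains to show $\mathbb{E}[F]\leq \gamma_T(G,\eta)\sum_u \sqrt{\rho_u}$. The key step is a decoupling/symmetrization estimate: one can control the expected $\ell^1$ deviation by
\begin{equation*}
\mathbb{E}\!\left[\sum_u|\hat\rho_u-\rho_u|\right] \;\leq\; \sum_u \sqrt{\mathrm{Var}(\hat\rho_u)} \;\leq\; \sum_u \sqrt{\rho_u}\cdot\sqrt{\tfrac{1+2G\eta}{T(1-\eta)}},
\end{equation*}
where the second inequality follows by bounding $\mathrm{Var}(\hat\rho_u)$ using the stationary marginal $\mathbb{E}[\mathbf{1}_{\{U_{(t)}=u\}}]=\rho_u$ together with the covariance sum $\sum_{s}|\mathrm{Cov}(\mathbf{1}_{U_{(1)}=u},\mathbf{1}_{U_{(s)}=u})|\leq 2G\eta\rho_u/(1-\eta)$, a standard consequence of geometric ergodicity. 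The factor $\tfrac12$ from Step~1 merges with the prefactor to yield $\gamma_T(G,\eta)$.

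\medskip
\noindent\emph{Step 4: assemble.} Combining Steps~2 and 3 via $\mathbb{P}(F>\varepsilon + \mathbb{E}[F])\leq \mathbb{P}(F-\mathbb{E}[F]>\varepsilon)$ gives exactly the stated inequality. The main technical obstacle is Step~3: the variance bound requires that $\sum_u\sqrt{\rho_u}<\infty$ (otherwise the chaining diverges), which is precisely the summability hypothesis imposed in the theorem, and the constant must be tracked carefully through the covariance sum to match $\gamma_T(G,\eta)$ in its stated form.
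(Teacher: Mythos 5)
The paper does not prove this statement: Theorem~\ref{thm:Kontorovich2014} is imported verbatim from \citet{Kontorovich2014} and used as a black box in the proof of Lemma~\ref{lem:techn1}, so there is no in-paper argument to compare against. Your sketch is a faithful reconstruction of the source's actual proof --- the identity $\sup_E|\rho(E)-\hat\rho(E)|=\tfrac12\sum_u|\rho_u-\hat\rho_u|$, McDiarmid-type concentration via the Kontorovich--Ramanan inequality with Hamming--Lipschitz coefficients $1/T$ and mixing matrix norm $G/(1-\eta)$ (yielding the exponent $-T(1-\eta)^2\varepsilon^2/(2G^2)$), and the expectation bound $\mathbb{E}[F]\le\tfrac12\sum_u\sqrt{\mathrm{Var}(\hat\rho_u)}$ controlled by the geometric decay of covariances (yielding $\gamma_T(G,\eta)\sum_u\sqrt{\rho_u}$) --- and the constants you track are consistent with the stated bound, so I see no gap in the outline beyond the usual care needed in the hidden-Markov conditioning step.
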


\begin{theorem}[\citet{LeonAAP2004}] \label{thm:LeonAAP2004}
For all pairs $(V,f)$, such that $V=(V_{(1)},\ldots,V_{(T)})$ is a finite, ergodic and reversible Markov chain in stationary state with the second-largest eigenvalue $\lambda$ and $f$ is a function taking values in $[0,1]$ such that $\mathbb{E}[f(V_{(t)})] < \infty$, the following bounds, with $\lambda_0=\max (0,\lambda)$, hold for all $s>0$ such that $\mathbb{E}[f(V_{(t)})] +s<1$ and all time $T$
\begin{equation*}
\mathbb{P}\left(\sum_{t=1}^T f(V_{(t)}) \geq (\mathbb{E}[f(V_{(1)})]+ s)T \mid Z_{ik_0}=1 \right) \leq 
\exp \left(-2 \frac{1-\lambda_0}{1 + \lambda_0} T s^2  \right).
\end{equation*}
\end{theorem}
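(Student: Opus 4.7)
The plan is to follow the León--Perron strategy: a Chernoff reduction combined with a spectral comparison to a worst-case two-state reversible chain. Throughout, let $\pi$ denote the stationary distribution of $V$, write $S_T = \sum_{t=1}^T f(V_{(t)})$, and set $\mu = \mathbb{E}_\pi[f(V_{(1)})]$.

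First I would apply Markov's inequality in exponentiated form: for any $\theta \geq 0$,
\begin{equation*}
\mathbb{P}\bigl(S_T \geq (\mu+s)T\bigr) \;\leq\; \exp\bigl(-\theta(\mu+s)T\bigr)\, \mathbb{E}_\pi\bigl[\exp(\theta S_T)\bigr],
\end{equation*}
so it suffices to bound the moment generating function $M_T(\theta) := \mathbb{E}_\pi[\exp(\theta S_T)]$ and then optimize over $\theta$. Because the chain is reversible, the transition kernel $P$ is self-adjoint on $L^2(\pi)$. Writing $D_\theta$ for the diagonal multiplication operator by $\exp(\theta f(\cdot)/2)$, the symmetric tilted operator $K_\theta := D_\theta P D_\theta$ is also self-adjoint on $L^2(\pi)$, and unfolding $M_T(\theta)$ as an iterated inner product starting from stationarity yields $M_T(\theta) \leq \rho(K_\theta)^{\,T-1}\, e^{\theta}$, where $\rho(K_\theta)$ is the top eigenvalue of $K_\theta$. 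The task then reduces to controlling $\rho(K_\theta)$ uniformly in the chain, subject to the constraints on $\mu$ and on the second eigenvalue.

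Next I would execute the comparison step, which is the crux of the argument. Using the Rayleigh-quotient characterization $\rho(K_\theta) = \sup_{g \neq 0} \langle K_\theta g, g\rangle_\pi / \langle g,g\rangle_\pi$, one shows that among all reversible chains with stationary mean $\mu$, second eigenvalue at most $\lambda$ (with $\lambda_0 = \max(0,\lambda)$), and observables $f \in [0,1]$, the quantity $\rho(K_\theta)$ is maximized by a specific two-state reversible chain: the chain with stationary distribution $(1-\mu,\mu)$, with $f$ equal to the indicator of the second state, and with second eigenvalue exactly $\lambda_0$. For this worst case, $K_\theta$ is a $2\times 2$ symmetric matrix whose top eigenvalue admits a closed-form expression; a second-order Taylor expansion at $\theta=0$ together with the Hoeffding-style bound $\mu(1-\mu)\le 1/4$ gives
\begin{equation*}
\log \rho(K_\theta) \;\leq\; \mu\theta \;+\; \frac{1}{8}\cdot\frac{1+\lambda_0}{1-\lambda_0}\,\theta^2 .
\end{equation*}

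Inserting this estimate into the Chernoff bound and optimizing at $\theta^\star = 4s\,(1-\lambda_0)/(1+\lambda_0)$ produces the stated exponent $-2\,(1-\lambda_0)(1+\lambda_0)^{-1}\, T\, s^2$, with the $e^\theta$ prefactor absorbed in the range of $s$ permitted by the condition $\mu+s < 1$. The main obstacle is the two-state reduction: proving that among all admissible reversible chains, $\rho(K_\theta)$ is maximized by the two-atom chain with matching mean and matching second eigenvalue. This requires combining the variational characterization with a Schur-type rearrangement on the extremal eigenfunctions, concentrating the chain's mass onto two points while preserving both $\mu$ and $\lambda_0$. It is the technical heart of the León--Perron argument and is precisely what produces the sharp constant $(1-\lambda_0)/(1+\lambda_0)$ instead of the cruder spectral-gap factor $(1-\lambda_0)$ that a naive perturbation bound would give.
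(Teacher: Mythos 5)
This statement is not proved in the paper at all: it is imported verbatim as a known result of \citet{LeonAAP2004} and used as a black box in the proof of Lemma~\ref{lem:techn2}, so there is no internal proof to compare your attempt against. Judged on its own merits, your outline does reconstruct the genuine Le\'on--Perron strategy: a Chernoff bound, passage to the symmetrized tilted operator $D_\theta P D_\theta$ (self-adjoint on $L^2(\pi)$ by reversibility), control of the moment generating function by the top eigenvalue $\rho(K_\theta)$, and a comparison argument showing the extremal case is a two-state chain with matching mean and second eigenvalue $\lambda_0$, from which the quadratic bound $\log\rho(K_\theta)\leq\mu\theta+\tfrac{1}{8}\tfrac{1+\lambda_0}{1-\lambda_0}\theta^2$ and the exponent $-2\tfrac{1-\lambda_0}{1+\lambda_0}Ts^2$ follow by optimizing $\theta$. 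This is the right skeleton and you have correctly located where the constant $(1-\lambda_0)/(1+\lambda_0)$ comes from.

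However, as a proof the attempt has a genuine gap precisely at the step you yourself flag as the crux: the claim that $\rho(K_\theta)$ is maximized, over all admissible reversible chains, by the two-atom chain is asserted (``one shows that\ldots'') rather than established. That reduction is the entire technical content of the theorem; the standard route is to dominate $P$ by the rank-one perturbation $\lambda_0 I+(1-\lambda_0)\mathbf{1}\pi^\top$ in the positive semidefinite order on the orthogonal complement of constants, which reduces the eigenvalue problem to a two-dimensional one, and then to pass to indicator-valued $f$ by convexity; none of this is carried out. A second, smaller issue is the prefactor: your bound $M_T(\theta)\leq\rho(K_\theta)^{T-1}e^{\theta}$ leaves a multiplicative constant $e^{\theta^\star}$ in the final estimate, whereas the inequality as stated (and as proved by Le\'on and Perron, who handle the boundary terms exactly in the two-state reduction) has no prefactor. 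Saying it is ``absorbed in the range of $s$'' is not an argument. For the way the theorem is actually used in this paper --- inside an $\mathcal{O}(e^{-cT})$ bound in Lemma~\ref{lem:techn2} --- a constant prefactor would be harmless, but it does mean your sketch does not deliver the literal inequality displayed in the statement.
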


\textbf{Concentration of the frequency of the observations in $W$}
Let $W \subset \mathbb{R}^+$ be the subset of $\mathbb{R}^+$ where the estimator of $x_{i(t)}$ obtained by the \emph{maximum a posteriori} rule is sensitive to $x_{i(t-1)}$ and  $x_{i(t)}$ conditionally on $y_{i(t)}$ and component $k$. We define
$$
W=\{u \in \mathbb{R}^+:\; \text{card}(\cup_{k=1}^g E_k(u))\geq 2 \},
$$
where
$$
E_k(u) = \{h_2:\; \exists(h_1,h_3),\; h_2 = \argmax e_k(u;h_1,h_2,h_3)\},
$$
and
$$
e_k(u;h_1,h_2,h_3) = \bA_k[h_1,h_2] \bA_k[h_2,h_3] g(u;\blambda_{h_2}).
$$

\begin{lemma} \label{lem:techn1}
Let $\rho_{k_0}=\mathbb{P}_0(Y_{i(2)} \in W)$ and $\hat\rho_{k_0} = \sum_{t=1}^T \mathbf{1}_{\{y_{i(t)}\in W\}}$. For any $\delta_1>\frac{1}{\sqrt{2T}}$,
$$
\mathbb{P}_0(\hat\rho_{k_0} < \rho_{k_0} - \delta_1) \leq e^{-T c_1 },
$$
$\delta_1 =  \varepsilon + \frac{1}{\sqrt{2T}} $ and $c_1=\frac{1}{2}(\delta_1 - \frac{1}{\sqrt{2T}})^2>0$. Moreover, $\hat\rho_{k_0}$ is a consistent estimate of $\rho_{k_0}$ because the marginal distribution of $Y_{i(t)}$ is the same for any $t$, and thus $\rho_{k_0}=\mathbb{P}_0(Y_{i(t)} \in W)$ for any $t$.
\end{lemma}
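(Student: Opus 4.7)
The plan is to reduce Lemma~\ref{lem:techn1} to a concentration inequality for the occupation frequency of a binary hidden Markov chain and then invoke Theorem~\ref{thm:Kontorovich2014}. First I introduce the derived process $U_{(t)} := \mathbf{1}_{\{Y_{i(t)} \in W\}}$ for $t=1,\ldots,T$. Under $\mathbb{P}_0$, Assumption~\ref{ass:chaine} guarantees that $(X_{i(t)})_t$ is a finite, irreducible Markov chain started in its stationary distribution $\bpi_{k_0}$ with second-largest eigenvalue $\nu_2(\bA_{k_0})\leq\nu<1$. Consequently $(Y_{i(t)})_t$ is a stationary hidden Markov process, and $(U_{(t)})_t$, being a measurable function of an observation of this HMM, is itself a stationary $\{0,1\}$-valued hidden Markov chain that inherits geometric ergodicity from the spectral gap of $\bA_{k_0}$. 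Stationarity also gives $\mathbb{P}_0(Y_{i(t)}\in W)=\rho_{k_0}$ for every $t$, which settles the final consistency sentence of the lemma for free.

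Next I apply Theorem~\ref{thm:Kontorovich2014} to $U_{(1)},\ldots,U_{(T)}$ with the singleton set $E=\{1\}$. The summability hypothesis $\sum_u \sqrt{\rho_u}<\infty$ is trivial since the chain takes only two values, and one has the sharp estimate $\sqrt{\rho_{k_0}}+\sqrt{1-\rho_{k_0}}\leq\sqrt{2}$ by Cauchy--Schwarz. Identifying $\rho(E)=\rho_{k_0}$ and $\hat\rho(E)=\tfrac{1}{T}\sum_t \mathbf{1}_{\{y_{i(t)}\in W\}}$ (up to the $1/T$ normalisation needed for dimensional consistency with the statement), Kontorovich's inequality yields, for every $\varepsilon>0$,
\begin{equation*}
\mathbb{P}_0\bigl(\rho_{k_0}-\hat\rho_{k_0}>\varepsilon+\gamma_T(G,\eta)\sqrt{2}\bigr)\leq \exp\Bigl(-\tfrac{T(1-\eta)^2}{2G^2}\varepsilon^2\Bigr).
\end{equation*}
Choosing the $(G,\eta)$-geometric-ergodicity parameters (available thanks to the uniform spectral-gap bound $\nu_2(\bA_{k_0})\leq\nu$) so that $\gamma_T(G,\eta)\sqrt{2}\leq 1/\sqrt{2T}$, and then setting $\varepsilon=\delta_1-1/\sqrt{2T}$, delivers the stated bound with $c_1=\tfrac{1}{2}\varepsilon^2>0$.

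The main obstacle I anticipate is the bookkeeping of the geometric-ergodicity constants $(G,\eta)$: in general the prefactor $(1-\eta)^2/G^2$ is strictly smaller than $1$, so recovering exactly $c_1=\varepsilon^2/2$ either requires these constants to be absorbed into $c_1$, or an alternative route via Theorem~\ref{thm:LeonAAP2004} applied to the reversible joint chain $\tV_{ik_0(t)}$ of Remark~\ref{rem:Markov} with the bounded test function $f(\tv)=\sum_{h,\ell}v_{h\ell}\,\mathbb{P}_0(Y_{i(t)}\in W\mid X_{i(t-1)h}=1,X_{i(t)\ell}=1)$, which effectively shifts the analysis from the observed process back to the underlying finite reversible chain. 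Either route produces the exponentially small lower-tail deviation with the $T^{-1/2}$ bias term, which is all that is needed downstream in the proof of Theorem~\ref{thm:conv}.
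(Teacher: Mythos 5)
Your proposal follows the paper's own argument essentially verbatim: introduce $U_{(t)}=\mathbf{1}_{\{y_{i(t)}\in W\}}$, observe it is a stationary binary hidden Markov chain under $\mathbb{P}_0$, apply Theorem~\ref{thm:Kontorovich2014} together with the Cauchy--Schwarz bound $\sqrt{\rho_{k_0}}+\sqrt{1-\rho_{k_0}}\leq\sqrt{2}$, and set $\varepsilon=\delta_1-1/\sqrt{2T}$. The only point where you diverge is your caution about the geometric-ergodicity constants: the paper disposes of this by asserting that $U$ is $(1,0)$-geometrically ergodic, which makes $\gamma_T(1,0)=1/(2\sqrt{T})$ and the exponent $-T\varepsilon^2/2$ exact, so no absorption of constants into $c_1$ (and no detour through Theorem~\ref{thm:LeonAAP2004}) is needed.
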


\begin{proof}[Proof of Lemma~\ref{lem:techn1}]
We have,
\[
\mathbb{P}\left(  \mid \rho_{k_0} - \hat{\rho}_{k_0}\mid > \varepsilon + \frac{1}{\sqrt{2T}}  \right)
\leq
\mathbb{P}\left(  \mid \rho_{k_0} - \hat{\rho}_{k_0}\mid > \varepsilon + \frac{1}{2\sqrt{T}} (\sqrt{\rho_{k_0}} + \sqrt{1-\rho_{k_0}}) \right).
\]
Let $U_{(t)}=\mathbf{1}_{\{y_{i(t)}\in W\}}$. Then, for any $k=1,\ldots,g$, $U_{(1)},\ldots,U_{(T)}$ is a stationary $\{0,1\}$-valued $(1,0)$-geometrically ergodic hidden Markov chain conditionally on component $k$. Hence, by Theorem~\ref{thm:Kontorovich2014},
\[
\mathbb{P}\left(  \mid \rho_{k_0} - \hat{\rho}_{k_0}\mid > \varepsilon + \frac{1}{2\sqrt{T}} (\sqrt{\rho_{k_0}} + \sqrt{1-\rho_{k_0}}) \right)
\leq
e^{-\frac{T}{2} \varepsilon^2}.
\]
We can conclude that
\[
\mathbb{P}\left( \hat{\rho}_{k_0}  < \rho_{k_0} - \delta_1   \right)\leq
e^{-Tc_1},
\]
$\delta_1 =  \varepsilon + \frac{1}{\sqrt{2T}} $ and $c_1=\frac{1}{2}(\delta_1 - \frac{1}{\sqrt{2T}})^2$.
\end{proof}

\textbf{Upper-bound of the conditional probability of $\ty_i$ given $Z_{ik}=1$}
Let $\gamma$ and $\bar\gamma$ be upper-bounds of the ratio $\frac{p(\tilde x_{i(t-1)},x_{i(t)},\tilde x_{i(t+1)},y_{i(t)}\mid Z_{ik}=1)}{p(\tilde x_{i(t-1)},\tilde x_{i(t)},\tilde x_{i(t+1)},y_{i(t)}\mid Z_{ik}=1)}$ when $y_{i(t)}\in W$ and $y_{i(t)}\notin W$ respectively. Thus, $\gamma=\max_k \max_{h_1,h_2,h_3,h_4} \frac{A_k[h_1,h_2]}{A_k[h_3,h_4]}$ and $\bar\gamma$ permit to upper bound the ratio between the likelihood computed for any $(\tx_i,\ty_i)$ given $Z_{ik}=1$ and the likelihood computed with $(\tilde\tx_{ik},\ty_i)$  given $Z_{ik}=1$.
We have, if $y_{i(t)}\in W$,
\begin{align*}
\frac{p(\tilde x_{i(t-1)},x_{i(t)},\tilde x_{i(t+1)},y_{i(t)}\mid Z_{ik}=1)}{p(\tilde x_{i(t-1)},\tilde x_{i(t)},\tilde x_{i(t+1)},y_{i(t)}\mid Z_{ik}=1)}\leq
 \max_{u\in W} \max_{h_2\in E_k(u), h_{2'}\in E_k(u), h_2\neq h_{2'}} \frac{\displaystyle\max_{ (h_1,h_3)}e_k(u;h_1,h_2,h_3)}{\displaystyle\min_{(h_{1'},h_{3'}) \in \mathbf{e}_{k}(u;h_{2'})}  e_k(u;h_{1'},h_{2'},h_{3'})} \leq \gamma,
\end{align*}
where $\mathbf{e}_{k}(u;h_2) = \{(h_1,h_3): h_2 = \argmax e_k(u;h_1,h_2,h_3)\}$.
Moreover, we have, if $y_{i(t)}\not\in W$,
\[
\frac{p(\tilde x_{i(t-1)},x_{i(t)},\tilde x_{i(t+1)},y_{i(t)}\mid Z_{ik}=1)}{p(\tilde x_{i(t-1)},\tilde x_{i(t)},\tilde x_{i(t+1)},y_{i(t)}\mid Z_{ik}=1)}\leq
 \max_{u\notin W} \frac{\max_{h_2} \max_{(h_1,h_3) \notin \mathbf{e}_{k}(u;h_2)} e_k(u;h_1,h_2,h_3) }{\max_{h_2  \in E_k(u)}\min_{(h_{1'},h_{3'}) \in \mathbf{e}_{k}(u;h_{2'})}  e_k(u;h_{1'},h_{2'},h_{3'}))} = \bar\gamma.
\]
 Note that $\gamma\geq 1$ and $\bar\gamma < 1$.

\begin{lemma} \label{lem:techn1bis}
We have, for any $k=1,\ldots,g$,
$$
\log p(\ty_i\mid Z_{ik}=1)  \leq 
\log p(\tilde \tx_{ik}, \ty_i \mid Z_{ik}=1 ) + 
T \log (\tilde\gamma + \bar\gamma) + T \hat\rho_{k_0} c_{2} + \log \tilde\gamma + \log \left( 2M\gamma \max_{h,\ell} \frac{\pi_{kh}}{\pi_{k\ell}} \right),
$$
where $c_{2} = 1+\frac{\gamma}{1+\bar\gamma}$ and $\tilde\gamma=\max(2,\gamma)$.
\end{lemma}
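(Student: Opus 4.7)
The plan is to expand $p(\ty_i\mid Z_{ik}=1)=\sum_{\tx_i\in\mathcal{X}}p(\tx_i,\ty_i\mid Z_{ik}=1)$ and to control this sum by the MAP joint density $p(\tilde\tx_{ik},\ty_i\mid Z_{ik}=1)$. First I would factor out the MAP term and write
\begin{equation*}
p(\ty_i\mid Z_{ik}=1)=p(\tilde\tx_{ik},\ty_i\mid Z_{ik}=1)\sum_{\tx_i\in\mathcal{X}}R(\tx_i),
\end{equation*}
where $R(\tx_i)=p(\tx_i,\ty_i\mid Z_{ik}=1)/p(\tilde\tx_{ik},\ty_i\mid Z_{ik}=1)$. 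Since $\tilde\tx_{ik}$ is the MAP latent sequence, each $R(\tx_i)$ is at most one, but the trivial bound is not strong enough: the goal is a product bound of the correct order, with positions in $W$ contributing an ambiguity factor and positions outside $W$ contributing a deflation factor $\bar\gamma<1$.

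The core step is to bound $R(\tx_i)$ by a product of one-site factors. To do so, I would transform $\tilde\tx_{ik}$ into $\tx_i$ by a sequence of single-site swaps and track the multiplicative change in $R$ at each swap. When the swapped position $t$ has $y_{i(t)}\in W$, the local ratio is at most $\gamma$ by the definition of $\gamma$; when $y_{i(t)}\notin W$, the restrictive definition of $W$ forces the locally optimal state to be essentially unique and the local ratio is at most $\bar\gamma$. An extra multiplicative factor bounded by $\max_{h,\ell}\pi_{kh}/\pi_{k\ell}$ has to be paid if $x_{i(0)}\neq\tilde x_{ik(0)}$, which will ultimately be absorbed into the additive constant $\log\bigl(2M\gamma\max_{h,\ell}\pi_{kh}/\pi_{k\ell}\bigr)$.

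Once the per-path bound $R(\tx_i)\leq\max_{h,\ell}(\pi_{kh}/\pi_{k\ell})\,\gamma^{d_W(\tx_i)}\bar\gamma^{d_{\bar W}(\tx_i)}$ is established, where $d_W(\tx_i)$ and $d_{\bar W}(\tx_i)$ count the modified sites whose observation lies inside and outside $W$ respectively, the sum over $\tx_i$ factorises position by position:
\begin{equation*}
\sum_{\tx_i\in\mathcal{X}}\gamma^{d_W(\tx_i)}\bar\gamma^{d_{\bar W}(\tx_i)}=\prod_{t=0}^{T}\bigl(1+(M-1)\gamma\,\mathbf{1}_{\{y_{i(t)}\in W\}}+(M-1)\bar\gamma\,\mathbf{1}_{\{y_{i(t)}\notin W\}}\bigr).
\end{equation*}
Taking the logarithm yields a baseline term linear in $T$ and a correction linear in the occupation frequency $T\hat\rho_{k_0}$, and it remains to rewrite the expression so the constants match $\log(\tilde\gamma+\bar\gamma)$ and $c_2=1+\gamma/(1+\bar\gamma)$ with $\tilde\gamma=\max(2,\gamma)$, pushing the residual lower-order contributions into the additive terms $\log\tilde\gamma$ and $\log\bigl(2M\gamma\max_{h,\ell}\pi_{kh}/\pi_{k\ell}\bigr)$.

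The main obstacle is justifying the per-path multiplicative decomposition when $\tx_i$ differs from $\tilde\tx_{ik}$ at consecutive positions, because the bounds $\gamma$ and $\bar\gamma$ as defined in the preliminaries concern swaps whose neighbours still equal the MAP value. I anticipate handling this by ordering the swaps left-to-right and using the crude uniform bound $\gamma$ on any ``boundary'' transition ratio between a freshly modified site and an already modified neighbour, which explains why $\gamma$ (rather than a tighter local quantity) enters $c_2$ and why the factor $\max_{h,\ell}\pi_{kh}/\pi_{k\ell}$ appears in the additive constant. The appearance of $\tilde\gamma=\max(2,\gamma)$ should come from bounding $1+(M-1)\gamma$ by $M\tilde\gamma$ and collecting the resulting constants, together with an application of Lemma~\ref{lem:techn1} to pass from $\hat\rho_{k_0}$ to the deterministic exponent $T\log(\tilde\gamma+\bar\gamma)+T\hat\rho_{k_0}c_2$.
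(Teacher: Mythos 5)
Your proposal follows essentially the same route as the paper: factor out the MAP joint density, bound each path ratio by a product of per-site factors ($\gamma$ for modified sites with $y_{i(t)}\in W$, $\bar\gamma$ otherwise, plus a $\max_{h,\ell}\pi_{kh}/\pi_{k\ell}$ factor for the initial state), and then sum over all latent sequences using the count $T_W=T\hat\rho_{k_0}$ — the paper merely organizes the final summation by Hamming distance with binomial coefficients rather than as a position-by-position product, which is an equivalent bookkeeping device. The difficulty you flag about consecutive modified sites is present in the paper's argument as well and is handled there in the same crude way, so your plan is faithful to the published proof.
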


\begin{proof}[Proof of Lemma~\ref{lem:techn1bis}]
By definition, we have
$$
p(\ty_i\mid Z_{ik}=1) = p(\tilde \tx_{ik}, \ty_i \mid Z_{ik}=1 ) \sum_{\tx \in \mathcal{X}} \frac{p(\tx, \ty_i \mid Z_{ik}=1 )}{p(\tilde \tx_{ik}, \ty_i \mid Z_{ik}=1 )}.
$$
Let $B_p(\tilde \tx_{ik}) =\{ \tx :\; \mid\mid \tx - \tx_{ik} \mid\mid_0 = p \}$, then
$$
\sum_{\tx \in \mathcal{X}} \frac{p(\tx, \ty_i \mid Z_{ik}=1 )}{p(\tilde \tx_{ik}, \ty_i \mid Z_{ik}=1 )}
= \sum_{p=0}^{T+1} \sum_{\tx \in B_p(\tilde \tx_{ik}) } \frac{p(\tx, \ty_i \mid Z_{ik}=1 )}{p(\tilde \tx_{ik}, \ty_i \mid Z_{ik}=1 )}.
$$
Remark that $$\frac{p(\tx_{i(0)}, \ty_{i(0)}\mid Z_{ik}=1,\tx_{i(1)},\ty_{i(1)})}{p(\tilde\tx_{i(0)}, \ty_{i(0)}\mid Z_{ik}=1,\tilde\tx_{i(1)},\ty_{i(1)})} <\gamma \max_{h,\ell} \frac{\pi_{kh}}{\pi_{k\ell}}.$$
Moreover, we observe $T_W=T\hat\rho_{k_0}$ elements of the sequence $y_{i(1)},\ldots,y_{i(T)}$ which belongs to $W$. We have
\begin{align*}
\sum_{\tx \in \mathcal{X}} \frac{p(\tx, \ty_i \mid Z_{ik}=1 )}{p(\tilde \tx_{ik}, \ty_i \mid Z_{ik}=1 )}
 \leq &
\left(M\gamma \max_{h,\ell} \frac{\pi_{kh}}{\pi_{k\ell}}\right)\sum_{p=0}^{T} \sum_{r=0}^p  \binom{T_W}{\min (r, T_W)} \binom{T-T_W}{\min (u, T-T_W)} \gamma^r \bar\gamma^u \\
= & \left(M\gamma \max_{h,\ell} \frac{\pi_{kh}}{\pi_{k\ell}}\right) \left( \sum_{r=0}^{T_W} \binom{T_W}{r} \gamma^r  \sum_{u=0}^{T-r}  \binom{T-T_W}{u}  \bar\gamma^u \right.\\
& \; \; + \left.\sum_{r=1+T_W}^{T}  \sum_{u=0}^{T-r} \binom{T-T_W}{\min(u,T-T_W)} \gamma^r \bar\gamma^u\right).
\end{align*}
We have 
$$
\sum_{r=0}^{T_W} \binom{T_W}{r} \gamma^r  \sum_{u=0}^{T-r}  \binom{T-T_W}{u}  \bar\gamma^u=(1+\bar\gamma)^{T}\left(1+\frac{\gamma}{1+\bar\gamma}\right)^{T_W},
$$
and
$$
\sum_{r=1+T_W}^{T}  \sum_{u=0}^{T-r} \binom{T-T_W}{\min(u,T-T_W)} \gamma^r \bar\gamma^u 
\leq  (\tilde\gamma + \bar\gamma )^{T} \tilde\gamma \left(\frac{\tilde\gamma}{\bar\gamma + \tilde\gamma}\right)^{T_W},
$$
where $\tilde\gamma=\max(2,\gamma)$. Noting that $1+\bar\gamma<\tilde\gamma + \bar\gamma $ and $1+\frac{\gamma}{1+\bar\gamma}>\frac{\tilde\gamma}{\bar\gamma + \tilde \gamma}$, we have
$$
\log p(\ty_i\mid Z_{ik}=1)  \leq 
\log p(\tilde \tx_{ik}, \ty_i \mid Z_{ik}=1 ) + 
T \log (\tilde\gamma + \bar\gamma) + T \hat\rho_{k_0} c_{2} + \log \tilde\gamma + \log \left( 2M\gamma \max_{h,\ell} \frac{\pi_{kh}}{\pi_{k\ell}} \right),
$$
where $c_{2} = 1+\frac{\gamma}{1+\bar\gamma}$. Note that $\gamma+1>c_2>1$.

\end{proof}

\textbf{Concentration of the ratio of complete-data likelihood}

\begin{lemma}\label{lem:techn2}
For any $k\neq k_0$ and for any $\delta_3$ such that $-\zeta<\delta_3< u_{kk_0}$,  we have 
$$
\mathbb{P}_0 \left(\frac{1}{T} \sum_{t=1}^T \sum_{h=1}^M \sum_{\ell=1}^M v_{i(t)h\ell} \log \left(\frac{A_k[h,\ell]}{A_{k_0}[h,\ell]}\right) > \delta_3  \right)
\leq
\exp \left(- T c_3\right),
$$
where $c_3=2 \frac{1-\bar\nu_2(\bA_{k_0})}{1 + \bar\nu_2(\bA_{k_0})}   s^2 >0$ and $s=\frac{\delta_3}{\omega_{kk_0}} + \frac{1}{\omega_{kk_0}} \sum_{h=1}^M \sum_{\ell=1}^M \pi_{k_0h}A_{k_0}[h,\ell] \log \left(\frac{A_{k_0}[h,\ell]}{A_{k}[h,\ell]}\right)$.
\end{lemma}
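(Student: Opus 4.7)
The plan is to apply the León--Perron concentration inequality (Theorem~\ref{thm:LeonAAP2004}) to a suitably rescaled linear functional of the auxiliary Markov chain $(\tV_{i(t)})_t$. First, I observe that by Remark~\ref{rem:Markov}, conditionally on $Z_{ik_0}=1$ the process $(\tV_{i(t)})_t$ is a finite, ergodic and reversible Markov chain on $\{1,\ldots,M\}^2$ with transition matrix $\bP_{k_0}$, whose non-zero eigenvalues coincide with those of $\bA_{k_0}$; in particular its second-largest eigenvalue is $\nu_2(\bA_{k_0})$ and, by Assumption~\ref{ass:chaine}, it is started from its stationary distribution $\pi_{k_0 h}\bA_{k_0}[h,\ell]$. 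This is precisely the setup of Theorem~\ref{thm:LeonAAP2004}.

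Next I would define
\[
\phi(v) = \sum_{h=1}^M\sum_{\ell=1}^M v_{h\ell}\log\frac{\bA_k[h,\ell]}{\bA_{k_0}[h,\ell]},
\]
and compute the stationary mean
\[
\mathbb{E}_0[\phi(\tV_{i(1)})] = \sum_{h,\ell}\pi_{k_0 h}\bA_{k_0}[h,\ell]\log\frac{\bA_k[h,\ell]}{\bA_{k_0}[h,\ell]} = -\sum_{h,\ell}\pi_{k_0 h}\bA_{k_0}[h,\ell]\log\frac{\bA_{k_0}[h,\ell]}{\bA_k[h,\ell]},
\]
which by Assumption~\ref{ass:kullback} is strictly less than $-\zeta$. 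Since exactly one of the $v_{h\ell}$ is equal to one, the range of $\phi$ is contained in a finite interval; I take $\omega_{kk_0}$ to be the length of this range (so that $u_{kk_0}$ is the maximum of the log-ratios, as suggested by the hypothesis $\delta_3<u_{kk_0}$) and set $f(v) = \bigl(\phi(v)-\min_{h,\ell}\log(\bA_k[h,\ell]/\bA_{k_0}[h,\ell])\bigr)/\omega_{kk_0}\in[0,1]$.

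Then the event in the statement can be rewritten as
\[
\Bigl\{\,\sum_{t=1}^T f(\tV_{i(t)}) > \bigl(\mathbb{E}_0[f(\tV_{i(1)})]+s\bigr)T\Bigr\},
\]
with the shift
\[
s = \frac{\delta_3 - \mathbb{E}_0[\phi(\tV_{i(1)})]}{\omega_{kk_0}} = \frac{\delta_3}{\omega_{kk_0}}+\frac{1}{\omega_{kk_0}}\sum_{h,\ell}\pi_{k_0 h}\bA_{k_0}[h,\ell]\log\frac{\bA_{k_0}[h,\ell]}{\bA_k[h,\ell]}.
\]
The two-sided constraint $-\zeta<\delta_3<u_{kk_0}$ is used exactly to ensure $s>0$ (via Assumption~\ref{ass:kullback}) and $\mathbb{E}_0[f]+s<1$, which are the hypotheses of Theorem~\ref{thm:LeonAAP2004}. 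Applying that theorem with eigenvalue $\bar\nu_2(\bA_{k_0})=\max(0,\nu_2(\bA_{k_0}))$ yields the bound
\[
\mathbb{P}_0\Bigl(\textstyle\sum_t f(\tV_{i(t)}) \geq (\mathbb{E}_0[f]+s)T\Bigr) \leq \exp\!\Bigl(-2\,\tfrac{1-\bar\nu_2(\bA_{k_0})}{1+\bar\nu_2(\bA_{k_0})}\,T s^2\Bigr) = e^{-Tc_3},
\]
which is exactly the desired inequality.

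The main obstacle I foresee is bookkeeping: one must identify the correct normalization $\omega_{kk_0}$ so that $f$ lands in $[0,1]$ \emph{and} so that the shift $s$ reads cleanly in the form stated. The reversibility/ergodicity that makes Theorem~\ref{thm:LeonAAP2004} applicable to $(\tV_{i(t)})_t$ is essentially built into Remark~\ref{rem:Markov} and Assumption~\ref{ass:chaine}, so once the rescaling is fixed the result follows almost mechanically.
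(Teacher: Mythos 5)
Your proposal is correct and follows essentially the same route as the paper's own proof: both apply the L\'eon--Perron inequality (Theorem~\ref{thm:LeonAAP2004}) to the pair-chain $(\tV_{i(t)})_t$ of Remark~\ref{rem:Markov}, rescale the log-ratio functional into $[0,1]$ with the same normalization $\omega_{kk_0}=u_{kk_0}+u_{k_0k}$ (your shift by $-\min_{h,\ell}\log(\bA_k[h,\ell]/\bA_{k_0}[h,\ell])$ is exactly the paper's shift by $u_{k_0k}$), and use the two-sided constraint on $\delta_3$ together with Assumption~\ref{ass:kullback} to verify $s>0$ and $\mathbb{E}_0[f]+s<1$.
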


\begin{proof}[Proof of Lemma~\ref{lem:techn2}]
 Let $f(\cdot)\in[0,1]$ defined by $$f(\tv_{i(t)})= \frac{1}{\omega_{kk_0}}\left( \sum_{h=1}^M \sum_{\ell=1}^M v_{i(t)h\ell} \log \left(\frac{A_k[h,\ell]}{A_{k_0}[h,\ell]}\right) + u_{k_0 k}\right),$$ where $\omega_{kk_0}=u_{kk_0} + u_{k_0k}$, $u_{kk_0} = \max_{(h,\ell)} \log \left(\frac{A_k[h,\ell]}{A_{k_0}[h,\ell]}\right)$. Denoting $\mathbb{E}_0[\cdot] = \mathbb{E}[\cdot \mid Z_{ik_0}=1]$ the conditional expectation computed with the true parameters, we have, for $t=1,\ldots,T$,
$$
\mathbb{E}_0\left[ f(\tV_{i(t)}) \right] = \frac{1}{\omega_{kk_0}} \sum_{h=1}^M \sum_{\ell=1}^M \pi_{k_0h}A_{k_0}[h,\ell] \left(\log \left(\frac{A_k[h,\ell]}{A_{k_0}[h,\ell]}\right) +  u_{k_0k}\right).
$$
Therefore, we have
\begin{align*}
\mathbb{P}_0 \left( \sum_{t=1}^T \sum_{h=1}^M \sum_{\ell=1}^M v_{i(t)h\ell} \log \left(\frac{A_k[h,\ell]}{A_{k_0}[h,\ell]}\right) > \delta_2  \right)
& = \mathbb{P}_0 \left( \sum_{t=1}^{T} f(\tv_{i(t)}) > \frac{\delta_2 + T u_{k_0k}}{\omega_{kk_0}}  \right) \\
& = \mathbb{P}_0 \left( \sum_{t=1}^{T} f(\tv_{i(t)}) > T (\mathbb{E}[f(\tV_{i(1)})]+ s) \right),
\end{align*}
where $s=\frac{\delta_2}{T\omega_{kk_0}} + \frac{1}{\omega_{kk_0}} \sum_{h=1}^M \sum_{\ell=1}^M \pi_{k_0h}A_{k_0}[h,\ell] \log \left(\frac{A_{k_0}[h,\ell]}{A_{k}[h,\ell]}\right)$. 
 
Note that $\omega_{kk_0}>0$ and that, by Assumption~\ref{ass:kullback}, $\sum_{h=1}^M \sum_{\ell=1}^M \pi_{k_0h}A_{k_0}[h,\ell] \log \left(\frac{A_{k_0}[h,\ell]}{A_{k}[h,\ell]}\right)>\zeta>0$ because it is a weighted sum of $M$ Kullback-Leibler divergences. Thus, if $-T\zeta < \delta_2$ then  $s>0$.
Moreover, if $\delta_2<Tu_{kk_0}$,  then $\mathbb{E}[f(\tV_{i(1)})]+ s <1$. Assumption~\ref{ass:chaine} and Remark~\ref{rem:Markov} imply that $\bar\nu_2(\bA_{k_0})$ is the maximum between zero and the second-largest eigenvalue of reversible Markov chain of $\tV_{i(t)}$. Therefore, using Theorem~\ref{thm:LeonAAP2004}, we have for any $\delta_3$ such that $-\zeta<\delta_3<u_{kk_0}$,
\[
\mathbb{P}_0 \left(\frac{1}{T} \sum_{t=1}^T \sum_{h=1}^M \sum_{\ell=1}^M v_{i(t)h\ell} \log \left(\frac{A_k[h,\ell]}{A_{k_0}[h,\ell]}\right) > \delta_3  \right)
\leq
\exp \left(-T c_3 \right),
\]
where $c_3=2 \frac{1-\bar\nu_2(\bA_{k_0})}{1 + \bar\nu_2(\bA_{k_0})} s^2 $ and $s=\frac{\delta_3}{\omega_{kk_0}} + \frac{1}{\omega_{kk_0}} \sum_{h=1}^M \sum_{\ell=1}^M \pi_{k_0h}A_{k_0}[h,\ell] \log \left(\frac{A_{k_0}[h,\ell]}{A_{k}[h,\ell]}\right)$.
\end{proof}

\subsection{Proof of Theorem~\ref{thm:conv}}
Noting that $\mathbb{P}(Z_{ik}=1 \mid \ty_i)\propto \delta_k p(\ty_i \mid Z_{ik}=1)$ and using Lemma~\ref{lem:techn1bis}, we have
\begin{equation*}
\begin{aligned}
\mathbb{P}_0\left(  \frac{\mathbb{P}(Z_{ik}=1 \mid \ty_i)}{\mathbb{P}(Z_{ik_0}=1 \mid \ty_i)} > a \right) \leq \mathbb{P}_0\left( \log \frac{p(\tilde\tx_{ik},\ty_i\mid Z_{ik}=1)}{p(\tilde\tx_{ik_0},\ty_i\mid Z_{ik_0}=1)} > - \log \frac{\delta_k}{a\delta_{k_0}} - \log  \left(2M\tilde\gamma \gamma \max_{h,\ell} \frac{\pi_{kh}}{\pi_{k\ell}}\right)     \right. \\
\left.- T \log (\tilde\gamma + \bar\gamma) - T \hat\rho_{k_0} c_{2}  \right).
\end{aligned}
\end{equation*}
Moreover, 
$$
\log \frac{p(\tilde\tx_{ik},\ty_i\mid Z_{ik}=1)}{p(\tilde\tx_{ik_0},\ty_i\mid Z_{ik_0}=1)}  = 
\sum_{t=1}^T \left( d_{k1(t)}+d_{k2(t)} \right) + \sum_{h=1}^M \tilde x_{ik(1)h}\log \pi_{kh} - \tilde x_{ik_0(1)h} \log \pi_{k_0h} ,
$$
where 
\[
d_{k1(t)}= \sum_{h=1}^M \sum_{\ell=1}^M(\tilde v_{ik(t)h\ell} - \tilde v_{ik_0(t)h\ell}) \log \left( \bA_{k_0}[h,\ell]g_\ell(y_{i(t)};\blambda_\ell,\varepsilon_\ell) \right),
\]
 and 
 \[
 d_{k2(t)}= \sum_{h=1}^M \sum_{\ell=1}^M \tilde v_{ik(t)h\ell} \log \frac{\bA_k[h,\ell]}{\bA_{k_0}[h,\ell]}.
 \]
Therefore, we have
$$
\mathbb{P}_0\left(  \frac{\mathbb{P}(Z_{ik}=1 \mid \ty_i)}{\mathbb{P}(Z_{ik_0}=1 \mid \ty_i)} > a \right) \leq \mathbb{P}_0\left( \frac{1}{T}\sum_{t=1}^T \left( d_{k1(t)}+d_{k2(t)} \right) > - \frac{c_4}{T}  - \log (\tilde\gamma + \bar\gamma) - \hat\rho_{k_0} c_{2}  \right),
$$
with $c_4= \log \frac{\delta_k}{a\delta_{k_0}} + \log  \left(2M\tilde\gamma \gamma \max_{h,\ell} \frac{\pi_{kh}}{\pi_{k\ell}}\right) + \max_{k,k_0,h,\ell} \log \frac{\pi_{kh}}{\pi_{k_0\ell}}$.
By definition of $W$, we have $\tilde v_{ik(t)h\ell}=\tilde v_{ik_0(t)h\ell}$ if $y_{i(t)}\notin W$. Moreover,  because $\tilde \tv_{ik_0}$ is the maximum \emph{a posteriori} rule, if $y_{i(t)}\in W$, then $d_{k1(t)}< \gamma$. Therefore, we have 
\begin{align*}
\mathbb{P}_0\left(  \frac{\mathbb{P}(Z_{ik}=1 \mid \ty_i)}{\mathbb{P}(Z_{ik_0}=1 \mid \ty_i)} > a \right) &\leq \mathbb{P}_0\left( \frac{1}{T}\sum_{t=1}^T d_{k2(t)} > -\frac{c_4}{T} -  (\gamma + c_2) \hat\rho_{k_0} -  \log(\tilde\gamma + \bar\gamma)  \right). 
\end{align*}
Hence, we have,  
\begin{equation*}
\begin{aligned}
\mathbb{P}_0\left(  \frac{\mathbb{P}(Z_{ik}=1 \mid \ty_i)}{\mathbb{P}(Z_{ik_0}=1 \mid \ty_i)} > a \right)  
\leq \mathbb{P}_0\left(   \hat \rho_{k_0} > \rho_{k_0} + \delta_1   \right) +  \mathbb{P}_0\left( \frac{1}{T}\sum_{t=1}^Td_{k2}(t) > -\frac{c_4}{T}  -  \log(\tilde\gamma + \bar\gamma)\right.\\
\left. -  (\gamma + c_2) (\rho_{k_0} + \delta_1) \right).
\end{aligned}
\end{equation*}
Using Lemma~\ref{lem:techn1}, if $\delta_1>\frac{1}{\sqrt{2T}}$, the first term of the right side of the previous equation can be upper bounded by $e^{-Tc_1}$  with  $c_1=\frac{1}{2}(\delta_1 - \frac{1}{\sqrt{2T}})^2$.

Using Lemma~\ref{lem:techn2}, the second term of the right-hand side of the previous equation can be upper bounded by $e^{-T c_3}$ with  $c_3 =2 \frac{1-\bar\nu_2(\bA_{k_0})}{1 + \bar\nu_2(\bA_{k_0})} s^2$,  where $s=\frac{\delta_3}{\omega_{kk_0}} + \frac{1}{\omega_{kk_0}} \sum_{h=1}^M \sum_{\ell=1}^M \pi_{k_0h}A_{k_0}[h,\ell] \log \left(\frac{A_{k_0}[h,\ell]}{A_{k}[h,\ell]}\right)$ and $\delta_3 = -\frac{c_4}{T}  -  \log(\tilde\gamma + \bar\gamma) -  (\gamma + c_2) (\rho_{k_0} + \delta_1) $,  if $\delta_3$ is such that $-\zeta<\delta_3<u_{kk_0}$. Thus, we have the following condition on $\delta_1$
 $$\frac{\zeta - \frac{c_4}{T} - \log(\tilde\gamma + \bar\gamma) }{\gamma + c_2} - \rho_{k_0} > \delta_1 >- \frac{u_{kk_0} + \frac{c_4}{T}+ \log(\tilde\gamma + \bar\gamma) }{\gamma + c_2} - \rho_{k_0}.$$
Noting that $\gamma + c_2<1 + 2\gamma$, the previous upper bound can be satisfied under the following assumption

\begin{assumption}\label{ass:expo}
It holds that
\begin{equation*}
\frac{\zeta- c_4 - \log(\tilde\gamma + \bar\gamma)}{1 + 2 \gamma} - \rho_{k_0} - \frac{1}{\sqrt{2}}> 0,
\end{equation*}
with $c_4= \log \frac{\delta_k}{a\delta_{k_0}} + \log  \left(2M\tilde\gamma \gamma \max_{h,\ell} \frac{\pi_{kh}}{\pi_{k\ell}}\right) + \max_{k,k_0,h,\ell} \log \frac{\pi_{kh}}{\pi_{k_0\ell}}$.
\end{assumption}
For any $a$ such that Assumption~\ref{ass:expo} holds and for any $\delta_1$ with $\frac{1}{\sqrt{2T}}<\delta_1< \frac{\zeta - c_4 - \log(\tilde\gamma + \bar\gamma) }{\gamma + c_2} - \rho_{k_0}$, we have
$$
\mathbb{P}_0\left(   \hat \rho_{k_0} > \rho_{k_0} + \delta_1   \right)\leq \mathcal{O}(e^{-Tc_1}),
$$
and 
$$
\mathbb{P}_0\left( \frac{1}{T}\sum_{t=1}^Td_{k2}(t) > \delta_3 ) \right)\leq \mathcal{O}(e^{-Tc_3}),
$$
with $\delta_3=-\frac{c_4}{T}  -  \log(\tilde\gamma + \bar\gamma) -  (\gamma + c_2) (\rho_{k_0} + \delta_1)$. 
Therefore, there exists $c>0$ such that
$$
\mathbb{P}_0\left(  \frac{\mathbb{P}(Z_{ik}=1 \mid \ty_i)}{\mathbb{P}(Z_{ik_0}=1 \mid \ty_i)} > a \right)  
\leq \mathcal{O}(e^{-Tc}).
$$
If the misclassification error is studied, we should consider $a=1$. Then, a sufficient condition to have the exponential decreasing of the probability of misclassifying an observation is obtained on the basis of Assumption~\ref{ass:expo} with $a=1$.

\section{Details about the conditional distribution} 
\label{sec:conddistribution}
\textbf{Forward formula}
We define
\begin{equation*}
\alpha_{ikhs(t)}(\btheta) = \mathbb{P}(X_{is(t)} = h \mid Z_{ik} =1; \btheta) \, p(y_{is(0)},\ldots,y_{is(t)} \mid X_{is(t)}=h, Z_{ik}=1; \btheta),
\end{equation*}
which measures the probability of the partial sequence $y_{is(0)},\ldots,y_{is(t)}$ and ending up in state $h$ at time $t$ under component $k$. For any $(i,k,h,s)$, we can define $\alpha_{ikhs(t)}$ recursively, as follows,
\begin{align*}
\alpha_{ikhs(à)}(\btheta) & = \pi_{kh} \, p(y_{is(à)};\blambda_h) \\
\forall t \in \{0,\ldots,T_{is}-1\} \quad \alpha_{ik \ell s(t+1)}(\btheta) & = \left( \sum_{h=1}^M A_k[h,\ell] \alpha_{ikhs(t)}(\btheta) \right) p(y_{is(t+1)};\blambda_{h}). 
\end{align*}
Considering independence between the $S_i$ sequences $\ty_{is}$, the pdf of $\ty_i$ under component $k$ is 
\begin{equation*}
p(\ty_i \mid Z_{ik}=1;\btheta) = \prod_{s=1}^{S_i}  \sum_{h=1}^M  \alpha_{ik h s(T_{is})}(\btheta).
\end{equation*}
Therefore,
\begin{equation*}
p(\ty_i;\btheta) = \sum_{k=1}^K \delta_k \left(\prod_{s=1}^{S_i}  \sum_{h=1}^M  \alpha_{ik h s(T_{is})}(\btheta)\right).
\end{equation*}

\textbf{Backward formula}
We define
\begin{equation*}
\beta_{ikhs(t)}(\btheta) = p(y_{is(t+1)},\ldots,y_{is(T_{is})} \mid X_{is(t)}=h, Z_{ik}=1; \btheta),
\end{equation*}
which measures the probability of the ending partial sequence $y_{is(t+1)},\ldots,y_{is(T_{is})}$ given a start in state $h$ at time $t$ under component $k$. We can define $\beta_{ikhs(t)}(\btheta)$ recursively, for any $(i,k,h,s)$, as
\begin{align*}
\beta_{ikhs(T_{is})}(\btheta)&= 1 \\
\forall t\in \{0,\ldots,T_{is}-1\} \quad \beta_{ikhs(t)}(\btheta) &= \sum_{\ell=1}^M A_k[h,\ell] p(y_{i(t+1)};\blambda_{\ell})  \beta_{ik\ell s(t+1)}(\btheta).
\end{align*}
Considering independence between the $S_i$ sequences $\ty_{is}$, the pdf of $\ty_i$ under component $k$ is 
\begin{equation*}
p(\ty_i \mid Z_{ik}=1;\btheta) = \prod_{s=1}^{S_i}  \sum_{h=1}^M  \pi_{kh} \beta_{ikhs(0)}(\btheta) p(y_{i(0)};\blambda_{h}).
\end{equation*}
\begin{equation*}
p(\ty_i;\btheta) = \sum_{k=1}^K \delta_k \left(\prod_{s=1}^{S_i}  \sum_{h=1}^M  \pi_{kh} \beta_{ikhs(0)}(\btheta) p(y_{i(0)};\blambda_{h})\right).
\end{equation*}

\section{Additional simulation results}
\subsection{Additional tables for the analysis of simulated data} 
\label{app:simu}

\begin{table}[H]
\caption{Convergence of the estimators with 1000 replicates: ARI between estimated and true partition, ARI between estimated and true latent states and MSE between the MLE and the true parameters}
\centering
\begin{tabular}{llccccccc}
\hline
 & & \multicolumn{2}{c}{Adjusted Rand index} & \multicolumn{5}{c}{Mean square error} \\
$n$ & $T$ & partition & states & $\bA_k$ & $\varepsilon_h$ & $a_h$ & $b_h$ & $\delta_k$ \\
\hline
  \multicolumn{9}{c}{Case hard ($e=0.75$ and $a_2=3$)}\\
\hline
 10 & 100 & 0.812 & 0.350 & 0.097 & 0.002 & 0.285 & 0.085 & 0.057 \\ 
10 & 500 & 1.000 & 0.385 & 0.031 & 0.000 & 0.077 & 0.024 & 0.052 \\ 
 100 & 100 & 0.870 & 0.377 & 0.032 & 0.000 & 0.087 & 0.028 & 0.007 \\ 
100 & 500 & 1.000 & 0.391 & 0.018 & 0.000 & 0.045 & 0.017 & 0.005 \\ 
\hline
  \multicolumn{9}{c}{Case medium-easy ($e=0.75$ and $a_2=5$)}\\ 
\hline
10 & 100 & 0.996 & 0.661 & 0.018 & 0.001 & 0.253 & 0.031 & 0.052 \\ 
10 & 500 & 0.999 & 0.669 & 0.004 & 0.000 & 0.050 & 0.006 & 0.052 \\ 
100 & 100 & 0.998 & 0.669 & 0.002 & 0.000 & 0.027 & 0.003 & 0.005 \\ 
100 & 500 & 1.000 & 0.671 & 0.000 & 0.000 & 0.005 & 0.001 & 0.005 \\
\hline
  \multicolumn{9}{c}{Case easy ($e=0.90$ and $a_2=5$)}\\ 
\hline
  \hline
10 & 100 & 1.000 & 0.832 & 0.006 & 0.000 & 0.159 & 0.017 & 0.048 \\ 
 10 & 500 & 1.000 & 0.839 & 0.001 & 0.000 & 0.030 & 0.003 & 0.046 \\ 
 100 & 100 & 1.000 & 0.836 & 0.001 & 0.000 & 0.015 & 0.002 & 0.005 \\ 
  100 & 500 & 1.000 & 0.839 & 0.000 & 0.000 & 0.003 & 0.000 & 0.005 \\ 
\hline
\end{tabular}
\end{table}

\begin{table}[H]
 \caption{Convergence of the estimators obtained on 1000 replicates with and without missing data when data are sampled from case hard: ARI between estimated and true partition, ARI between estimated and true latent states and MSE between the MLE and the true parameters}
\centering
\begin{tabular}{lllccccccc}
\hline
 & & &  \multicolumn{2}{c}{Adjusted Rand index} & \multicolumn{5}{c}{Mean square error} \\
$n$ & $T$ & missingness & partition & states & $\bA_k$ & $\varepsilon_h$ & $a_h$ & $b_h$ & $\delta_k$ \\
\hline
10 & 100 & no missingness & 0.812 & 0.350 & 0.097 & 0.002 & 0.285 & 0.085 & 0.057 \\ 
&&  MCAR-1 & 0.753 & 0.342 & 0.103 & 0.002 & 0.360 & 0.092 & 0.061 \\ 
&&  MCAR-2 & 0.690 & 0.333 & 0.120 & 0.003 & 0.320 & 0.091 & 0.068 \\ 
&&  MNAR & 0.542 & 0.303 & 0.152 & 0.004 & 0.456 & 0.109 & 0.080 \\ 
10 & 500 & no missingness & 1.000 & 0.385 & 0.031 & 0.000 & 0.077 & 0.024 & 0.052 \\ 
&&  MCAR-1 & 1.000 & 0.386 & 0.029 & 0.000 & 0.072 & 0.024 & 0.052 \\ 
&&  MCAR-2 & 0.999 & 0.385 & 0.032 & 0.000 & 0.074 & 0.024 & 0.052 \\ 
&&  MNAR & 0.991 & 0.347 & 0.042 & 0.003 & 0.177 & 0.076 & 0.052 \\ 
100 & 100 & no missingness & 0.870 & 0.377 & 0.032 & 0.000 & 0.087 & 0.028 & 0.007 \\ 
&&  MCAR-1 & 0.837 & 0.373 & 0.033 & 0.000 & 0.091 & 0.030 & 0.008 \\ 
&&  MCAR-2 & 0.801 & 0.369 & 0.035 & 0.000 & 0.099 & 0.032 & 0.008 \\ 
&&  MNAR & 0.664 & 0.337 & 0.038 & 0.003 & 0.099 & 0.083 & 0.011 \\ 
100 & 500 & no missingness & 1.000 & 0.391 & 0.018 & 0.000 & 0.045 & 0.017 & 0.005 \\ 
&&  MCAR-1 & 1.000 & 0.391 & 0.018 & 0.000 & 0.050 & 0.017 & 0.005 \\ 
&&  MCAR-2 & 1.000 & 0.391 & 0.018 & 0.000 & 0.044 & 0.017 & 0.005 \\ 
&&  MNAR & 0.996 & 0.355 & 0.023 & 0.003 & 0.109 & 0.073 & 0.005 \\ 
\hline
\end{tabular}
\end{table}

 \begin{table}[H]
 \caption{Convergence of the estimators obtained on 1000 replicates with and without missing data when data are sampled from case medium-easy: ARI between estimated and true partition, ARI between estimated and true latent states and MSE between the MLE and the true parameters}
  \centering
\begin{tabular}{lllccccccc}
\hline
 & & &  \multicolumn{2}{c}{Adjusted Rand index} & \multicolumn{5}{c}{Mean square error} \\
$n$ & $T$ & missingness & partition & states & $\bA_k$ & $\varepsilon_h$ & $a_h$ & $b_h$ & $\delta_k$ \\
\hline
10 & 100 & no missingness  & 0.996 & 0.661 & 0.018 & 0.001 & 0.253 & 0.031 & 0.052 \\ 
&&  MCAR-1 & 0.993 & 0.659 & 0.020 & 0.001 & 0.276 & 0.034 & 0.052 \\ 
&&  MCAR-2 & 0.986 & 0.657 & 0.023 & 0.001 & 0.308 & 0.038 & 0.051 \\ 
&&  MNAR & 0.966 & 0.621 & 0.052 & 0.004 & 0.561 & 0.100 & 0.052 \\ 
10 & 500 & no missingness & 0.999 & 0.669 & 0.004 & 0.000 & 0.050 & 0.006 & 0.052 \\ 
&&  MCAR-1 & 0.999 & 0.669 & 0.004 & 0.000 & 0.051 & 0.006 & 0.052 \\ 
&&  MCAR-2 & 0.999 & 0.668 & 0.004 & 0.000 & 0.052 & 0.007 & 0.052 \\ 
&&  MNAR & 1.000 & 0.636 & 0.021 & 0.003 & 0.217 & 0.085 & 0.052 \\ 
100 & 100 & no missingness & 0.998 & 0.669 & 0.002 & 0.000 & 0.027 & 0.003 & 0.005 \\ 
&&  MCAR-1 & 0.996 & 0.668 & 0.002 & 0.000 & 0.030 & 0.003 & 0.005 \\ 
&&  MCAR-2 & 0.993 & 0.667 & 0.002 & 0.000 & 0.033 & 0.004 & 0.005 \\ 
&&  MNAR & 0.978 & 0.641 & 0.013 & 0.003 & 0.160 & 0.074 & 0.005 \\ 
100 & 500 & no missingness & 1.000 & 0.671 & 0.000 & 0.000 & 0.005 & 0.001 & 0.005 \\ 
&&  MCAR-1 & 1.000 & 0.671 & 0.000 & 0.000 & 0.005 & 0.001 & 0.005 \\ 
&&  MCAR-2 & 1.000 & 0.671 & 0.000 & 0.000 & 0.005 & 0.001 & 0.005 \\ 
&&  MNAR & 1.000 & 0.644 & 0.011 & 0.003 & 0.140 & 0.076 & 0.005 \\ 
\hline
\end{tabular}
\end{table}

 \begin{table}[H]
 \caption{Convergence of the estimators obtained on 1000 replicates with and without missing data when data are sampled from case easy: ARI between estimated and true partition, ARI between estimated and true latent states and MSE between the MLE and the true parameters}
  \centering
\begin{tabular}{lllccccccc}
\hline
 & & &  \multicolumn{2}{c}{Adjusted Rand index} & \multicolumn{5}{c}{Mean square error} \\
$n$ & $T$ & missingness & partition & states & $\bA_k$ & $\varepsilon_h$ & $a_h$ & $b_h$ & $\delta_k$ \\
\hline
10 & 100 & no missingness   & 1.000 & 0.832 & 0.006 & 0.000 & 0.159 & 0.017 & 0.048 \\ 
&&  MCAR-1  & 1.000 & 0.829 & 0.007 & 0.000 & 0.185 & 0.019 & 0.048 \\ 
&&  MCAR-2 & 1.000 & 0.826 & 0.008 & 0.001 & 0.207 & 0.022 & 0.048 \\ 
&&  MNAR & 1.000 & 0.771 & 0.017 & 0.003 & 0.315 & 0.042 & 0.048 \\ 
10 & 500 & no missingness   & 1.000 & 0.839 & 0.001 & 0.000 & 0.030 & 0.003 & 0.046 \\ 
&&  MCAR-1  & 1.000 & 0.838 & 0.001 & 0.000 & 0.031 & 0.003 & 0.046 \\ 
&&  MCAR-2 & 1.000 & 0.837 & 0.001 & 0.000 & 0.031 & 0.003 & 0.046 \\ 
&&  MNAR & 1.000 & 0.781 & 0.007 & 0.003 & 0.119 & 0.030 & 0.046 \\ 
100 & 100 & no missingness   & 1.000 & 0.836 & 0.001 & 0.000 & 0.015 & 0.002 & 0.005 \\ 
&&  MCAR-1  & 1.000 & 0.833 & 0.001 & 0.000 & 0.018 & 0.002 & 0.005 \\ 
&&  MCAR-2 & 1.000 & 0.831 & 0.001 & 0.000 & 0.019 & 0.002 & 0.005 \\ 
&&  MNAR & 1.000 & 0.779 & 0.006 & 0.003 & 0.084 & 0.026 & 0.005 \\ 
100 & 500 & no missingness   & 1.000 & 0.839 & 0.000 & 0.000 & 0.003 & 0.000 & 0.005 \\ 
&&  MCAR-1 & 1.000 & 0.838 & 0.000 & 0.000 & 0.003 & 0.000 & 0.005 \\ 
&&  MCAR-2 & 1.000 & 0.838 & 0.000 & 0.000 & 0.003 & 0.000 & 0.005 \\ 
&&  MNAR & 1.000 & 0.783 & 0.005 & 0.003 & 0.072 & 0.025 & 0.005 \\ 
\hline
\end{tabular}
\end{table}

\subsection{Additional figures for the analysis of the data from \texorpdfstring{\cite{huang2018hidden}}{Huang et al. (2018b)}} 
\label{app:huang}

\begin{figure}[H]
\centering 
\includegraphics[width=0.95\textwidth]{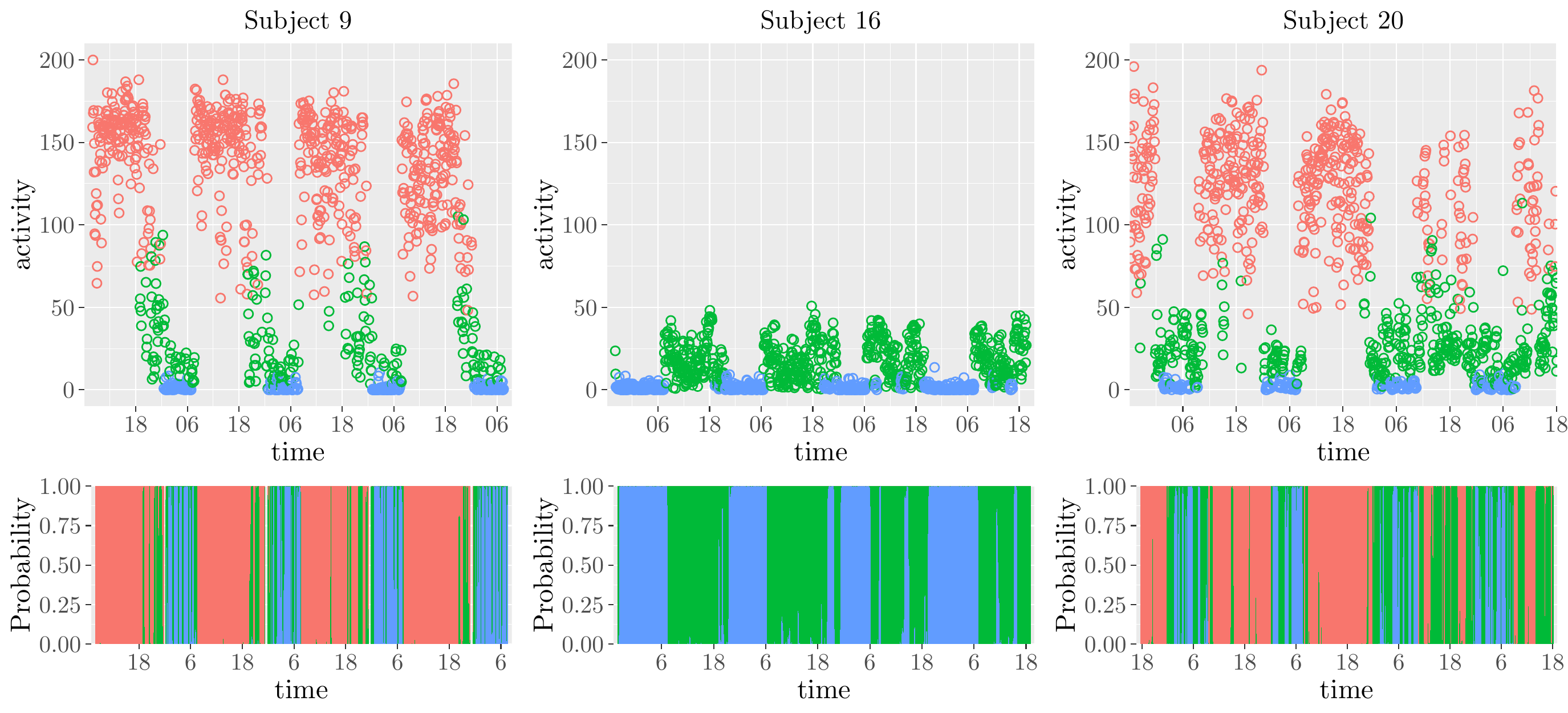}
\includegraphics[width=0.3\textwidth]{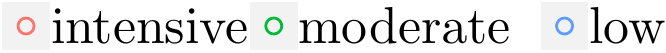}
\caption{State estimation for the three subjects: (top) accelerometer data where color indicates the expected value of $Y_{i(t)}$ conditionally the most likely state and on the most likely component; (bottom) probability of each state at each time.}
\label{fig::huangdata2}
\end{figure}

\begin{figure}[H]
\centering 
\includegraphics[width=0.95\textwidth]{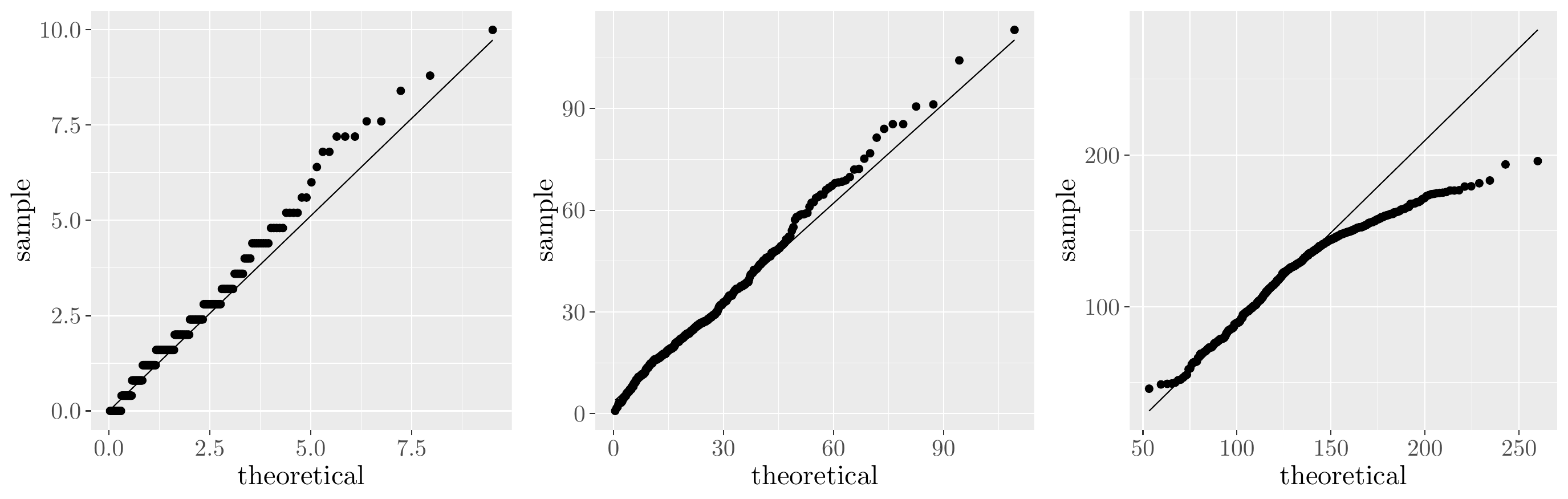}
\includegraphics[width=0.95\textwidth]{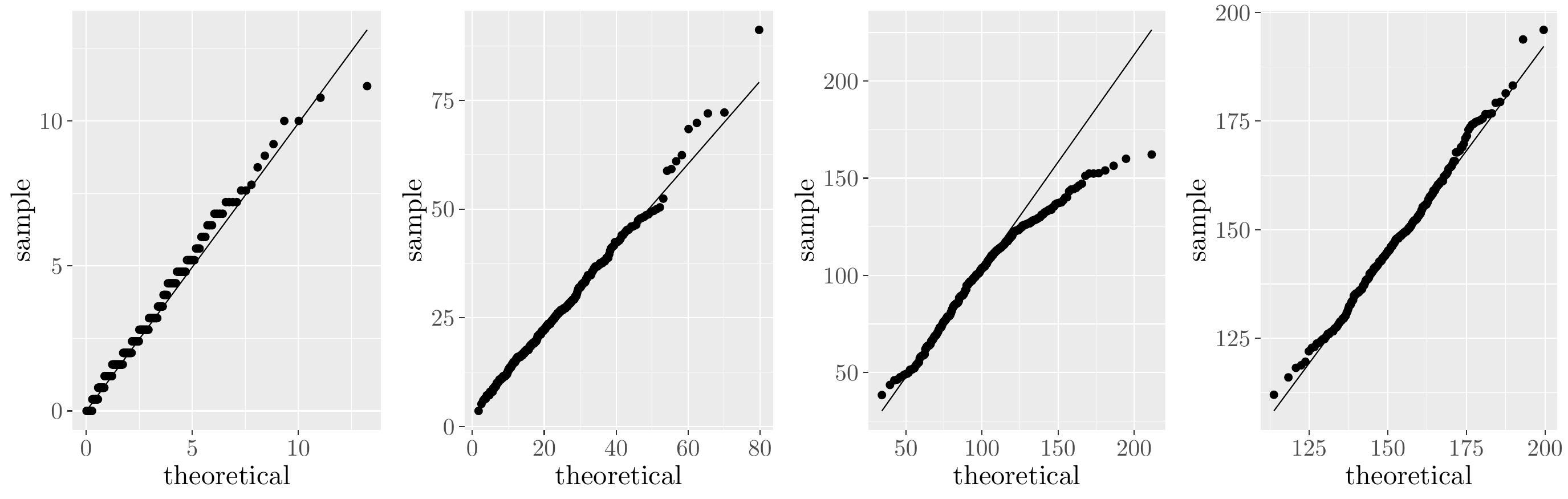}
\caption{QQ-plots of subject 20. Top: from left to right, state 1 to 3. Bottom: Top: from left to right, state 1 to 4.}
\label{fig::huang_qq3}
\end{figure}

\subsection{Additional figures for the analysis of the PAT data}\label{supp:pat}
\begin{figure}
\begin{center}
\subfigure[\emph{Active} class]{\includegraphics[width=0.47\textwidth]{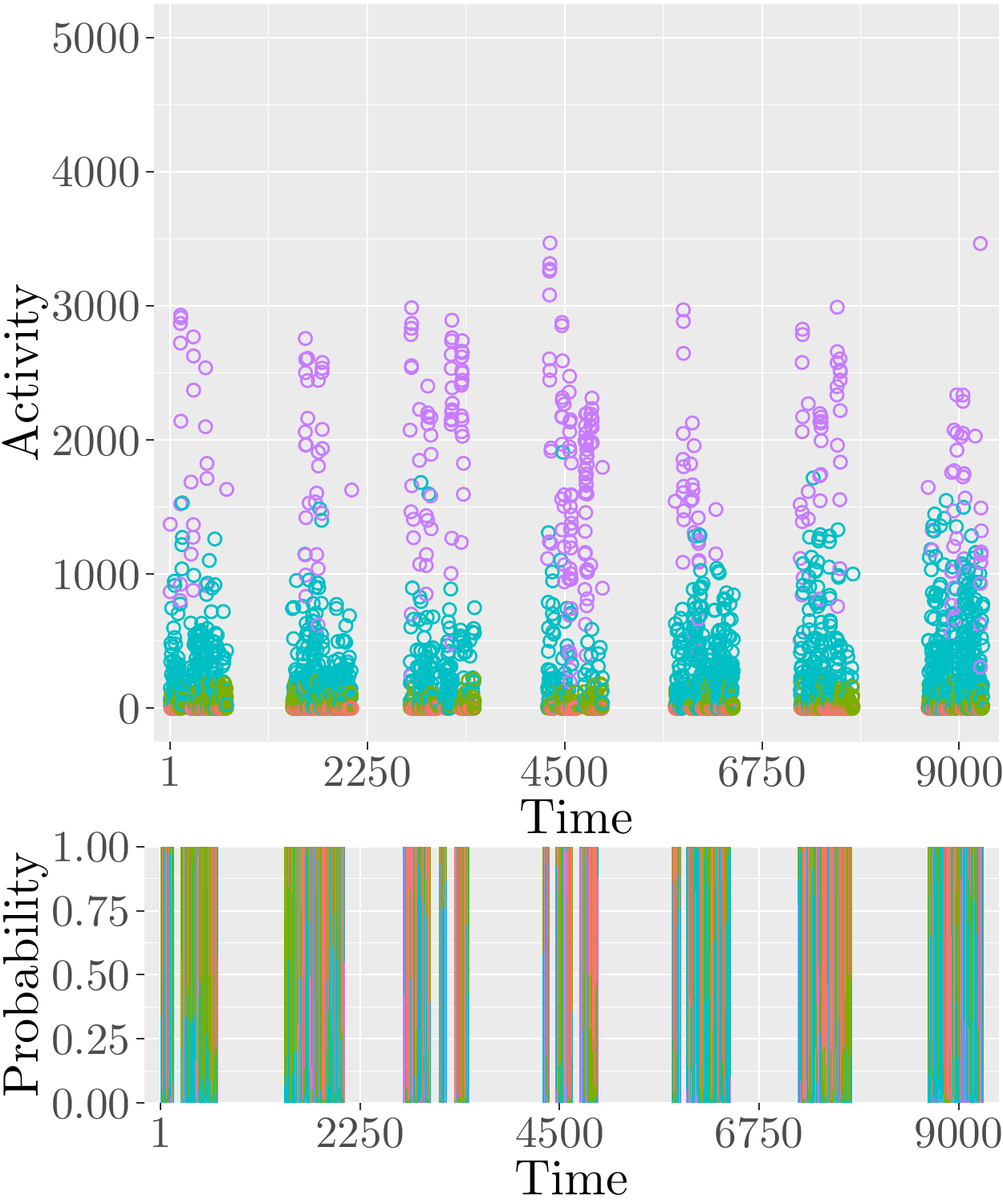}}
\subfigure[\emph{Sedentary} class]{\includegraphics[width=0.47\textwidth]{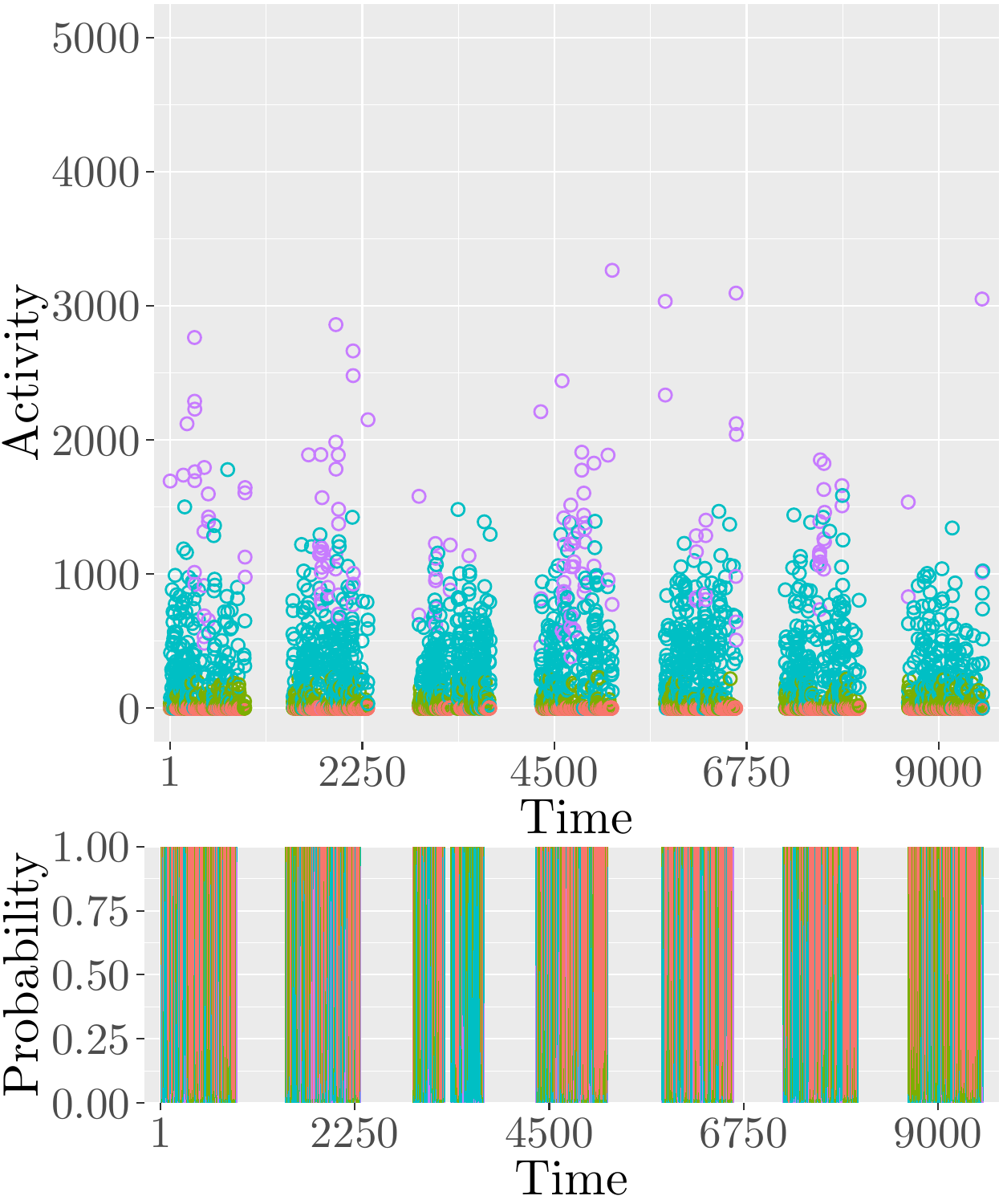}}
\end{center}
\begin{center}
\includegraphics[width=0.4\textwidth]{legfig4}
\end{center}
\caption{Examples of observation assigned into the five classes with the probabilities of the states.}
\label{fig:observations2}
\end{figure}

\begin{table}[H]
\caption{Transition matrix for the two remaining classes}
\label{tab:usual2}
\centering
\begin{tabular}{lcccc}
\hline
&\multicolumn{4}{c}{Class \emph{active}}\\
 & sleeping & low-level & moderate-level & intensive-level \\ 
 \hline
sleeping & 0.87 & 0.12 & 0.01 & 0.00 \\ 
low-level & 0.17 & 0.73 & 0.10 & 0.00 \\ 
moderate-level & 0.04 & 0.30 & 0.66 & 0.01 \\ 
intensive-level & 0.08 & 0.08 & 0.18 & 0.66 \\
\hline
&\multicolumn{4}{c}{Class \emph{sedentary}}\\
 & sleeping & low-level & moderate-level & intensive-level \\   
 \hline
sleeping & 0.79 & 0.16 & 0.05 & 0.00 \\ 
low-level & 0.17 & 0.66 & 0.16 & 0.01 \\ 
moderate-level & 0.05 & 0.14 & 0.79 & 0.03 \\ 
intensive-level & 0.01 & 0.02 & 0.15 & 0.82 \\ 
\end{tabular}
\end{table}

\bibliographystyle{apalike}
\bibliography{mabib.bib}

\begin{thebibliography}{}

\bibitem[Aarts, 2019]{mHMMbayes}
Aarts, E. (2019).
\newblock {\em mHMMbayes: Multilevel Hidden Markov Models Using Bayesian
  Estimation}.
\newblock R package version 0.1.1.

\bibitem[Ae~Lee and Gill, 2018]{ae2018missing}
Ae~Lee, J. and Gill, J. (2018).
\newblock Missing value imputation for physical activity data measured by
  accelerometer.
\newblock {\em Stat. Methods Med. Res.}, 27(2):490--506.

\bibitem[Allman et~al., 2009]{All09}
Allman, E., Matias, C., and Rhodes, J. (2009).
\newblock Identifiability of parameters in latent structure models with many
  observed variables.
\newblock {\em Ann. Statist.}, 37(6A):3099--3132.

\bibitem[Altman, 2007]{altman2007mixed}
Altman, R.~M. (2007).
\newblock Mixed hidden markov models: an extension of the hidden markov model
  to the longitudinal data setting.
\newblock {\em J. Am. Statist. Assoc.}, 102(477):201--210.

\bibitem[Bai et~al., 2018]{bai2018two}
Bai, J., Sun, Y., Schrack, J.~A., Crainiceanu, C.~M., and Wang, M.-C. (2018).
\newblock A two-stage model for wearable device data.
\newblock {\em Biometrics}, 74(2):744--752.

\bibitem[Banfield and Raftery, 1993]{Ban93}
Banfield, J. and Raftery, A. (1993).
\newblock {Model-based Gaussian and non-Gaussian clustering}.
\newblock {\em Biometrics}, pages 803--821.

\bibitem[Bartolucci et~al., 2012]{bartolucci2012latent}
Bartolucci, F., Farcomeni, A., and Pennoni, F. (2012).
\newblock {\em Latent Markov models for longitudinal data}.
\newblock CRC Press.

\bibitem[Bartolucci et~al., 2017]{bartolucci2017lmest}
Bartolucci, F., Pandolfi, S., and Pennoni, F. (2017).
\newblock Lmest: an r package for latent markov models for longitudinal
  categorical data.
\newblock {\em J. Stat. Softw.}, 81(4):1--38.

\bibitem[Bartolucci et~al., 2011]{bartolucci2011assessment}
Bartolucci, F., Pennoni, F., and Vittadini, G. (2011).
\newblock Assessment of school performance through a multilevel latent markov
  rasch model.
\newblock {\em J. Educ. Behav. Stat.}, 36(4):491--522.

\bibitem[Biernacki et~al., 2000]{Biernacki:00}
Biernacki, C., Celeux, G., and Govaert, G. (2000).
\newblock Assessing a mixture model for clustering with the integrated
  completed likelihood.
\newblock {\em IEEE Trans. Pattern Anal. Mach. Intell.}, 22(7):719--725.

\bibitem[Brault and Mariadassou, 2015]{Brault2015}
Brault, V. and Mariadassou, M. (2015).
\newblock Co-clustering through latent bloc model: A review.
\newblock {\em Journal de la Soci{\'e}t{\'e} Fran{\c{c}}aise de Statistique},
  156(3):120--139.

\bibitem[Capp\'{e} et~al., 2005]{Cappe}
Capp\'{e}, O., Moulines, E., and Ryd\'{e}n, T. (2005).
\newblock {\em Inference in hidden {M}arkov models}.
\newblock Springer Series in Statistics. Springer, New York.
\newblock With Randal Douc's contributions to Chapter 9 and Christian P.
  Robert's to Chapters 6, 7 and 13, With Chapter 14 by Gersende Fort, Philippe
  Soulier and Moulines, and Chapter 15 by St\'{e}phane Boucheron and Elisabeth
  Gassiat.

\bibitem[Celeux and Durand, 2008]{CeleuxCS2008}
Celeux, G. and Durand, J.-B. (2008).
\newblock Selecting hidden {M}arkov model state number with cross-validated
  likelihood.
\newblock {\em Comput. Stat.}, 23(4):541--564.

\bibitem[Celisse et~al., 2012]{CelisseEJS2012}
Celisse, A., Daudin, J.-J., Pierre, L., et~al. (2012).
\newblock Consistency of maximum-likelihood and variational estimators in the
  stochastic block model.
\newblock {\em Electron. J. Stat.}, 6:1847--1899.

\bibitem[Cole et~al., 1992]{cole1992automatic}
Cole, R.~J., Kripke, D.~F., Gruen, W., Mullaney, D.~J., and Gillin, J.~C.
  (1992).
\newblock Automatic sleep/wake identification from wrist activity.
\newblock {\em Sleep}, 15(5):461--469.

\bibitem[Csisz{\'a}r and Talata, 2006]{csiszar2006}
Csisz{\'a}r, I. and Talata, Z. (2006).
\newblock Consistent estimation of the basic neighborhood of markov random
  fields.
\newblock {\em Ann. Statist.}, 34(1):123--145.

\bibitem[{Dempster, A.P. and Laird, N.M. and Rubin, D.B.}, 1977]{Dem77}
{Dempster, A.P. and Laird, N.M. and Rubin, D.B.} (1977).
\newblock {Maximum likelihood from incomplete data via the EM algorithm}.
\newblock {\em {J. Royal Stat. Soc. B}}, 39(1):1--38.

\bibitem[Du~Roy~de Chaumaray et~al., 2020]{MHMM}
Du~Roy~de Chaumaray, M., Marbac, M., and Navarro, F. (2020).
\newblock {\em MHMM: Finite Mixture of Hidden Markov Model}.
\newblock R package version 1.0.0.

\bibitem[Dyrstad et~al., 2014]{dyrstad2014comparison}
Dyrstad, S.~M., Hansen, B.~H., Holme, I.~M., and Anderssen, S.~A. (2014).
\newblock Comparison of self-reported versus accelerometer-measured physical
  activity.
\newblock {\em Med. Sci. Sports Exerc.}, 46(1):99--106.

\bibitem[Freedson et~al., 1998]{freedson1998calibration}
Freedson, P.~S., Melanson, E., and Sirard, J. (1998).
\newblock Calibration of the computer science and applications, inc.
  accelerometer.
\newblock {\em Med. Sci. Sports Exerc.}, 30(5):777--781.

\bibitem[Gassiat, 2002]{gassiat2002likelihood}
Gassiat, E. (2002).
\newblock Likelihood ratio inequalities with applications to various mixtures.
\newblock In {\em Ann. Henri Poincarr{\'e} (B)}, volume~38, pages 897--906.
  Elsevier.

\bibitem[Gassiat et~al., 2016]{GassiatStatCo2016}
Gassiat, E., Cleynen, A., and Robin, S. (2016).
\newblock Inference in finite state space non parametric hidden {M}arkov models
  and applications.
\newblock {\em Stat. Comput.}, 26(1-2):61--71.

\bibitem[Geraci, 2018]{geraci2018additive}
Geraci, M. (2018).
\newblock Additive quantile regression for clustered data with an application
  to children's physical activity.
\newblock {\em J. Royal Stat. Soc. C}.

\bibitem[Geraci and Farcomeni, 2016]{geraci2016probabilistic}
Geraci, M. and Farcomeni, A. (2016).
\newblock Probabilistic principal component analysis to identify profiles of
  physical activity behaviours in the presence of non-ignorable missing data.
\newblock {\em J. Royal Stat. Soc. C}, 65(1):51--75.

\bibitem[Goodman, 1974]{Goo74}
Goodman, L.~A. (1974).
\newblock Exploratory latent structure analysis using both identifiable and
  unidentifiable models.
\newblock {\em Biometrika}, 61(2):215--231.

\bibitem[Grandner et~al., 2013]{grandner2013sleep}
Grandner, M.~A., Sands-Lincoln, M.~R., Pak, V.~M., and Garland, S.~N. (2013).
\newblock Sleep duration, cardiovascular disease, and proinflammatory
  biomarkers.
\newblock {\em Nat. Sci. Sleep}, 5:93.

\bibitem[Gruen et~al., 2017]{gruen2017use}
Gruen, M.~E., Alfaro-C{\'o}rdoba, M., Thomson, A.~E., Worth, A.~C., Staicu,
  A.-M., and Lascelles, B. D.~X. (2017).
\newblock The use of functional data analysis to evaluate activity in a
  spontaneous model of degenerative joint disease associated pain in cats.
\newblock {\em PLoS One}, 12(1):e0169576.

\bibitem[Helske and Helske, 2019]{seqHMM}
Helske, S. and Helske, J. (2019).
\newblock Mixture hidden {M}arkov models for sequence data: The {seqHMM}
  package in {R}.
\newblock {\em J. Stat. Softw.}, 88(3):1--32.

\bibitem[Hoff et~al., 2002]{HoffJasa2002}
Hoff, P.~D., Raftery, A.~E., and Handcock, M.~S. (2002).
\newblock Latent space approaches to social network analysis.
\newblock {\em J. Am. Statist. Assoc.}, 97(460):1090--1098.

\bibitem[Huang et~al., 2018a]{huang2018multilevel}
Huang, L., Bai, J., Ivanescu, A., Harris, T., Maurer, M., Green, P., and
  Zipunnikov, V. (2018a).
\newblock Multilevel matrix-variate analysis and its application to
  accelerometry-measured physical activity in clinical populations.
\newblock {\em J. Am. Statist. Assoc.}, pages 1--12.

\bibitem[Huang et~al., 2018b]{huang2018hidden}
Huang, Q., Cohen, D., Komarzynski, S., Li, X.-M., Innominato, P., L{\'e}vi, F.,
  and Finkenst{\"a}dt, B. (2018b).
\newblock Hidden markov models for monitoring circadian rhythmicity in
  telemetric activity data.
\newblock {\em J. Royal Soc. Interface}, 15(139):20170885.

\bibitem[Hubert and Arabie, 1985]{Hub85}
Hubert, L. and Arabie, P. (1985).
\newblock Comparing partitions.
\newblock {\em Journal of classification}, 2(1):193--218.

\bibitem[Hunt and Jorgensen, 2011]{Hunt:11}
Hunt, L. and Jorgensen, M. (2011).
\newblock Clustering mixed data.
\newblock {\em Wiley Interdisciplinary Reviews: Data Mining and Knowledge
  Discovery}, 1(4):352--361.

\bibitem[Hunter et~al., 2008]{HunterJasa2008}
Hunter, D.~R., Goodreau, S.~M., and Handcock, M.~S. (2008).
\newblock Goodness of fit of social network models.
\newblock {\em J. Am. Statist. Assoc.}, 103(481):248--258.

\bibitem[Immerwahr et~al., 2012]{immerwahr2012physical}
Immerwahr, S., Wyker, B., Bartley, K., and Eisenhower, D. (2012).
\newblock The physical activity and transit survey device follow-up study:
  Methodology report.
\newblock {\em The New York City Department of Health and Mental Hygiene}.

\bibitem[Innerd et~al., 2018]{innerd2018using}
Innerd, P., Harrison, R., and Coulson, M. (2018).
\newblock Using open source accelerometer analysis to assess physical activity
  and sedentary behaviour in overweight and obese adults.
\newblock {\em BMC public health}, 18(1):543.

\bibitem[Karlis and Meligkotsidou, 2007]{karlis2007finite}
Karlis, D. and Meligkotsidou, L. (2007).
\newblock Finite mixtures of multivariate poisson distributions with
  application.
\newblock {\em J. Stat. Plan. Inference}, 137(6):1942--1960.

\bibitem[Kimm et~al., 2005]{kimm2005relation}
Kimm, S.~Y., Glynn, N.~W., Obarzanek, E., Kriska, A.~M., Daniels, S.~R.,
  Barton, B.~A., and Liu, K. (2005).
\newblock Relation between the changes in physical activity and body-mass index
  during adolescence: a multicentre longitudinal study.
\newblock {\em The Lancet}, 366(9482):301--307.

\bibitem[Kontorovich and Weiss, 2014]{Kontorovich2014}
Kontorovich, A. and Weiss, R. (2014).
\newblock Uniform {C}hernoff and {D}voretzky-{K}iefer-{W}olfowitz-type
  inequalities for {M}arkov chains and related processes.
\newblock {\em J. Appl. Probab. Stat.}, 51(4):1100--1113.

\bibitem[Kosmidis and Karlis, 2015]{kosmidis2015mixture}
Kosmidis, I. and Karlis, D. (2015).
\newblock Model-based clustering using copulas with applications.
\newblock {\em Stat .Comput.}, pages 1--21.

\bibitem[Lee et~al., 2012]{lee2012effect}
Lee, I.-M., Shiroma, E.~J., Lobelo, F., Puska, P., Blair, S.~N., Katzmarzyk,
  P.~T., Group, L. P. A. S.~W., et~al. (2012).
\newblock Effect of physical inactivity on major non-communicable diseases
  worldwide: an analysis of burden of disease and life expectancy.
\newblock {\em The lancet}, 380(9838):219--229.

\bibitem[Le\'{o}n and Perron, 2004]{LeonAAP2004}
Le\'{o}n, C.~A. and Perron, F. (2004).
\newblock Optimal {H}oeffding bounds for discrete reversible {M}arkov chains.
\newblock {\em Ann. Appl. Probab.}, 14(2):958--970.

\bibitem[Levin and Peres, 2017]{Levin2017}
Levin, D.~A. and Peres, Y. (2017).
\newblock {\em Markov chains and mixing times}, volume 107.
\newblock American Mathematical Soc.

\bibitem[Lim et~al., 2015]{lim2015measurement}
Lim, S., Wyker, B., Bartley, K., and Eisenhower, D. (2015).
\newblock Measurement error of self-reported physical activity levels in new
  york city: assessment and correction.
\newblock {\em Am. J. Epidemiol.}, 181(9):648--655.

\bibitem[Lim et~al., 2019]{lim2019functional}
Lim, Y., Oh, H.-S., and Cheung, Y.~K. (2019).
\newblock Functional clustering of accelerometer data via transformed input
  variables.
\newblock {\em J. Royal Stat. Soc. C}, 68(3):495--520.

\bibitem[Maruotti, 2011]{maruotti2011mixed}
Maruotti, A. (2011).
\newblock Mixed hidden markov models for longitudinal data: An overview.
\newblock {\em International Statistical Review}, 79(3):427--454.

\bibitem[Matias et~al., 2018]{Matias2018}
Matias, C., Rebafka, T., and Villers, F. (2018).
\newblock A semiparametric extension of the stochastic block model for
  longitudinal networks.
\newblock {\em Biometrika}, 105(3):665--680.

\bibitem[McLachlan and Peel, 2004]{McLachlan:04}
McLachlan, G.~J. and Peel, D. (2004).
\newblock {\em Finite mixture models}.
\newblock John Wiley \& Sons.

\bibitem[McNicholas, 2016]{McNicholas2016}
McNicholas, P.~D. (2016).
\newblock {\em Mixture model-based classification}.
\newblock CRC press.

\bibitem[McTiernan, 2008]{mctiernan2008mechanisms}
McTiernan, A. (2008).
\newblock Mechanisms linking physical activity with cancer.
\newblock {\em Nat. Rev. Cancer}, 8(3):205.

\bibitem[Morris et~al., 2006]{morris2006using}
Morris, J.~S., Arroyo, C., Coull, B.~A., Ryan, L.~M., Herrick, R., and
  Gortmaker, S.~L. (2006).
\newblock Using wavelet-based functional mixed models to characterize
  population heterogeneity in accelerometer profiles: a case study.
\newblock {\em J. Am. Stat. Assoc.}, 101(476):1352--1364.

\bibitem[Noel et~al., 2010]{noel2010use}
Noel, S.~E., Mattocks, C., Emmett, P., Riddoch, C.~J., Ness, A.~R., and Newby,
  P. (2010).
\newblock Use of accelerometer data in prediction equations for capturing
  implausible dietary intakes in adolescents.
\newblock {\em Am. J. Clin. Nutr.}, 92(6):1436--1445.

\bibitem[Palta et~al., 2015]{PALTA2015824}
Palta, P., McMurray, R., Gouskova, N., Sotres-Alvarez, D., Davis, S.,
  Carnethon, M., Casta{\~n}eda, S., Gellman, M., Hankinson, A.~L., Isasi,
  C.~R., et~al. (2015).
\newblock Self-reported and accelerometer-measured physical activity by body
  mass index in us hispanic/latino adults: Hchs/sol.
\newblock {\em Preventive medicine reports}, 2:824--828.

\bibitem[Pollak et~al., 2001]{pollak2001accurately}
Pollak, C.~P., Tryon, W.~W., Nagaraja, H., and Dzwonczyk, R. (2001).
\newblock How accurately does wrist actigraphy identify the states of sleep and
  wakefulness?
\newblock {\em Sleep}, 24(8):957--965.

\bibitem[Sadeh et~al., 1994]{sadeh1994activity}
Sadeh, A., Sharkey, M., and Carskadon, M.~A. (1994).
\newblock Activity-based sleep-wake identification: an empirical test of
  methodological issues.
\newblock {\em Sleep}, 17(3):201--207.

\bibitem[Schwarz, 1978]{Schwarz:78}
Schwarz, G. (1978).
\newblock Estimating the dimension of a model.
\newblock {\em Ann. Stat.}, 6(2):461--464.

\bibitem[Scott et~al., 2005]{scott2005hidden}
Scott, S.~L., James, G.~M., and Sugar, C.~A. (2005).
\newblock Hidden markov models for longitudinal comparisons.
\newblock {\em J. Am. Statist. Assoc.}, 100(470):359--369.

\bibitem[Slootmaker et~al., 2009]{slootmaker2009disagreement}
Slootmaker, S.~M., Schuit, A.~J., Chinapaw, M.~J., Seidell, J.~C., and
  Van~Mechelen, W. (2009).
\newblock Disagreement in physical activity assessed by accelerometer and
  self-report in subgroups of age, gender, education and weight status.
\newblock {\em Int. J. Behav. Nutr. Phys. Act.}, 6(1):17.

\bibitem[Taheri et~al., 2004]{taheri2004short}
Taheri, S., Lin, L., Austin, D., Young, T., and Mignot, E. (2004).
\newblock Short sleep duration is associated with reduced leptin, elevated
  ghrelin, and increased body mass index.
\newblock {\em PLoS Med.}, 1(3):e62.

\bibitem[Teicher, 1963]{teicher1963}
Teicher, H. (1963).
\newblock Identifiability of finite mixtures.
\newblock {\em Ann. Math. Stat.}, pages 1265--1269.

\bibitem[Teicher, 1967]{teicher1967}
Teicher, H. (1967).
\newblock Identifiability of mixtures of product measures.
\newblock {\em The Annals of Mathematical Statistics}, 38(4):1300--1302.

\bibitem[Titsias et~al., 2016]{titsias2016statistical}
Titsias, M.~K., Holmes, C.~C., and Yau, C. (2016).
\newblock Statistical inference in hidden markov models using k-segment
  constraints.
\newblock {\em IEEE Trans. Inf. Theory}, 111(513):200--215.

\bibitem[Troiano et~al., 2008]{troiano2008physical}
Troiano, R.~P., Berrigan, D., Dodd, K.~W., Masse, L.~C., Tilert, T., and
  McDowell, M. (2008).
\newblock Physical activity in the united states measured by accelerometer.
\newblock {\em Med. Sci. Sports Exerc.}, 40(1):181--188.

\bibitem[{US Department of Health and Human Services}, 2008]{us20082008}
{US Department of Health and Human Services} (2008).
\newblock 2008 physical activity guidelines for americans: Be active, healthy,
  and happy!
\newblock {\em http://www. health. gov/paguidelines}.

\bibitem[Van~de Pol and Langeheine, 1990]{van1990mixed}
Van~de Pol, F. and Langeheine, R. (1990).
\newblock Mixed markov latent class models.
\newblock {\em Sociological methodology}, pages 213--247.

\bibitem[Van~Hees et~al., 2015]{van2015novel}
Van~Hees, V.~T., Sabia, S., Anderson, K.~N., Denton, S.~J., Oliver, J., Catt,
  M., Abell, J.~G., Kivim{\"a}ki, M., Trenell, M.~I., and Singh-Manoux, A.
  (2015).
\newblock A novel, open access method to assess sleep duration using a
  wrist-worn accelerometer.
\newblock {\em PLoS One}, 10(11):e0142533.

\bibitem[Viterbi, 1967]{viterbi1967error}
Viterbi, A. (1967).
\newblock Error bounds for convolutional codes and an asymptotically optimum
  decoding algorithm.
\newblock {\em J. Am. Statist. Assoc.}, 13(2):260--269.

\bibitem[Wallace et~al., 2018]{wallace2018variable}
Wallace, M.~L., Buysse, D.~J., Germain, A., Hall, M.~H., and Iyengar, S.
  (2018).
\newblock Variable selection for skewed model-based clustering: application to
  the identification of novel sleep phenotypes.
\newblock {\em J. Am. Statist. Assoc.}, 113(521):95--110.

\bibitem[Witowski et~al., 2014]{witowski2014using}
Witowski, V., Foraita, R., Pitsiladis, Y., Pigeot, I., and Wirsik, N. (2014).
\newblock Using hidden markov models to improve quantifying physical activity
  in accelerometer data--a simulation study.
\newblock {\em PLoS One}, 9(12):e114089.

\bibitem[Wong and Li, 2000]{wong2000mixture}
Wong, C.~S. and Li, W.~K. (2000).
\newblock On a mixture autoregressive model.
\newblock {\em J. Royal Stat. Soc. B}, 62(1):95--115.

\bibitem[Wyker et~al., 2013]{wyker2013self}
Wyker, B., Bartley, K., Holder-Hayes, E., Immerwahr, S., Eisenhower, D., and
  Harris, T. (2013).
\newblock Self-reported and accelerometer-measured physical activity: a
  comparison in new york city.
\newblock {\em New York (NY): New York City Department of Health and Mental
  Hygiene: Epi Research Report}, pages 1--12.

\bibitem[Xiao et~al., 2014]{xiao2014quantifying}
Xiao, L., Huang, L., Schrack, J.~A., Ferrucci, L., Zipunnikov, V., and
  Crainiceanu, C.~M. (2014).
\newblock Quantifying the lifetime circadian rhythm of physical activity: a
  covariate-dependent functional approach.
\newblock {\em Biostatistics}, 16(2):352--367.

\bibitem[Yang and Hsu, 2010]{yang2010review}
Yang, C.-C. and Hsu, Y.-L. (2010).
\newblock A review of accelerometry-based wearable motion detectors for
  physical activity monitoring.
\newblock {\em Sensors}, 10(8):7772--7788.

\end{thebibliography}
\end{document}